\documentclass[11pt]{article}
\usepackage[dvipsnames,svgnames,x11names,hyperref]{xcolor}
\usepackage{amsmath}
\usepackage{amsthm}
\usepackage[noend]{algpseudocode}
\usepackage{algorithm}
\usepackage{algorithmicx}
\usepackage[T1]{fontenc}
\usepackage{comment}
\usepackage{fullpage}
\usepackage{caption}
\usepackage{thmtools}
\usepackage{stmaryrd}
\usepackage{makecell}
\usepackage[colorlinks]{hyperref}
\hypersetup{
  linkcolor=[rgb]{0,0,0.4},
  citecolor=[rgb]{0, 0.4, 0},
  urlcolor=[rgb]{0.6, 0, 0}
}
\usepackage{amssymb,amsfonts,amsmath,amsthm,amscd,dsfont,mathrsfs,bbm}
\usepackage{todonotes}
\usepackage{complexity}
\usepackage{tikz}
\usetikzlibrary{decorations.pathreplacing,backgrounds}
\usepackage{multirow}
\usepackage[letterpaper, margin=1in]{geometry}
\usepackage{float}
\usepackage{amsthm}
\usepackage{thmtools,thm-restate}

\numberwithin{equation}{section}
\declaretheoremstyle[bodyfont=\it,qed=\qedsymbol]{noproofstyle}

\declaretheorem[name=Observation,numbered=no]{observation*}

\declaretheorem[numberlike=equation]{theorem}

\declaretheorem[name=Theorem,numbered=no]{theorem*}

\declaretheorem[numberlike=equation]{lemma}
\declaretheorem[name=Lemma,numbered=no]{lemma*}

\declaretheorem[name=Corollary,numbered=no]{corollary*}

\declaretheorem[numberlike=equation]{proposition}
\declaretheorem[name=Proposition,numbered=no]{proposition*}

\declaretheorem[numberlike=equation]{claim}
\declaretheorem[name=Claim,numbered=no]{claim*}

\declaretheorem[name=Conjecture,numbered=no]{conjecture*}

\declaretheorem[name=Question,numbered=no]{question*}

\declaretheoremstyle[bodyfont=\it,qed=$\lozenge$]{defstyle} 

\declaretheorem[numberlike=equation,style=defstyle]{definition}
\declaretheorem[unnumbered,name=Definition,style=defstyle]{definition*}

\declaretheorem[unnumbered,name=Example,style=defstyle]{example*}

\declaretheorem[unnumbered,name=Notation=defstyle]{notation*}

\declaretheorem[unnumbered,name=Construction,style=defstyle]{construction*}

\declaretheorem[unnumbered,name=Remark,style=defstyle]{remark*}


\newcommand{\F}{\mathbb{F}}
\newcommand{\N}{\mathbb{N}}

\newcommand{\Z}{\mathbb{Z}}

\newcommand{\va}{\mathbf{a}}

\newcommand{\ve}{\mathbf{e}}
\newcommand{\vx}{\mathbf{x}}

\newcommand{\coeff}{\operatorname{Coeff}}
\newcommand{\ideal}[1]{(#1)}

\newcommand{\vb}{\mathbf{b}}

\newcommand{\var}[1]{\mathbf #1}

\newcommand{\vece}{\mathbf{e}}
\newcommand{\vecz}{\mathbf{z}}
\newcommand{\vecx}{\mathbf{x}}

\newcommand{\modulo}[1]{({\mathbb Z}/#1{\mathbb Z})}
\newcommand{\nbmodulo}[1]{\mathbb Z/#1{\mathbb Z}}
\newcommand{\der}[2]{#1^{(#2)}}

\newcommand{\hpartial}{\overline{\partial}}

\DeclareMathOperator*{\Ex}{\mathbb{E}}

\newcommand{\Mnote}[1]{\textcolor{Red}{\guillemotleft MK: #1\guillemotright}}

\newcommand{\algwidth}{\linewidth}

\newif\ifshowauthors
\showauthorstrue

\title{Fast Multivariate Multipoint Evaluation Over All Finite Fields}

\ifshowauthors
\author{
Vishwas Bhargava\thanks{Department of Computer Science, Rutgers University, Piscataway, NJ 08854. Research supported in part by the Simons 
Collaboration on Algorithms and Geometry and NSF grant CCF-1909683. Email: \texttt{vishwas1384@gmail.com}}
\and 
Sumanta Ghosh\thanks{Department of Computing and Mathematical Sciences, Caltech. Email: \texttt{besusumanta@gmail.com}}
\and 
Zeyu Guo\thanks{Department of Computer Science, UT Austin. Email: \texttt{zguotcs@gmail.com}}
\and 
Mrinal Kumar\thanks{Department of Computer Science \& Engineering, IIT Bombay. Email: \texttt{mrinal@cse.iitb.ac.in}}
\and
Chris Umans\thanks{Department of Computing and Mathematical Sciences, Caltech. Email: \texttt{umans@cs.caltech.edu}}
}
\else
\author{}
\fi

\date{}

\begin{document}

\maketitle
\pagenumbering{gobble}
\abstract{
Multivariate multipoint evaluation is the problem of evaluating a multivariate polynomial, given as a coefficient vector, simultaneously at multiple evaluation points.  
In this work, we show that there exists a deterministic algorithm for multivariate multipoint evaluation over any finite field $\F$ that outputs the evaluations of an $m$-variate polynomial of degree less than $d$ in each variable at $N$ points in time 
\[
(d^m+N)^{1+o(1)}\cdot\poly(m,d,\log|\F|)
\]
for all $m\in\N$ and all sufficiently large $d\in\N$.

A previous work of Kedlaya and Umans (FOCS 2008, SICOMP 2011)
achieved the same time complexity when the number of variables $m$ is at most $d^{o(1)}$ and had left the problem of removing this condition as an open problem. A recent work of Bhargava, Ghosh, Kumar and Mohapatra (STOC 2022)  
answered this question when the underlying field is not \emph{too} large  and has characteristic less than $d^{o(1)}$. In this work, we remove this constraint on the number of variables over all finite fields, thereby answering the question of Kedlaya and Umans over all finite fields.

Our algorithm relies on a non-trivial combination of ideas from three seemingly different previously known algorithms for multivariate multipoint evaluation, namely the algorithms of Kedlaya and Umans, that of  Bj\"orklund, Kaski and Williams (IPEC 2017, Algorithmica 2019),
and that of Bhargava, Ghosh, Kumar and Mohapatra, together with a result of Bombieri and Vinogradov from analytic number theory about the distribution of primes in an arithmetic progression.

We also present a second algorithm for multivariate multipoint evaluation that is completely elementary and in particular, avoids the use of the Bombieri--Vinogradov Theorem. However, it requires a mild assumption that the field size is bounded by an exponential-tower in $d$ of bounded \textit{height}. More specifically, our second algorithm solves the multivariate multipoint evaluation problem over a finite field $\F$ in time 
\[
(d^m+N)^{1+o(1)}\cdot\poly(m,d,\log |\F|)\]
for all $m\in\N$ and all sufficiently large $d\in\N$, 
provided that the size of the finite field $\F$ is at most $(\exp(\exp(\exp(\cdots (\exp(d)))))$, where the height of this tower of exponentials is fixed.
}
\newpage

\tableofcontents

\newpage
\pagenumbering{arabic}
\section{Introduction}
We study the problem of multivariate multipoint evaluation: given an $m$-variate polynomial $f(\vx) \in \F[\vx]$ of  degree less than $d$ in each variable, and $N$ points $\va_1, \va_2, \ldots, \va_N \in \F^m$, output $f(\va_1), f(\va_2), \ldots, f(\va_N)$. Here $\F$ is the underlying field. The input polynomial $f(\vx)$ is given by its coefficient vector. Therefore, the overall input can be represented by a list of $(d^m+mN)$ elements in $\F$. A 
trivial algorithm for this problem is to evaluate $f(\vx)$ at each $\va_i$ separately. Since evaluating $f(\vx)$ at each $\va_i$ takes $d^m\cdot\poly(d,m)$ operations over $\F$, this algorithm needs $Nd^m\cdot\poly(d,m)$ $\F$-operations in total. For $N=\Theta(d^m)$, the time complexity of this algorithm is quadratic with respect to the input size. Therefore, a natural algorithmic question here is to seek faster algorithms for this problem. Of particular interest would be to have an algorithm for this problem whose time complexity is \emph{nearly linear}, more specifically $(d^m+N)^{1+o(1)}$ (multiplied by lower-order $\poly(d, n, \log |\F|)$ terms), with respect to the input size.  

In addition to its innate appeal as a fundamental and natural question in computational algebra, fast algorithms for multivariate multipoint evaluation are closely related to fast algorithms for other important algebraic problems such as polynomial factorization and modular composition. For a detailed discussion on these connections, we refer to the work of Kedlaya and Umans \cite{Kedlaya11}. In a recent work, Bhargava, Ghosh, Kumar and Mohapatra \cite{BGKM21} used the special structure of their algorithm for multivariate multipoint evaluation to show an upper bound on the rigidity of Vandermonde matrices and very efficient algebraic data structures for the polynomial evaluation problem over finite fields.



For the setting of univariate polynomials, Borodin and Moenck \cite{BM74} showed that  the multipoint evaluation can be solved in nearly linear time. Their algorithm is short, simple and elementary, and proceeds via an application of the Fast Fourier Transform (FFT). However, this approach does not seem to extend when the number of variables exceeds one; in fact, even when the number of variables is two. However, for the multivariate case, when the input points form a product set, one can naturally extend the ideas in Borodin and Moenck \cite{BM74} to  get a nearly linear time algorithm for this problem. But, when the input points are arbitrary, getting a sub-quadratic algorithm for multipoint evaluation seems to be significantly more difficult. In fact, about three decades after Borodin and Moenck's work, N\"usken and Ziegler \cite{NZ04} proved that multipoint evaluation can be solved in most $O(d^{{\omega_2}/2 + 1})$ operations for $m= 2$ and $N = d^2$, where $\omega_2$ is the exponent for multiplying a $d\times d$ and a $d\times d^2$ matrix. The work \cite{NZ04} extends to general $m$ and gives an algorithm for multipoint evaluation that performs  $O(d^{{\omega_2}/2\cdot (m-1)+1})$ field operations. 

Two significant milestones in  this line of work are the results of Umans \cite{Umans08} and Kedlaya and Umans \cite{Kedlaya11}. Umans \cite{Umans08} gave a nearly linear time (that is, $(d^m+N)^{1+o(1)}\cdot \poly(m,d,\log|F|)$-time) \emph{algebraic} algorithm\footnote{Algorithms for multivariate multipoint evaluation can be divided into two categories: (1) \emph{algebraic algorithms}, where we are only allowed to perform arithmetic operations over the underlying field $\F$, and (2) \emph{non-algebraic algorithms}, where we are allowed to do bit operations. The algorithms of Kedlaya and Umans \cite{Kedlaya11} and those in this paper are not algebraic, whereas the algorithm of Borodin and Moenck \cite{BM74}, that of Umans \cite{Umans08}, and that of Bhargava, Ghosh, Kumar and Mohapatra \cite{BGKM21} are all algebraic. } for this problem over finite fields, provided that the characteristic of the field and the number of variables are at most  $d^{o(1)}$. Later, Kedlaya and Umans \cite{Kedlaya11} gave a nearly linear time \emph{non-algebraic} algorithm for \emph{all} finite fields, but they also need $m=d^{o(1)}$. In a recent work, Bhargava, Ghosh, Kumar and Mohapatra  \cite{BGKM21} improve the result of Umans \cite{Umans08} by removing the restriction on $m$ over finite fields whose characteristics are small and sizes are not too large. More specifically, they gave a nearly linear time algebraic algorithm for multivariate multipoint evaluation, provided that the characteristic of the field is $d^{o(1)}$  and the size of the field is at most $(\exp(\exp(\exp(\cdots (\exp(d)))))$, where the height of this tower of exponentials is fixed.  Another closely related result is a recent work of Bj\"orklund, Kaski and Williams \cite{BKW19} who (among other results) gave an algorithm for multivariate multipoint evaluation, but their time complexity depends polynomially on the field size (and not polynomially on the logarithm of the field size), and instead of $d^m$, their time complexity is nearly linear in $D^m$ where $D$ is the total degree of the polynomial. Nevertheless, their results play a crucial role in proving the results of this paper and we will discuss them in more detail in \autoref{sec: BKW overview}.  

Thus, from the context of previous work, a very natural and interesting open question is to design an algorithm, for the problem of multivariate multipoint evaluation that runs in nearly linear time and works for all finite fields and all ranges of the number of variables. Indeed, Kedlaya and Umans \cite{Kedlaya11} mention this as an open problem. 

In this work, we answer this question by giving two different algorithms for multivariate multipoint evaluation over finite fields. While our first algorithm works over all finite fields, the second algorithm still requires that the field size is not too large in terms of $d$. We now state our results and discuss the pros and the cons of the two algorithms and compare them to the algorithms known in prior work. Both our algorithms happen to be non-algebraic, i.e. we need more than just arithmetic operations over the underlying field. 

We now state our results and discuss how they compare with prior work. 

\subsection{Our Results}

We state our main result as follows. 
\begin{theorem}\label{thm: main informal}
There is a deterministic algorithm that given the coefficient vector of an $m$-variate polynomial $f(\vx)$ of degree less than $d$ in each variable over a finite field $\F$ and $N$ points
$\va_1,\va_2,\dots,\va_N\in \F^m$, outputs $f(\va_1),f(\va_2),\dots,f(\va_N)$ in time
\[
 (d^m + N)^{1 + o(1)}\cdot \poly(m, d, \log |\F|),  
\]
for all $m \in \N$ and all sufficiently large $d \in \N$.
\end{theorem}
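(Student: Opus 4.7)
The plan is to combine three previously known strategies in a layered fashion: the Kedlaya--Umans lifting-to-integers framework will strip away the dependence on the characteristic of $\F$, the Bhargava--Ghosh--Kumar--Mohapatra variable-packing idea will remove the $m = d^{o(1)}$ restriction once the characteristic is small, and the Björklund--Kaski--Williams routine will serve as the workhorse on the reduced instances where the ambient field is small but the total degree is only mildly expanded. The Bombieri--Vinogradov theorem is the analytic number theory ingredient needed to produce a family of auxiliary small primes with the arithmetic properties that the variable-packing step demands.

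Concretely, writing $\F = \F_{p^r}$, I would first encode both the coefficients of $f$ and the evaluation points $\va_i$ as integer polynomials in an auxiliary variable $z$, modulo an integer lift $g(z)\in\Z[z]$ of the minimal polynomial defining $\F$. The integer evaluations that we wish to recover have bit complexity $\poly(m,d,\log|\F|)$, so by CRT it suffices to compute these evaluations modulo a collection $q_1,\dots,q_k$ of primes whose product exceeds this bound. The crucial point is that each $q_i$ should be taken of size $d^{o(1)}$ and should additionally satisfy an arithmetic-progression condition (essentially $q_i \equiv 1 \pmod{M}$ for certain $M$ tied to $d$), ensuring that $\F_{q_i^{s_i}}$ contains a multiplicative subgroup of a size suitable for the BGKM-style substitution. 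Producing enough such small primes with the required congruence conditions is where the classical prime number theorem is too weak, and Bombieri--Vinogradov supplies the bound of the form
\[
 \sum_{q\le Q}\max_{a}\bigl|\pi(x;q,a)-\tfrac{\pi(x)}{\varphi(q)}\bigr|\ll \tfrac{x}{(\log x)^A}
\]
that is needed to guarantee enough primes $q_i$ with $q_i \leq d^{o(1)}$ in the desired progressions.

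For each such $q_i$ I would reduce $f$ modulo $q_i$ (as a polynomial in $z$) to obtain $f_i \in \F_{q_i^{s_i}}[\vx]$, and reduce each $\va_j$ similarly. Because the characteristic $q_i$ is now $d^{o(1)}$, the BGKM variable-packing machinery applies: using a subgroup of size comparable to $d$ in $\F_{q_i^{s_i}}^*$, one can substitute new variables for blocks of old ones, collapsing an $m$-variate instance into an $m'$-variate instance with $m' = d^{o(1)}$ at the cost of expanding individual degrees but essentially preserving the total number of monomials $d^m$. On this $m'$-variate instance one can invoke Kedlaya--Umans directly (which now satisfies its $m' = d^{o(1)}$ hypothesis), or, in the regime where the expanded total degree stays controlled, the Björklund--Kaski--Williams routine, which tolerates polynomial dependence on $|\F_{q_i^{s_i}}| = d^{o(1)}$. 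Evaluating $f_i$ at the images of the original points $\va_j$ modulo $q_i$ then takes $(d^m + N)^{1+o(1)}\cdot\poly(m,d,\log|\F|)$ time per prime, and the total over all $k = \poly(m,d,\log|\F|)$ primes fits within the target bound. Finally, CRT across the primes $q_i$, followed by reduction modulo $g(z)$ and $p$, recovers the required evaluations $f(\va_j) \in \F$.

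The principal obstacle is the simultaneous juggling in the middle step: the primes $q_i$ must be small enough that $\poly(q_i)$-type costs in the BGKM/BKW subroutines remain $d^{o(1)}$, numerous enough that their product dominates the CRT bound $p^{\poly(m,d)}$, and structurally compatible with the subgroup chosen for the variable-packing substitution (which is what forces a congruence condition on $q_i$ and hence forces the use of Bombieri--Vinogradov rather than a direct application of the prime number theorem). Showing that one can satisfy all three constraints at once, while still routing the evaluations at the original arbitrary points $\va_j$ correctly through the subgroup-based substitution, is the delicate heart of the argument; once it is in place, the recombination and complexity bookkeeping are essentially routine.
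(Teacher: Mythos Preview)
Your high-level architecture is right: the paper does combine the Kedlaya--Umans integer lift, a careful choice of CRT primes via Bombieri--Vinogradov, and ideas from both BKW and BGKM. But two concrete pieces of your plan do not work as stated.

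First, the primes $q_i$ cannot be of size $d^{o(1)}$. The integer evaluations you need to recover have magnitude up to roughly $d^m r^{dm}$ (with $r$ the size of the base ring after the Kronecker substitution), so $\prod_i q_i$ must exceed this. There are only about $d^{o(1)}$ distinct primes below $d^{o(1)}$, and their product is at most $\exp(d^{o(1)})$, far short of the required bound. Moreover, any prime $q_i\equiv 1\pmod{M}$ with $M$ ``tied to $d$'' already satisfies $q_i>d$, so the size constraint and the congruence constraint are mutually incompatible. In the paper the primes $p_j$ are of size $\poly(d,m,\log r)$, not $d^{o(1)}$; Bombieri--Vinogradov is used (via a Markov argument) to locate some $\tilde d\in[0.8d,d]$ for which there are enough primes $p_j\equiv 1\pmod{\tilde d}$ below $(d\log M)^3$ with product exceeding $M$.

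Second, there is no ``variable-packing'' step in BGKM that collapses $m$ variables to $m'=d^{o(1)}$ variables, and the paper never reduces the number of variables to re-enter the Kedlaya--Umans regime. What actually happens over each $\F_{p_j}$ is this: the congruence $\tilde d\mid p_j-1$ yields a BKW Kakeya set $K\subseteq\F_{p_j}^m$ of size roughly $\tilde d^{\,m}\approx d^m$. One evaluates all Hasse derivatives of $f$ of order at most $2m$ on $K$ (a product-set computation costing $\Theta(1)^m d^m\cdot\poly(p_j)$). For each query point $\va$, the restriction of $f$ to the degree-$u$ Kakeya curve through $\va$ is a univariate of degree below $2m(p_j-1)$, which is recovered by Hermite interpolation from the derivative data on $K$; its leading coefficient is $f(\va)$ (after reducing to the homogeneous case). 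The resulting $\poly(p_j)$ overhead is harmless precisely because $p_j=\poly(d,m,\log r)$, not because $p_j$ is $d^{o(1)}$. The BGKM contribution is the Hasse-derivative/Hermite-interpolation mechanism that lets the univariate degree exceed $p_j$, not any reduction in the number of variables.
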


\begin{remark*}
Throughout this paper, when we say $d$ is \emph{sufficiently large}, it means $d=\omega(1)$. 
\end{remark*}

The proof of the above theorem crucially relies on a deep result from analytic number theory, known as the {Bombieri--Vinogradov Theorem} \cite{bombieri65, vinogradov1965density}, related to the distribution of primes in arithmetic progressions. 

We also give a different algorithm for multivariate multipoint evaluation that avoids the Bombieri--Vinogradov Theorem and is completely elementary, but it requires the size of the finite field to be not too large: at most $(\exp(\exp(\exp(\cdots (\exp(d)))))$, where the height of this tower of exponentials is fixed. In other words, it removes the restriction on the characteristic of the field in the work of Bhargava \emph{et.al.} \cite{BGKM21}, but via a non-algebraic algorithm. 

We remark that neither of our algorithms is algebraic, and in particular, we crucially rely on working with the bit representation of the inputs. To obtain an algebraic algorithm for multivariate multipoint evaluation, for large $m$, and over all finite fields is a fundamental algebraic problem that continues to remain open. 

Below, in \autoref{tab:comparison}, we compare the results in this paper with the previously known results. 

\begin{table}[H]
\centering
\caption{Comparison with Prior Results}
\begin{tabular}{|c|c|c|c|c|}
 \hline
 \multicolumn{5}{|c|}{Multivariate Multipoint Evaluation over a Finite Field $\F_{q}$ of Characteristic $p$} \\
 \hline
 Results & Time & Algorithm Type & Field Constraint & Variable\\
 \hline
 \cite{Umans08}   &  \makecell{\small{$(d^m+N)^{1 + o(1)}\cdot $}\\ \small{$\poly(m, d,\log q)$}}   &algebraic& $p\leq d^{o(1)}$ & $m \leq d^{o(1)}$\\\hline
\cite{Kedlaya11}&  \makecell{\small{$(d^m+N)^{1 + o(1)}\cdot $}\\ \small{$\poly(m, d,\log q)$}}
& non-algebraic   & all finite fields & $ m\leq d^{o(1)}$\\\hline
\cite{BGKM21}    & \makecell{\small{$(d^m+N)^{1 + o(1)}\cdot $}\\ \small{$\poly(m, d,\log q)$}}
& algebraic &
 \makecell{$p\leq d^{o(1)}$,\\ 
 \small{$q \leq(\exp(\exp(\cdots(\exp (d))))$},\\ 
 where the height of this tower\\
  of exponentials is fixed} &  any $m$ \\
\hline
\makecell{This work\\ 
(Algorithm~\ref{algo:polynomial-evaluation-v1-ER})}  
&   \makecell{\small{$(d^m+N)^{1 + o(1)}\cdot $}\\ \small{$\poly(m, d,\log q)$}}
& non-algebraic & all finite fields & any $m$\\
\hline
\makecell{This work\\
(Algorithm~\ref{algo:polynomial-evaluation-v2-ext})} 
& \makecell{\small{$(d^m+N)^{1 + o(1)}\cdot $}\\ \small{$\poly(m, d,\log q)$}}
& non-algebraic & \makecell{\small{$q \leq(\exp(\exp(\cdots(\exp (d))))$},\\ 
where the height of this tower\\
 of exponentials is fixed} & any $m$\\
\hline
\end{tabular}
\label{tab:comparison}
\end{table}

\section{An Overview of the Proofs}\label{sec: proof overview}
In this section, we give an overview of the main ideas in our algorithms. At a high level, our algorithms rely on ideas from  three of the recent prior works on multivariate multipoint evaluation, namely that of Kedlaya and Umans \cite{Kedlaya11}, that of Bj\"{o}rklund, Kaski and Williams \cite{BKW19}, and a recent work of Bhargava, Ghosh, Kumar and Mohapatra \cite{BGKM21}. We start by giving a brief outline of these. 

We start with some necessary notation. Let $\F$ be a finite field and let $f \in \F[\vx]$ be an $m$-variate polynomial of degree less than $d$ in each variable, and let $\{\va_i: i \in [N]\}$ be a set of $N$ inputs in $\F^m$. Our goal is to evaluate $f$ on each $\va_i$.  For simplicity, we focus on the case when the underlying field $\F$ is a prime field, i.e. $\F = \F_p$ for some prime $p$. The case of extension fields is handled in a very similar manner, with a few technicalities.

A starting observation is that multivariate multipoint evaluation has a nearly linear time algorithm (over all fields) when the set of evaluation points forms a product set (see \autoref{lem:FFT-over-ring}), and more generally when the set of evaluation points is \emph{close} to a product set. At a high level, each of the algorithms in  \cite{Kedlaya11, BKW19, BGKM21} proceeds via a very efficient reduction from multivariate multipoint evaluation over an arbitrary set of points to multivariate multipoint evaluation over product sets. However, despite this common high level structure, the details of the reductions involved are fairly different in each of the three algorithms, thereby giving these algorithms their features, both desirable and undesirable. We now elaborate a bit more on these reductions. 

\subsection{The Algorithm of Kedlaya and Umans}\label{sec: KU overview}
To solve the problem efficiently over a finite field $\F$, Kedlaya and Umans \cite{Kedlaya11} first reduce an instance of the multivariate multipoint evaluation problem over $\F$ to an instance of the same problem over a ring of the form $\nbmodulo{r}$. Then they use their efficient algorithm for multivariate multipoint evaluation problem over $\nbmodulo{r}$ to solve it. Finally, from the evaluations over $\nbmodulo{r}$, they recover the original evaluations over $\F$. So, we first describe their algorithm over ring $\nbmodulo{r}$.

\paragraph*{The algorithm over $\nbmodulo{r}$: } In their algorithm, Kedlaya and Umans \cite[\defaultS4.2]{Kedlaya11} start by \emph{lifting} their problem instance over $\nbmodulo{r}$ to an instance over integers. They do this by just viewing $\nbmodulo{r}$ as the set of integers $\{0, 1, \ldots, r-1\}$ and this naturally maps a polynomial $f(\vx)$ over $\nbmodulo{r}$ to a polynomial $F(\vx)$ with coefficients in $\Z$. Similarly, this also gives a natural map from an input point $\va \in \modulo{r}^m$ to a point $\tilde{\va} \in \Z^m$. Clearly, for every $\va \in \modulo{r}^m$ and polynomial $f$, $f(\va) = F(\tilde{\va}) \mod r$. Thus, it suffices to solve this lifted instance over integers. Yet another property of this lifted instance is that the integer $F(\tilde{\va})$ is a non-negative integer of magnitude less than $M = d^m(r-1)^{dm}$ since each coefficient of $F$ and  each coordinate of $\tilde{\va}$ are in $\{0, 1, \ldots, r-1\}$,  and the total degree of $F$ is less than or equal to $(d-1)m$. Thus, to compute $F(\tilde{\va})$, it suffices to compute $F(\tilde{\va}) \mod M$. Kedlaya and Umans now proceed by finding distinct small primes $p_1, p_2, \ldots, p_k$ such that $\prod_{i \in [k]} p_i > M$, evaluating the polynomial $f_j(\vx) = F \mod p_j$ at the point $\vb_j = \tilde{\va} \mod p_j $ and \footnote{In other words, $f_j$ is obtained from $F$ by reducing each of its coefficients modulo $p_j$ and $\vb_j$ is obtained by reducing each of the coordinates of $\tilde{\va}$ modulo $p_j$.}  then combining the values $f_1(\vb_1), f_2(\vb_2), \ldots, f_{k}(\vb_k)$ using the Chinese Remainder Theorem. The correctness follows from the observation that for every $j \in [k]$, $f_j(\vb_j) = F(\tilde{\va}) \mod p_j$. The advantage of this \emph{multimodular} reduction is that if the primes $p_j$ are very small (for instance, if all these primes are close to $d$), then the set of evaluation points of interest, that were initially \emph{scattered} sparsely in $\F_p^n$ are now mapped to points that are packed densely in the space $\F_{p_j}^m$, which is a product set. Thus, we can use the simple multidimensional FFT to evaluate $f_j$ on all of $\F_{p_j}^m$ for every $j$, and then combine the outcome using the Chinese Remainder Theorem. For $m < d^{o(1)}$, this indeed gives a nearly linear time algorithm for multivariate multipoint evaluation. This constraint on the number of variables $m$ is due to a term of the form $(dm)^m$ in the final running time of the algorithm which is nearly linear in the input size only if $m$ is small. This $(dm)^m$ essentially appears because the product of primes $p_1, p_2, \ldots, p_k$ chosen in this reduction must exceed $M$, and hence, the largest of these primes $p_k$ must be $\Omega(\log M) = \Omega(dm \log r)$, and thus evaluating a polynomial $f_k$ on all of $\F_{p_k}^m$ requires at least $p_k^m = \Omega(d^mm^m)$ time. Recursive application of this process leads to smaller primes but the improved dependence is on the $\log r$ factor and this $(dm)^m$ factor continues to persist in the eventual bound on the running time. Thus, one approach towards a faster algorithm for multipoint evaluation over $\nbmodulo{r}$ would be to replace this step of evaluating $f_j$ on all of $\F_{p_j}^m$ in \cite{Kedlaya11} with a faster subroutine, in particular, something that runs in nearly linear time in the input size even for large $m$. 

Our first algorithm in this paper does precisely this. In order to obtain this gain, it crucially relies on ideas in an algorithm of  Bj\"{o}rklund, Kaski and Williams \cite{BKW19} which we discuss in \autoref{sec: BKW overview} and a very careful choice of primes to do Chinese Remaindering with, in the multimodular reduction discussed above. Together, these steps lead to an improvement in running time and give us an algorithm that runs in nearly linear time even when the number of variables is large. 

For our second algorithm, we introduce a slightly different modification in the framework of Kedlaya and Umans. Instead of working modulo small primes as in \cite{Kedlaya11}, which as discussed above, forces us to pick primes as large as $dm$, we work modulo powers of distinct primes in the multimodular reduction step. Thus, it seems conceivable that we can now work with much smaller primes than in the original algorithm, since instead of having the condition that the product of these primes is larger than $M$ as in \cite{Kedlaya11}, we now need that the product of powers of these primes is larger than $M$. However, we still need efficient algorithms for  multivariate multipoint evaluation over rings of the form $\nbmodulo{p^k}$ for small primes $p$ and large $k \in \N$. To handle this subproblem,  we extend the derivative-based techniques used in the algorithm of Bhargava et al. \cite{BGKM21} for fields of small characteristic so that they work over rings of the form $\nbmodulo{p^k}$ for small primes $p$ and large $k \in \N$. 

The advantage of this strategy over our first algorithm is that this gives us a completely elementary algorithm, and the disadvantage is that for this algorithm to run in nearly linear time, as desirable, the underlying ring $\nbmodulo{r}$ needs to be somewhat small. This issue also affects the original algorithm of Bhargava et al. \cite{BGKM21} and seems somewhat inherent to this style of an argument. 

\paragraph*{The algorithm over all finite fields:} 
We now give an outline of the algorithm of Kedlaya and Umans for extensions of prime fields.

Let $\F$ be the underlying finite field such that $|\F|=p^e$ for some prime $p$ and positive integer $e$. Then, we can assume that $\F$ is represented by $\F_p[z]/\ideal{E(z)}$ for some degree $e$ irreducible monic polynomial $E(z)$ over $\F_p$. Let $f(\vx)$ be the input polynomial over $\F$ with $m$ variables and degree less than $d$ in each variable and $\va_1,\va_2,\ldots,\va_N$ be the input points. Observe that each coefficient of $f(\vx)$ and each coordinate of $\va_i$'s are polynomial in $\F_p[z]$ of degree at most $e-1$. Like the previous case, Kedlaya and Umans \cite{Kedlaya11} lift $f(\vx)$ to a polynomial $F(\vx)\in\Z[z][\vx]$ and $\va_i$ to $\tilde\va_i\in \Z[z]^m$ by naturally identifying each element of $\F$ to a polynomial in $\Z[z]$ of degree at most $e-1$ and coefficients are in the set of integers $\{0,1,\ldots, p-1\}$. This reduces the problem of computing $f(\va_i)$ for all $i\in[N]$ to the problem of computing $F(\tilde\va_i)$ for all $i\in[N]$ since from $f(\va_i)$ is $F(\tilde\va_i)$ modulo $p$ and $E(z)$. 

Let $M=d^m(e(p-1))^{(d-1)m+1}+1$. One can observe that the coefficients of $F(\tilde\va_i)$, viewed as a polynomial in $z$ are all less than $M$. It follows from the fact that the evaluation of $F(\tilde \va_i)$ at $z=1$ is at most $M-1$. Therefore, we can recover $F(\tilde \va_i)$ by finding the $M$-base representation of the evaluation of $F(\tilde \va_i)$ at $z=M$. Also, note that the degree of $F(\tilde \va_i)$ in $z$ is at most $(e-1)dm$, hence the evaluation of $F(\tilde \va_i)$ at $z=M$ is less than $r=M^{(e-1)dm+1}$. Thus, computing $F(\tilde\va_i)$ modulo $z-M$ and $r$ is sufficient for computing $F(\tilde\va_i)$. This implies that they need to solve the following instance of the multivariate multipoint evaluation problem over the ring $\nbmodulo{r}$: the input polynomial is  $F(\vx)$ modulo $r$ and $z-M$, the evaluation points are $\tilde\va_i$ modulo $r$ and $z-M$. Now they invoke their multivariate multipoint evaluation over the ring $\nbmodulo{r}$ and get $F(\tilde \va_i)$ at $z=M$ for all $i\in[N]$.

\subsection{The Algorithm of Bj\"{o}rklund, Kaski and Williams}\label{sec: BKW overview}
In a nutshell, the algorithm of Bj\"{o}rklund et al. \cite{BKW19} proceeds via constructing a set $K \subseteq \F_p^m$ such that 
\begin{itemize}
    \item The size of $K$ is not too large and $K$ is (\emph{close} to) a product set. 
    \item For every $\va \in \F_p^m$, there is a curve $C_{\va}$ of low degree (in fact, a low degree univariate polynomial map) that passes through the point $\va$ and intersects the set $K$ on at least $p$ points~\footnote{This notion of a curve passing through a point here is slightly different to that in other related works like \cite{BGKM21}. However, for the sake of simplicity, we gloss over this technical detail right now.}.
    
\end{itemize}
These sets $K$ can be thought of as a natural higher degree analog of Kakeya sets over finite fields from discrete geometry. Indeed, Bj\"{o}rklund et al. refer to the set $K$ as high degree Kakeya sets, where the degree of the set is defined to be the maximum over the degrees of the curves $C_{\va}$ over all $\va \in \F_p^m$. 

Given such a Kakeya set $K$, Bj\"{o}rklund et al.proceed by evaluating $f$ on all points in $K$ fast, using the multidimensional FFT algorithm. This is the preprocessing phase of the algorithm. Then, for an arbitrary point $\va \in \F_p^m$, they compute $f(\va)$ by considering the univariate polynomial $R(y)$ obtained by taking the restriction $f$ on the curve $C_{\va}$. From the properties of the set $K$, we know the curve $C_{\va}$ intersects the set $K$ on at least $p$ points. Thus, if the degree of $R \leq \deg(f) \cdot \deg(C_{\va})$ is less than $p$, then we can recover the polynomial $R$ from the evaluations of $f$ on $K$ computed in the preprocessing step and using univariate polynomial interpolation. The quantitative bounds for this approach are therefore crucially determined by the size of the set $K$ and the degree of the curve $C_{\va}$.   

Bj\"{o}rklund et al. showed that for every $u \in \N$ such that $u+1$ divides $p-1$, there is a Kakeya set $K$ of degree $u$ of size at most $((p-1)/(u+1) + 1)^{m+1}$. This divisibility condition ensures the existence of a multiplicative subgroup of $\F_p^{*}$ of size $(p-1)/(u+1)$ and set $K$ is based on this subgroup. Thus, if $\tilde{d}$ denotes $(p-1)/(u+1)$, then we can evaluate the polynomial $f$ on the $K$ in time $\tilde{d}^m$, which is nearly linear in the input size if $\tilde{d} \leq d^{1 + o(1)}$. However, note that in this case, $u$ is around $p/\tilde{d}$, and hence, the degree of the restriction $R$ of $f$ on a curve of degree $u$ has total degree $udm = pm \cdot \frac{d}{\tilde{d}}$. Thus, if $pm \cdot \frac{d}{\tilde{d}} > p$,  we cannot hope to recover $R$ from its evaluations on just $p$ points. To address this issue, we  combine the above strategy in \cite{BKW19} with an idea in  \cite{BGKM21} where instead of evaluating just $f$ on $K$, we evaluate all its (Hasse) derivatives of order at most $m\cdot \frac{d}{\tilde{d}}$ on $K$ in the preprocessing phase. There are at most $\binom{m + m\cdot \frac{d}{\tilde{d}}}{m}$ such derivatives and this leads to an additional multiplicative factor of $\binom{m + m\cdot \frac{d}{\tilde{d}}}{m}$ in the final running time, but if $\tilde{d}$ is not too small compared to $d$, for instance, $\tilde{d} = \Theta(d)$, this binomial coefficient is at most $\exp(O(m))$ which is $d^{o(m)}$ for all growing $d$. Thus, with this stronger guarantee in the preprocessing step,  we are guaranteed to have higher multiplicity information available to us in the local computation step. So, we can now hope to uniquely recover a univariate polynomial of degree higher than $p$ from this information (via Hermite interpolation). However, since the degree of the univariates we have here is larger than $p$,  this Hermite interpolation step runs in time polynomially bounded in the underlying field size $p$ and not just polynomially bounded in $\log p$ as would have been desirable. 

To summarise, if there exists an $u \in \N$ such that $(p-1)/(u+1) = \tilde{d}$, where $\tilde{d}$ is close to $d$, e.g. $\tilde{d} = \Theta(d)$, then we have an algorithm for evaluating $m$-variate polynomials of degree less than $d$ in each variable on any $N$ points in $\F_p^m$ in time $\poly(p, d, m) \cdot (d^m + N)^{1 + o(1)}$. Thus, this is nearly linear time, when the field size $p$ is not too large. 

Having discussed these prior results, we are now ready to give an outline of our algorithms. We start with the first algorithm.
\subsection{The First Algorithm} \label{sec: algorithm 1 overview}
As discussed earlier in this section,  the plan for our algorithm is to somehow replace the multidimensional FFT step in the algorithm of Kedlaya and Umans \cite{Kedlaya11} (over rings of the form $\nbmodulo{r}$) with the Kakeya-set-based algorithm above over a field $\F_{p_j}$. However, in order to effectively use the Kakeya-set-based algorithm outlined in the previous section to obtain nearly linear time algorithms for multipoint evaluation, we need to ensure two properties. 
\begin{itemize}
    \item The underlying field size $p_j$ is small. For instance, we would need $p_j = (d^m + N)^{o(1)}$ for a nearly linear time algorithm.  
    \item There is a natural number $u$ such that $u+1$ divides $p_j-1$ and $(p_j-1)/(u+1)$ is an integer close to $d$. 
\end{itemize}
In fact, instead of the second condition here, it suffices if there is a small $t \in \N$ such that there exists a $u \in \N$ such that $u+1$ divides $p_j^t-1$ and  $(p_j^t-1)/(r+1) = d^{1 + o(1)}$, since we can always view the problem over $\F_{p}$ as a problem over an extension of $\F_p$. However, we need the degree of the extension to be small in order to get useful final quantitative bounds. 

The first condition about the primes $p_j$ being small does not appear too difficult to ensure in isolation and in particular, is also true for the algorithm of Kedlaya and Umans. However, the second divisibility condition seems trickier to guarantee even with the flexibility of working over low degree extensions of $\F_{p_j}$ as outlined earlier in this section. In particular, it is not clear to us if for every pair $d, p_j$, there always exists small $t$ such that $p_j^u-1$ has a divisor in the vicinity of $d$. 


Getting around these technical difficulties is the main technical content of our algorithm. 
In a nutshell, we proceed by following the multimodular reduction step of Kedlaya and Umans, but via a careful choice of primes $p_1, p_2, \ldots, p_k$ (as opposed to picking  a sufficiently large number of small primes as in \cite{Kedlaya11}). This careful choice preserves the fact that these primes are all small (at most $\poly(d, m, \log p)$) and additionally guarantees that the divisibility condition needed to invoke the Kakeya-set-based framework of \cite{BKW19}. More formally, we choose $p_1, p_2, \ldots, p_k$ so that they are all at most $\poly(d, m, \log p)$, their product exceeds $M = d^m(p-1)^{dm}$ and there exists a $\tilde{d} \in [0.8d, d]$ such that for every $j \in [k]$, $\tilde{d}$ divides $p_j - 1$. Thus, we can use the Kakeya-set-based framework outlined in \autoref{sec: BKW overview}, with the parameter $u_j$ to be set equal to $(p_j - 1)/\tilde{d} - 1$. This satisfies both the conditions highlighted earlier, and the final running time of this algorithm does indeed turn out to be nearly linear in the input size. The details can be found in  \autoref{sec: algo 1 prime fields}. Once we have this algorithm for multipoint evaluation over the rings of the form $\nbmodulo{r}$, we use exactly the same strategy as Kedlaya and Umans did to solve this problem over all finite fields. For details see \autoref{sec: algo 1 extension fields}.

Thus, if we can find distinct primes $p_1, p_2, \ldots, p_k$ with the properties outlined above, we would be done. However, it is not immediately clear how to do find such a set of numbers efficiently, or  whether such a collection of primes and the parameter $\tilde{d}$ should even exist. The appearance of the parameter $\tilde{d} = \Theta(d)$ is also slightly mysterious. For instance, it would be aesthetically nice if $\tilde{d}$ would have been  equal to $d$. Perhaps surprisingly, we do not know how to even show the existence of primes $p_1, p_2, \ldots, p_k$ satisfying the desired properties with $\tilde{d} = d$! We now outline our approach to finding such primes and the parameter $\tilde{d}$. However, for a start, let us attempt to do this with $\tilde{d} = d$ and try to understand the issues that arise. 

The intuition on showing the existence of such primes follows from the observation that if $d$ divides $p_j - 1$ for each $j \in [k]$ then, each of the primes $p_1, p_2, \ldots, p_k$ lies in the arithmetic progression (AP) $A_{d} = (1, 1 + {d}, 1 + 2{d}, \ldots)$. It follows from a classical theorem of Dirichlet (see Chapter $5$ in \cite{kedlaya15} for more details) that this arithmetic progression $A_{d}$ indeed contains an infinite number of primes for every ${d} \in \N$. Thus, if we take $k$ to be sufficiently large, then there exist primes $p_1, p_2, \ldots, p_k$ each congruent to $1$ modulo ${d}$ such that their product is greater than $M = d^m(p-1)^{dm}$. However, it is not enough for our application. We also need to show that these primes are not too large, e.g. each $p_i \leq \poly(d, m, \log p)$, and that they can be found efficiently. For this, it would be sufficient to show that not only does the arithmetic progression $A_{{d}}$ contains an infinite  number of primes, but the set of primes in $A_d$ is also a sufficiently dense subset of $A_d$. The prime number theorem gives such a statement for the progression $A_1$, i.e. for the set of natural numbers and here, a similar statement for arbitrary arithmetic progressions is needed. An unconditional bound on the density of primes in an arithmetic progression $A_d$ is given by the well-known Siegel-Walfisz theorem \cite{siegel1935classenzahl, walfisz1936additiven} which implies  a lower bound on the number of primes less than $x$ in the AP $A_d$ for all $x \geq 0$ with  $x > 2^{d^{\epsilon}}$ for any constant $\epsilon$. However, this estimate does not appear to be sufficient for us, since for the algorithm, we need the magnitude of these primes to be at most $\poly(d, m, \log p)$ and not exponentially growing in $d$, and it is not clear if such a guarantee can be obtained directly from this theorem. An improved lower bound on the density of primes in arithmetic progressions is known under the Generalized Riemann Hypothesis, and this would have been sufficient for our applications, except for the fact that the result would be conditional. For the unconditional result in this paper, we rely on the following theorem of Bombieri and Vinogradov, which gives an improved lower bound on the density of primes in an AP on average. For $x > 0, t \in \N$, let $\pi(x, t)$ be the number of primes less than $x$ in the AP starting at $1$ and with common difference $t$, $\pi(x)$ denote the number of primes less than $x$, and $\phi: \N \to \N$ be the Euler Totient function.  Various versions of this theorem can be found in literature, for instance, \cite{bombieri65, vinogradov1965density},  Theorem 18.1 in \cite{kedlaya15}. Here we rely on the bound in equation 1.1. in \cite{maynard}.
\begin{theorem}[Bombieri--Vinogradov]\label{them: BV}
For any fixed $a > 0$, there exist constants $c = c(a)$ and $b = b(a)$ such that for all sufficiently large $x > 0$,
\[
\sum_{t \leq d} \left |{\pi(x, t)} - \frac{\pi(x)}{\phi(t)} \right | \leq cx(\log x)^{-a} \, ,
\]
where $d \leq x^{1/2}(\log x)^{-b}$. 
\end{theorem}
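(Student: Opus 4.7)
My plan is to follow the classical analytic-number-theory proof of Bombieri--Vinogradov, which combines orthogonality of Dirichlet characters, a combinatorial identity for the von Mangoldt function, and the large sieve inequality for Dirichlet characters. First I would reduce from $\pi(x,t)$ to the weighted counting function $\psi(x;t,a) = \sum_{n \le x,\, n \equiv a \pmod{t}} \Lambda(n)$, where $\Lambda$ is the von Mangoldt function; the contribution of higher prime powers is negligible even summed over all moduli, and partial summation converts bounds on $\psi(x;t,a) - x/\phi(t)$ into bounds on $\pi(x;t,a) - \pi(x)/\phi(t)$. Expanding the characteristic function of the progression via orthogonality of Dirichlet characters modulo $t$ gives
\[
\psi(x;t,a) - \frac{x}{\phi(t)} \;=\; \frac{1}{\phi(t)} \sum_{\chi \ne \chi_0} \overline{\chi}(a)\, \psi(x,\chi) \;+\; O(\log^2 x),
\]
where $\psi(x,\chi) = \sum_{n\le x} \chi(n)\Lambda(n)$. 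After reducing each character to its underlying primitive character, the task becomes bounding
\[
\sum_{t \le Q} \frac{1}{\phi(t)} \sum_{\chi \bmod t}^{*} \max_{y \le x} \lvert \psi(y,\chi)\rvert
\]
for $Q = x^{1/2}(\log x)^{-B}$, where the star restricts to primitive characters.

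Second, to control $\psi(y,\chi)$ I would apply Vaughan's identity (or equivalently Heath--Brown's identity) to write $\Lambda$ as a sum of a bounded number of bilinear expressions. These fall into two classes: \emph{Type I} sums, which have one short, smooth variable and are handled by classical character-sum estimates such as Polya--Vinogradov, and \emph{Type II} sums, which are genuine bilinear forms $\sum_m \sum_n \alpha_m \beta_n \chi(mn)$ with both ranges of comparable length. For the Type II contribution, Cauchy--Schwarz followed by two applications of the multiplicative large sieve inequality
\[
\sum_{t \le Q} \frac{t}{\phi(t)} \sum_{\chi \bmod t}^{*} \Bigl\lvert \sum_{n \le N} c_n \chi(n) \Bigr\rvert^{2} \;\le\; (N + Q^{2}) \sum_n \lvert c_n\rvert^{2}
\]
produces, on average over $t \le Q$, a saving of an arbitrary power of $\log x$. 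The maxima over $a$ and $y$ are extracted by standard devices (a dyadic decomposition in $y$ together with a further application of orthogonality in $a$, or a Perron-type contour argument).

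The principal obstacle is the Type II estimate: the large sieve delivers a bound only on the average over moduli $t \le Q$, which is precisely why Bombieri--Vinogradov is intrinsically an on-average theorem rather than a pointwise one. A comparable bound for a single modulus of size near $x^{1/2}$ would be equivalent in strength to the Generalized Riemann Hypothesis for the associated Dirichlet $L$-functions, so one cannot hope to remove the averaging without a major advance. The remainder of the argument is careful bookkeeping: choosing the cut-off parameters inside Vaughan's identity, treating imprimitive characters with only a logarithmic loss via $\sum_{t \le Q}\frac{1}{\phi(t)} \le O(\log Q)$, and summing the Type I and Type II contributions to reach the claimed bound $cx(\log x)^{-a}$ for any prescribed $a > 0$.
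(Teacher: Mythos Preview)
Your outline is the standard analytic-number-theory proof of Bombieri--Vinogradov and is correct in its broad strokes. However, the paper does not prove this theorem at all: it is stated as a black-box result imported from the literature (with citations to Bombieri, Vinogradov, Kedlaya's notes, and Maynard), and is then applied to derive \autoref{lem: BV worst case} and \autoref{lem: product of primes in an AP}. So there is nothing in the paper to compare your argument against; your sketch is precisely the kind of proof one finds in the cited sources, but the authors deliberately treat the theorem as an external input rather than reproving it.
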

Thus, if $x$ is sufficiently large compared to $d$, e.g. $x = d^3$, this theorem can be viewed as saying that on average (over $t \in \N, t \leq d$), an AP with common difference $t$ contains at least $\frac{\pi(x)}{\phi(t)} - cxd^{-1}(\log x)^{-a}$ primes less than $x$. Clearly, $\phi(t) \leq t \leq d$ and $\pi(x) = \Theta(x/\log x)$ by the prime number theorem. Thus, if we take $a > 1$, the number of primes less than $x$ is at least $\Omega(\pi(x)/d)$. For our final argument, we combine this average-case statement about the density of primes in an AP with a standard application of Markov's inequality to deduce that there exists a $\tilde{d} \in [0.8d, d]$ such that the AP with common difference $\tilde{d}$ has at least $\Omega(\pi(x)/\tilde{d})$ many primes less than $x$. By choosing $x$ to be a sufficiently large polynomial in $d, m, \log p$, we get precisely what we want: sufficiently many primes $p_1, p_2, \ldots, p_k$, each at most $\poly(d, m, \log p)$ in absolute value such that their product exceeds $M$ and they are all congruent to $1$ modulo $\tilde{d}$, for $\tilde{d} = \Theta(d)$. This application of Markov's inequality is precisely why we have to settle for working with the quantity $\tilde{d}$ and not $d$ itself. 

\subsection{The Second Algorithm}

In this section, we give a brief overview of our second algorithm. 
It implies that \autoref{thm: main informal} holds as long as the size of the finite field is bounded by $(\exp(\exp(\exp(\cdots (\exp(d)))))$, where the height of this tower of exponentials is fixed via an elementary algorithm. In particular, this algorithm does not rely on the Bombieri-Vinogradov theorem necessary for the first algorithm.

For simplicity, we first explain our algorithm over rings of the form $\nbmodulo{r}$, or $\nbmodulo{r^s}$ for some $s\leq m$. This covers the case of prime finite fields $\F_p$ by choosing $r=p$ and $s=1$.
After that, we briefly explain how to extend the algorithm to make it work over non-prime finite fields and certain extension rings of $\nbmodulo{r}$.

\paragraph*{The algorithm over $\nbmodulo{r}$:} 
Recall that Kedlaya and Umans \cite[\defaultS4.2]{Kedlaya11} use multimodular reduction together with the Chinese Remainder Theorem to reduce the multivariate multipoint evaluation problem over $\nbmodulo{r}$ to that  over $\F_{p_j}$ for a collection of small primes $p_j$.
As discussed in \autoref{sec: KU overview}, for the Chinese Remainder Theorem, the primes $p_j$ need to be chosen such that $\prod_{i\in [k]} p_i>M:=d^m (r-1)^{dm}$.
The problem here is that, as the primes $p_j$ are distinct, the largest prime would have order $O(\log M)=O(dm\log r)$. The $\log r$ factor can be further reduced by repeating the multimodular reduction. However, the $dm$ factor persists. As a consequence, the time complexity of the Kedlaya--Umans algorithm has a factor $(dm)^m$, which is nearly linear in $d^m$ only when $m=d^{o(1)}$.

In our algorithm, we introduce the new idea of using the \emph{prime powers} $p_j^m$ as the moduli for Chinese remaindering instead of the primes $p_j$.
That is, we compute the evaluations over the rings $\nbmodulo{p_j^m}$ and then combine them via Chinese Remainder Theorem to obtain the evaluations over the integers.
Assuming this can be done, then we only need to choose the primes $p_j$ such that $\prod_{i\in [k]} p_i^m>M$. So the largest prime may have order $O(\frac{1}{m}\log M)=O(d\log r)$, which is independent of $m$.

Now, to make this idea work, we need a fast algorithm for multivariate multipoint evaluation over $\nbmodulo{p_j^m}$, for small primes $p_j$. In particular, if we have an algorithm over $\nbmodulo{p_j^m}$ that runs in time $(p_j^m + N)^{1 + o(1)}$, then, overall, we have an algorithm that runs in time $(d^m(\log r)^m + N)^{1 + o(1)}$. Note that this has already enabled us to get rid of the $m^m$ factor in the running time as in \cite{Kedlaya11}. So, up to the factor of $(\log r)^m$  in the running time, we seem to have made some progress and we soon elaborate further on how to reduce this $(\log r)^m$ factor further. 
But first, we note that naively evaluating the polynomial at all points in $\modulo{p_j^m}^m$ would be extremely inefficient, as the size of $\modulo{p_j^m}^m$ is exponential in $m^2$. So, we need a significantly faster algorithm for multivariate multipoint evaluation over $\nbmodulo{p_j^m}$ to have any hope of making this strategy work. 

In their algorithm, Kedlaya and Umans \cite{Kedlaya11} deal with the $(\log r)^m$ factor by recursively applying the multimodular reduction a few times. So, to reduce the $(\log r)^m$ in the discussion above, we could also try to do something similar. We already see that one application of the reduction reduces the modulus $r$ to $p_j^m$ for a collection of primes $p_j$, where $\prod_{i\in [k]} p_i>d(r-1)^{d}$. Fix a prime $p_j$ and suppose we want to apply the multimodular reduction again. We may lift the instance over $\nbmodulo{p_j^m}$ to an instance over the integers, and then reduce it modulo $p'^m_i$ for a collection of primes $p'_i$.
The problem here is that, if we simply lift the evaluation points from $\modulo{p_j^m}^m$ to $\{0,1,\dots,p_j^m-1\}^m$, we would have an upper bound $M'=d^m(p_j^m-1)^{dm}$ for the evaluations over the integers, which is too large for us. The primes $p'_i$ would have to satisfy $\prod_i p'_i>M'^{1/m}=d(p_j^m-1)^{d}$, and then the order of the largest prime must depend (at least polynomially) on $m$.

We address the above two challenges, namely that of obtaining a fast multipoint evaluation algorithm over $\nbmodulo{p_j^m}$ that does not require evaluating on all of $\nbmodulo{p_j^m}^m$ and that of reducing the factor $(\log r)^m$ using the following observation: over $\nbmodulo{r^s}$, the evaluation of an $m$-variate polynomial $f(\vx)$ at a point $\var a\in \modulo{r^s}^m$ can be derived from the evaluations of the Hasse derivatives of $f(\vx)$ of sufficiently high order at another point $\var b\in \modulo{r^s}^m$, provided that the coordinates of $\var a-\var b$ are all multiples of $r$. Intuitively, this means if $\var a$ and $\var b$ are ``close enough'', then we can learn the evaluation of $f(\vx)$ at $\var a$ from the evaluations at $\var b$ of all the Hasse derivatives of $f$ of sufficiently high order. 

 Formally, for all $\ve\in\N^m$, let $\hpartial_{\ve}(f)\in\modulo{r^s}[\vx]$ be the Hasse derivative of $f(\vx)$ with respect to the monomial $\vx^\ve$.
For $\var a,\var b\in \modulo{r^s}^m$, we get from Taylor's expansion of $f(\vx)$ at $\var b$ that 
\[
f(\va)=\sum_{\ve\in\N^m}\hpartial_{\ve}(f)(\var b)(\va-\var b)^\ve. 
\]
Suppose the coordinates of $\var a-\var b$ are all multiples of $r$.
In this case, observe that $(\va-\var b)^\ve=0$ in $\nbmodulo{r^s}$ for all $\ve\in\N^m$ with $|\ve|_1\geq s$. Hence, 
\begin{equation}
\label{eqn:overview-2}
f(\va)=\sum_{\ve\in\N^m: |\ve|_1<s}\hpartial_{\ve}(f)(\var b)(\va-\var b)^\ve. 
\end{equation} 
So we may compute $f(\va)$ from the evaluations of Hasse derivatives $(\hpartial_{\ve}(f)(\var b))_{\ve\in\N^m:|\ve|_1<s}$.

We apply this idea to resolve the above two issues. First, in a base case of the recursive algorithm, instead of evaluating $f(\vx)$ at all points in $\modulo{p_j^m}^m$, we evaluate the Hasse derivatives $\hpartial_{\ve}(f)$ at the points in $S^m$ using a fast evaluation algorithm for product sets, where $S$ is the subset of $\nbmodulo{p_j^m}$ represented by $\{0,1,\dots,p_j-1\}$.
Note that for any $\var a\in \modulo{p_j^m}^m$, we may find $\var b\in S^m$ such that the coordinates of $\var a-\var b$ are multiples of $p_j$. Then $f(\va)$ can be computed from $\hpartial_{\ve}(f)(\var b)$ using \eqref{eqn:overview-2}.
The advantage of this is that the size of $S^m$ is only $p_j^m$, which is much smaller than the size $p_j^{m^2}$ of the whole set $\modulo{p_j^m}^m$.

Similarly, when applying the multimodular reduction over a ring $\nbmodulo{p_j^m}$, the idea above allows us to use a small yet non-exact lift of each evaluation point $\va_i$. Namely, suppose $\tilde{\va}_i\in\Z^m$ is the unique lift of $\va_i\in \modulo{p_j^m}^m$ with coordinates in $\{0,1,\dots,p_j^m-1\}$. We compute $\tilde{\va}'_i\in \{0,1,\dots,p_j-1\}^m$ whose coordinates are obtained by reducing the corresponding coordinates of $\tilde{\va_i}$ modulo $p_j$.
Then $\tilde{\va}'_i$ is a lift of some $\va'_i\in \modulo{p_j^m}^m$ such that the coordinates of $\va_i-\va'_i$ are all multiples of $p_j$.
We compute the evaluation $\hpartial_{\ve}(f)(\va'_i)$ at the point $\va'_i$ (instead of $\va$), and then $f(\va_i)$ can be computed from $\hpartial_{\ve}(f)(\va'_i)$ using \eqref{eqn:overview-2}. The advantage of evaluating at $\va'_i$ instead of $\va_i$ is that the coordinates of its lift $\tilde{\va}'_i$ are bounded by $p_j-1$ instead of $p_j^m-1$.
This translates into a better bound for the primes that we choose in multimodular reduction, thereby resolving the second issue.

Finally, at each level of the recursive algorithm, we need to evaluate not only $f(\vx)$, but also the Hasse derivatives $\hpartial_{\ve}(f)$ of order less than $m$. 
In addition, we need to solve the subproblem for each prime $p_j$.
This means the number of subproblems blows up by a factor of $2^{O(m)}\cdot O(d\log r)$ each time. However, as we assume the original $r$ (= the field size when $r$ is prime) is reasonably bounded in terms of $d$, it takes only a constant number of rounds to reduce $r$ to $d^{1+o(1)}$. So the total blow-up is reasonably controlled, and we obtain a nearly linear time algorithm when $d$ is sufficiently large. 
For details, see \autoref{sec:algo-2-rings}.

\paragraph{Comparison with the first algorithm: }
Compared to our first algorithm, which uses the ideas of generalized Kakeya sets and the Bombieri--Vinogradov theorem, our second algorithm uses a different idea, namely the Chinese Remainder Theorem with prime powers as the moduli. At a high level, this may be seen as an analogue of the ``method of multiplicities'' applied to the ring $\Z$ and polynomial rings over $\Z$. To see this, note that for a univariate polynomial $f(x)$ over a field, knowing the evaluations of all (Hasse) derivatives $f^{(i)}(x)$ of order $<s$ at a point $a$ is equivalent to knowing the remainder of $f$ modulo the power $\ideal{x-a}^s$. So from an ideal-theoretic point of view, the idea of applying the Chinese Remainder Theorem to learn an integer from its remainders modulo prime powers is analogous to applying Hermite interpolation to learn a univariate polynomial from the evaluations of its Hasse derivatives, the latter playing a crucial role in \cite{BGKM21}.

\paragraph*{The algorithm over finite fields (and extension rings of $\nbmodulo{r}$):} 
With further ideas, we extend our algorithm so that it works over arbitrary finite fields.
In fact, our algorithm works more generally over a ring $\modulo{r}[z]/(E(z))$, where $r\geq 2$ is an integer and $E(z)\in \modulo{r}[z]$ is a monic irreducible polynomial of degree $e\geq 1$.

Kedlaya and Umans \cite[\defaultS4.3]{Kedlaya11} described a reduction that reduces the problem of multivariate multipoint evaluation over $\modulo{r}[z]/(E(z))$ to that over $\nbmodulo{r'}$ for some integer $r'$.
Unfortunately, the integer $r'$ there is too large for us, being exponential in $m^2$. This is not a bottleneck in \cite{Kedlaya11}, as their algorithm over $\nbmodulo{r}$ already has a factor $m^m$ in its time complexity. However, it is a problem for us, so we cannot directly use the reduction in \cite{Kedlaya11}.

To achieve our claimed time complexity, we design a more efficient reduction, which reduces the evaluation problem to that over $\nbmodulo{r'^m}$, where $r'$ is independent of $m$. The basic idea is lifting the problem instance to an instance over $\Z[z]$, and then reducing it modulo $r'^m$ and $(z-j)^m$ for a small number of integers $j$. Here the idea of raising $z-j$ and $r'$ to their $m$-th powers helps us keep $r'$ small, and in particular, independent of $m$.
See \autoref{sec:algo-2-extn} for the details of the algorithm and a more thorough overview.

\section{Preliminaries}
Define $\N=\{0,1,\dots\}$, $\N^+=\{1,2,\dots\}$, $[n]=\{1,2,\dots,n\}$, and $\llbracket n \rrbracket=\{0,1,\dots,n-1\}$. The cardinality of a set $S$ is denoted by $|S|$. 

All rings in this paper are commutative rings with unity. 
For univariate polynomials $f(x), g(x)$ over a ring $R$ such that $g(x)$ is monic of positive degree, there exist unique $h(x),r(x)\in R[x]$ such that $f(x)=g(x)h(x)+r(x)$ and $\deg(r)<\deg(g)$ \cite[\defaultS IV.1, Theorem~1.1]{L02}. Define $f(x)\bmod g(x):=r(x)$, which can be computed using polynomially many $R$-operations via long division. 

By $\var x$ and $\var z$, we denote the variable tuples $(x_1,\ldots,x_m)$ and $(z_1,\ldots, z_m)$, respectively. For any $\var e=(e_1,\ldots, e_m)\in \N^m$, $\var x^\var e$ denotes the monomial $\prod_{i=1}^mx_i^{e_i}$. By $|\var e|_1$, we denote the sum $e_1+\cdots+e_m$. 

For every positive integer $k$, $k!$ denotes $\prod_{i=1}^ki$. For $k=0$, $k!$ is defined as $1$. For two non-negative integers $i$ and $k$ with $k\geq i$, $\binom{k}{i}$ denotes $\frac{k!}{i!(k-i)!}$. For $k<i$, $\binom{k}{i}=0$. For $\var a=(a_1,\ldots,a_m), \var b=(b_1,\ldots,b_m)\in\N^m$, $\binom{\var a}{\var b}=\prod_{i=1}^m\binom{a_i}{b_i}$.

\begin{proposition}
\label{prop:binomial-estimation}
For any two positive integers $i$ and $k$ with $k\geq i$, $$\binom{k}{i}\leq \left(\frac{ke}{i}\right)^i.$$
\end{proposition}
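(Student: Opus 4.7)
The plan is to bound $\binom{k}{i}$ using the standard two-step argument: first upper-bound the binomial coefficient by $k^i/i!$, then lower-bound $i!$ by $(i/e)^i$ using the exponential series.

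For the first step, I would write
\[
\binom{k}{i} = \frac{k(k-1)(k-2)\cdots(k-i+1)}{i!} \le \frac{k^i}{i!},
\]
since every one of the $i$ factors in the numerator is at most $k$. This is an entirely elementary manipulation and requires no additional hypotheses beyond $k \ge i \ge 1$.

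For the second step, I would obtain the inequality $i! \ge (i/e)^i$ by comparing $i!$ to a single term in the Maclaurin expansion of $e^i$. Specifically, since
\[
e^i = \sum_{j=0}^{\infty} \frac{i^j}{j!} \ge \frac{i^i}{i!},
\]
one can rearrange to get $i! \ge i^i / e^i = (i/e)^i$. Substituting this into the bound from the first step yields
\[
\binom{k}{i} \le \frac{k^i}{(i/e)^i} = \left(\frac{ke}{i}\right)^i,
\]
which is the claimed estimate.

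There is no serious obstacle here; the proof is essentially a two-line computation. The only ``nontrivial'' point worth flagging is the use of $e^i \ge i^i/i!$, but this is immediate from the nonnegativity of every term in the power series for $e^i$.
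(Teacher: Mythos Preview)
Your proof is correct and is exactly the standard argument; the paper itself does not supply a proof but simply refers the reader to \cite[Chapter~1]{Jukna}, where this same two-step bound (numerator $\le k^i$, then $i!\ge (i/e)^i$ via the exponential series) is given.
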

For a proof, see \cite[Chapter 1]{Jukna}. All logarithms in this paper are with respect to base $2$. For a non-negative integer $c$, $\log^{\circ c}(n)$ denotes the $c$-times composition of the logarithm function with itself. For example, $\log^{\circ 2}(n)=\log\log(n)$. We denote by $\log^{\star}(n)$ the smallest non-negative integer $c$ such that $\log^{\circ c}(n)\leq 1$. 

We need the following number-theoretic result.
\begin{lemma}[{\cite[Lemma~2.4]{Kedlaya11}}]
\label{lem:primes}
For all integers $N\geq 2$, the product of the primes $p\leq 16\log N$ is greater than $N$.
\end{lemma}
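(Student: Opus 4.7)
The plan is to reduce the claim to a Chebyshev-type lower bound on the first Chebyshev function $\theta(x) := \sum_{p \leq x} \ln p$ (where the sum is over primes and $\ln$ is the natural logarithm). Setting $x = 16 \log N$ (where $\log$ is base $2$, per the paper's convention), the desired inequality $\prod_{p \leq x} p > N$ is equivalent, after taking natural logarithms of both sides, to
\[
\theta(16 \log N) \;>\; (\ln 2) \cdot \log N \;=\; \frac{\ln 2}{16} \cdot x.
\]
So the entire task reduces to proving that $\theta(x) \geq \alpha x$ for some absolute constant $\alpha > \ln 2 / 16 \approx 0.0433$, valid for all sufficiently large $x$, together with a finite check for small values of $N$.

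For the main estimate, I would invoke the classical Chebyshev bound (which predates and is weaker than the prime number theorem, but is completely elementary). Specifically, one can show by the standard argument based on the prime factorization of the central binomial coefficient $\binom{2n}{n}$ that for all $n \geq 1$,
\[
\prod_{p \leq 2n} p \;\geq\; \binom{2n}{n} \;\geq\; \frac{4^n}{2n+1},
\]
from which $\theta(2n) \geq 2n \ln 2 - \ln(2n+1)$, and hence $\theta(x) \geq (\ln 2) x - O(\log x)$ for all $x \geq 2$. Since $\ln 2 \approx 0.693$ is well above $\ln 2 / 16$, for all $x$ larger than some absolute threshold $x_0$ we get $\theta(x) \geq (\ln 2 / 16) \cdot x + 1$, which is strictly more than what is needed.

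It remains to handle the finitely many $N$ for which $16 \log N < x_0$; these can be checked by direct computation. For instance, even at $N = 2$ the product of primes $\leq 16$ is $2 \cdot 3 \cdot 5 \cdot 7 \cdot 11 \cdot 13 = 30030$, which already exceeds any $N$ up to roughly $2^{14}$, giving a lot of slack. Between these small cases and the asymptotic regime from Chebyshev, the bound holds uniformly for all $N \geq 2$.

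The only mildly delicate point is getting an explicit threshold $x_0$ beyond which the Chebyshev bound $\theta(x) \geq (\ln 2) x - O(\log x)$ already beats $(\ln 2 / 16) x$; this is quantitatively routine given the large constant factor $16$ in the statement, which provides enormous slack relative to the true asymptotic $\theta(x) \sim x$. So the main obstacle is really just bookkeeping the explicit constants so that the Chebyshev lower bound and the finite base-case verification meet up cleanly, rather than any genuine number-theoretic difficulty.
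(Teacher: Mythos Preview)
The paper does not supply its own proof of this lemma; it is quoted from \cite[Lemma~2.4]{Kedlaya11} and used as a black box. Your argument via a Chebyshev-type lower bound is the standard route and is sound in outline, with one technical slip worth flagging: the displayed inequality $\prod_{p\le 2n} p \ge \binom{2n}{n}$ is the statement $\theta(2n)\ge \ln\binom{2n}{n}$, but the ``standard argument based on the prime factorization of $\binom{2n}{n}$'' actually yields $\psi(2n)\ge \ln\binom{2n}{n}$, not the $\theta$ version. (The $p$-adic valuation of $\binom{2n}{n}$ is at most $\lfloor\log_p 2n\rfloor$, so $\binom{2n}{n}$ divides $\mathrm{lcm}(1,\dots,2n)=e^{\psi(2n)}$, not $\prod_{p\le 2n}p=e^{\theta(2n)}$.) To pass to $\theta$ you need the additional elementary estimate $\psi(x)-\theta(x)=O(\sqrt{x}\log x)$. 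This does not affect your conclusion: you only need $\theta(x)>(\ln 2/16)\,x$, and combining $\psi(x)\ge (\ln 2)x - O(\log x)$ with $\psi(x)-\theta(x)=O(\sqrt{x}\log x)$ gives $\theta(x)\ge (\ln 2)x - O(\sqrt{x}\log x)$, which is far more than enough given the factor-$16$ slack. The finite check for small $N$ then closes the argument as you describe.
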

\subsection{Chinese Remainder Theorem}
For our algorithms, we crucially use the Chinese Remainder Theorem. For completeness, we formally state the version we use and refer to Chapter 10 of \cite{GG2013} for a proof. 
\begin{theorem}[Chinese Remainder Theorem]\label{thm:CRT}
Let $n_1, n_2, \ldots, n_{t}$ be pairwise relatively prime natural numbers greater than or equal to $2$ and let $u_1, u_2, \ldots, u_{t}$ be arbitrary natural numbers such that for every $i \in [t]$, $u_i \leq n_i - 1$. Then, there is a unique $v \in \N$ with $v < \prod_{i = 1}^{t}n_i$ such that for every $i \in [t]$, $v \equiv u_i \pmod{n_i}$. 

Moreover, there is a deterministic algorithm, that when given $n_1, n_2, \ldots, n_t$ and $u_1, u_2, \ldots, u_t$ as input, outputs $v$ in time at most $\poly(\sum_{i \in [t]} \log n_i)$, i.e., in time polynomial in the input size. 
\end{theorem}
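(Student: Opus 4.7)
The plan is to prove the two halves of the statement—existence/uniqueness, and the runtime bound—separately, following the classical constructive argument. Let $N=\prod_{i=1}^t n_i$ and $N_i=N/n_i$ for each $i\in[t]$. Since the $n_i$ are pairwise coprime, $\gcd(N_i,n_i)=1$, so the extended Euclidean algorithm yields integers $M_i,Q_i$ with $M_i N_i+Q_i n_i=1$; reducing gives $M_i N_i\equiv 1\pmod{n_i}$. I would then set
\[
v \;:=\; \Bigl(\sum_{i=1}^t u_i\,M_i\,N_i\Bigr)\bmod N.
\]
For each $j$, every term with $i\neq j$ vanishes modulo $n_j$ (because $n_j\mid N_i$), while the $i=j$ term equals $u_j(M_j N_j)\equiv u_j\pmod{n_j}$. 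Hence $v\equiv u_j\pmod{n_j}$ for all $j$, and $0\le v<N$ by construction, giving existence. Uniqueness is immediate: if $v,v'$ both satisfy the congruences and lie in $[0,N)$, then $n_i\mid v-v'$ for each $i$, so by pairwise coprimality $N\mid v-v'$, forcing $v=v'$.

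\textbf{Algorithmic implementation.} Let $L:=\sum_{i=1}^t \log n_i$, so $\log N=O(L)$ and the input size is $\Theta(L)$. I would first compute $N$ and then each $N_i=N/n_i$; doing this naively is fine, but to get a clean polynomial bound one can use a product tree: build a balanced binary tree whose leaves are the $n_i$ and whose internal nodes store products, then obtain each $N_i$ by dividing $N$ by $n_i$ (or by descending the tree). Every multiplication and division involves integers of bit-length $O(L)$, and there are $O(t)$ of them, so this stage costs $\poly(L)$ bit operations using any schoolbook (or faster) big-integer arithmetic. Next, for each $i$, I compute $M_i\bmod n_i$ by running the extended Euclidean algorithm on $(N_i\bmod n_i,\,n_i)$; the extended Euclidean algorithm on inputs of bit-length $O(L)$ runs in time $\poly(L)$, and we do this $t\le L$ times. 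Finally I assemble $v=\bigl(\sum_i u_i M_i N_i\bigr)\bmod N$ with $O(t)$ big-integer multiplications and additions followed by one reduction modulo $N$, again costing $\poly(L)$.

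\textbf{Main obstacle.} The mathematical content is routine; the only thing that needs care is the bookkeeping of bit-complexity, since the intermediate values $M_i N_i$ can be as large as $N$, and one must avoid an artificial blow-up by, for instance, reducing $N_i\bmod n_i$ before invoking the extended Euclidean algorithm and keeping $M_i$ reduced modulo $n_i$. Once this is observed, every arithmetic operation involved is on integers of bit-length $O(L)$ and the total operation count is polynomial in $t$ and $L$, so the overall running time is $\poly(L)=\poly\bigl(\sum_{i\in[t]}\log n_i\bigr)$, as claimed. A pointer to \cite[Chapter 10]{GG2013} (and to any fast arithmetic reference, should one wish to optimize the exponent) completes the argument.
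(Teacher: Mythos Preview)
Your proposal is correct and gives the standard constructive CRT argument. The paper itself does not prove this statement at all: it simply states the theorem and defers to Chapter~10 of \cite{GG2013} for a proof, so your write-up is in fact more detailed than what the paper provides, and it matches the classical treatment one finds in that reference.
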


\subsection{Hasse Derivatives}
In this section, we briefly discuss the notion of Hasse derivatives that plays a crucial role in our results. 
\begin{definition}[Hasse derivative]
\label{def:hasse-derivative}
Let $f(\var x)$ be an $m$-variate polynomial over a commutative ring $R$. Let $\var e=(e_1,\ldots, e_m)\in\N^m$. Then,  the Hasse derivative of $f$ with respect to the monomial $\var x^{\var e}$ is the coefficient of $\vecz^{\vece}$ in the polynomial $f(\vecx + \vecz) \in (R[\vecx])[\vecz]$. 
\end{definition}
\paragraph*{Notations.} Suppose that $f(\var x)$ is an $m$-variate polynomial over a commutative ring $R$. For $\va \in \N^m$, denote by $\hpartial_{\va}(f)$ the Hasse derivative of $f(\var x)$ with respect to the monomial $\var x^{\va}$. For any non-negative integer $k$,  define 
\[\hpartial^{\leq k}(f) := \left\{ \hpartial_{\var a}(f)\,\mid\,\var a\in\N^m \text{ s.t. }|\var a|_1\leq k \right\}
\]
and 
\[
\hpartial^{< k}(f):=\{\hpartial_{\var a}(f)\,\mid\,\var a\in\N^m \text{ s.t. } |\var a|_1< k\}.
\]

For a univariate polynomial $h(t)$ over $\F$ and a non-negative integer $k$, denote by $\der{h}{k}(t)$ the Hasse derivative of $h(t)$ with respect to the monomial $t^k$, that is, $\coeff_{z^k}(h(t+z))$.

Next, we mention a useful property of Hasse derivatives.  
\begin{proposition}
\label{lem:hasse-derivative-property}
Let $f(\var x)$ be an $m$-variate polynomial over a commutative ring $R$. Let $\var a,\var e\in\N^m$. Then, $$\hpartial_{\var e}(f)=\sum_{\var a\in\N^m}\binom{\var a}{\var e}\coeff_{\var x^{\var a}}(f)\var x^{\var a-\var e}.$$
\end{proposition}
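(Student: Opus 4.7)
The plan is to unfold the definition directly. By \autoref{def:hasse-derivative}, $\hpartial_{\ve}(f)$ is the coefficient of $\vz^{\ve}$ in $f(\vx+\vz) \in (R[\vx])[\vz]$, so the task reduces to expanding $f(\vx+\vz)$ as a polynomial in $\vz$ with coefficients in $R[\vx]$ and reading off the $\vz^{\ve}$-coefficient. Since the coefficient extraction map $\coeff_{\vz^{\ve}}(\cdot)$ is $R[\vx]$-linear, I would first handle a single monomial $f = \vx^{\va}$ and then assemble the general case by linearity.

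For a single monomial, I would write $(\vx+\vz)^{\va} = \prod_{i=1}^m (x_i+z_i)^{a_i}$ and apply the ordinary binomial theorem to each univariate factor, which is valid because $R$ is commutative (so $x_i$ and $z_i$ commute in $R[\vx,\vz]$). Multiplying the $m$ resulting expansions and collecting like monomials in $\vz$, the coefficient of $\vz^{\ve}$ comes out to $\prod_i \binom{a_i}{e_i} x_i^{a_i - e_i} = \binom{\va}{\ve}\,\vx^{\va-\ve}$, with the standing convention $\binom{a_i}{e_i}=0$ for $e_i > a_i$. That vanishing convention is precisely what makes the sum in the statement well defined over all of $\N^m$ without imposing $\va \geq \ve$ coordinatewise.

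Finally, expanding $f(\vx) = \sum_{\va\in\N^m} \coeff_{\vx^{\va}}(f)\, \vx^{\va}$ and substituting gives $f(\vx+\vz) = \sum_{\va}\coeff_{\vx^{\va}}(f)\,(\vx+\vz)^{\va}$. Extracting the coefficient of $\vz^{\ve}$ term by term, using $R$-linearity of $\coeff_{\vz^{\ve}}$ together with the single-monomial computation above, yields
\[
\hpartial_{\ve}(f) \;=\; \sum_{\va\in\N^m} \binom{\va}{\ve}\,\coeff_{\vx^{\va}}(f)\,\vx^{\va-\ve},
\]
which is the claimed identity. There is essentially no obstacle in this argument; it is pure symbol manipulation, and the only item that requires any care is keeping track of the multi-index binomial $\binom{\va}{\ve}=\prod_i \binom{a_i}{e_i}$ and the vanishing convention so that the final sum over $\va \in \N^m$ is written cleanly.
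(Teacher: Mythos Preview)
Your argument is correct: reducing to a single monomial via $R[\vx]$-linearity, applying the binomial theorem to each factor $(x_i+z_i)^{a_i}$, and then reading off the coefficient of $\vz^{\ve}$ is exactly the standard derivation of this identity. The paper itself does not supply a proof of this proposition and instead defers to \cite[Appendix~C]{F14}, so there is no in-paper argument to compare against; your writeup is precisely the elementary computation one expects.
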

For a proof, see, e.g., \cite[Appendix C]{F14}. 


The following lemma states that Hasse derivatives of polynomials can be computed efficiently. 

\begin{lemma}
\label{lem:computing-hasse-derivation}
Let $R$ be either a finite field or a ring of the form $\nbmodulo{r}$.
There exists an algorithm that given an $m$-variate polynomial $f(\vx)$ of individual degree less than $d$ over $R$ and $\ve\in\N^m$ with $|\ve|_1\leq dm$, computes $\hpartial_{\var e}(f)$ in time $O(d^m)\cdot\poly(m,d,\log |R|)$.
\end{lemma}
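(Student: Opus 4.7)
The plan is to apply Proposition~\ref{lem:hasse-derivative-property} directly. That proposition gives the explicit formula
\[
\hpartial_{\ve}(f) \;=\; \sum_{\va \in \N^m} \binom{\va}{\ve}\,\coeff_{\vx^{\va}}(f)\,\vx^{\va - \ve},
\]
and since $f$ has individual degree less than $d$, only $\va$ with $a_i \in \{0,1,\dots,d-1\}$ for every $i$ yield nonzero contributions. Moreover, whenever $e_i > a_i$ the factor $\binom{a_i}{e_i}$ vanishes (in particular, if any $e_i \geq d$ then $\hpartial_{\ve}(f) = 0$), so the effective range of summation is $\va$ with $e_i \le a_i \le d-1$ for all $i$, which contains at most $d^m$ tuples. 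The algorithm simply iterates over these tuples, computes the term $\binom{\va}{\ve}\cdot \coeff_{\vx^{\va}}(f)$ in $R$, and writes it in as the coefficient of $\vx^{\va-\ve}$ in the output.

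Next I would bound the cost of assembling a single term by $\poly(m,d,\log|R|)$ bit operations. For each coordinate $i$, the binomial $\binom{a_i}{e_i}$ is a nonnegative integer bounded by $\binom{d-1}{e_i} < 2^d$, so it can be computed from its standard factorial definition in $\poly(d)$ bit operations and then reduced modulo the characteristic (when $R = \nbmodulo{r}$) or coerced into $R$ (when $R$ is a finite field) in $\poly(d,\log|R|)$ bit operations. Taking the product of these $m$ factors inside $R$ costs a further $\poly(m,\log|R|)$ operations, and multiplying the result by $\coeff_{\vx^{\va}}(f)\in R$ is another $\poly(\log|R|)$-time step. The target monomial $\vx^{\va-\ve}$ is well-defined because $a_i \geq e_i$ for every $i$ in the effective range.

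Summing the per-term cost over the at most $d^m$ valid $\va$ yields the claimed total running time of $O(d^m)\cdot\poly(m,d,\log|R|)$. The output polynomial $\hpartial_{\ve}(f)$ itself has individual degree less than $d$, and so fits in $O(d^m \log |R|)$ bits, consistent with this bound. There is no real obstacle to overcome; the only point requiring mild care is to compute the integer binomials modulo $|R|$ (or inside the finite field's prime subfield) promptly, rather than letting them grow as unreduced integers, so that every arithmetic step stays within $\poly(d,\log|R|)$ bit operations.
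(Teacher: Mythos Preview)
Your proposal is correct and follows essentially the same approach as the paper: both invoke Proposition~\ref{lem:hasse-derivative-property} to write $\hpartial_{\ve}(f)$ as a sum over $\va\in\llbracket d\rrbracket^m$ of $\binom{\va}{\ve}\coeff_{\vx^{\va}}(f)\,\vx^{\va-\ve}$, observe that the derivative vanishes if any $e_i\ge d$, and then bound the per-term cost by $\poly(m,d,\log|R|)$. Your version is simply more explicit about how the binomial coefficients are computed and reduced into $R$.
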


\begin{proof}
Let $S=\llbracket d\rrbracket^m$. For all $\va\in S$, let $c_\va$ denote the coefficient of monomial $\vx^\va$ in $f$. Then, from \autoref{lem:hasse-derivative-property}, we know that $$\hpartial_{\ve}(f)=\sum_{\va\in S}\binom{\va}{\ve}c_\va\vx^{\va-\ve}.$$
Without loss of generality, we can assume that each coordinate of $\ve$ is less than $d$. Otherwise, $\hpartial_{\var e}(f)$ is a zero polynomial. For any $\va\in S$, $\binom{\va}{\ve}$ can be computed in time $\poly(m,d,\log |R|)$. 
Thus, the time needed to compute $\hpartial_{\ve}(f)$ is $O(d^m)\cdot\poly(m,d,\log |R|)$. 
\end{proof}

\begin{remark*} For simplicity, we assume in \autoref{lem:computing-hasse-derivation} that $R$ is either a finite field or a finite ring of the form $\nbmodulo{r}$, as this will be sufficient for us.
The same assumption is made in \autoref{lem:hermitian-interpolation} and \autoref{lem:FFT-over-ring}, even though the lemmas and their proofs extend to general rings.
\end{remark*}


A useful additional ingredient in the proof of \autoref{thm: main informal} is the following lemma. Semantically, this is an explicit form of the chain rule of Hasse derivatives for the restriction of a multivariate polynomial to a curve of low degree.

\begin{lemma}
\label{lem:hasse-derivative}
Let $f(\var x)$ be an $m$-variate degree $d$ polynomial over a field $\F$, $\var g(t)=(g_1,\ldots, g_m)$ where $g_i\in\F[t]$, and $h(t)=f(\var g(t))$. For all $i\in[m]$, let $g_i(t+Z)=g_i(t)+Z\tilde g_i(t, Z)$ for some $\tilde{g_i} \in \F[t, Z]$. Let $\tilde{\var g}(t,Z)=(\tilde g_1,\ldots, \tilde g_m)$, and for all $\var e=(e_1,\ldots,e_m)\in\N^m$, $\tilde{\var g}_{\var e}=\prod_{i=1}^m\tilde g_i^{e_i}$. For any $\ell\in \N$, let $$h_{\ell}(t,Z)=\sum_{i=0}^\ell Z^i\sum_{\var e\in\N^m:|\var e|_1=i}\hpartial_{\var e}(f)(\var g(t))\cdot\tilde{\var g}_{\var e}(t,Z).$$ Then, for every $k\in\N$ with $k\leq \ell$, $\der{h}{k}(t)=\coeff_{Z^k}(h_\ell)$.
\end{lemma}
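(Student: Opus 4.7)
The plan is to expand $h(t+Z)=f(\var g(t+Z))$ using the multivariate Taylor expansion for Hasse derivatives and then observe that truncating the outer sum at $\ell$ does not affect the coefficient of $Z^k$ whenever $k\leq \ell$.

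First, I would recall the basic identity following directly from \autoref{def:hasse-derivative}: for any $m$-variate polynomial $f$ and formal variables $\var x,\var y$,
\[
f(\var x+\var y)=\sum_{\var e\in\N^m}\hpartial_{\var e}(f)(\var x)\,\var y^{\var e}.
\]
Apply this with $\var x = \var g(t)$ and $\var y = \var g(t+Z) - \var g(t)$. By the definition of $\tilde g_i$, we have $g_i(t+Z)-g_i(t)=Z\tilde g_i(t,Z)$ for each $i$, so $\var y^{\var e}=Z^{|\var e|_1}\,\tilde{\var g}_{\var e}(t,Z)$. Substituting and regrouping by $i=|\var e|_1$ gives
\[
h(t+Z)=f(\var g(t+Z))=\sum_{i=0}^{dm}Z^i\sum_{\var e\in\N^m:|\var e|_1=i}\hpartial_{\var e}(f)(\var g(t))\cdot\tilde{\var g}_{\var e}(t,Z),
\]
where the upper limit $dm$ comes from the fact that $\hpartial_{\var e}(f)=0$ whenever $|\var e|_1>d$ (actually $|\var e|_1>dm$ is safely sufficient since $\deg f\le d$ in each variable interpretation would otherwise also work—only finitely many terms are nonzero).

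Next, I would compare this expansion with the definition of $h_\ell(t,Z)$. The two expressions differ only in that $h_\ell$ truncates the outer sum at $i=\ell$. The key observation is that the summand indexed by $i$ is of the form $Z^i\cdot (\text{polynomial in }t,Z)$, so it contributes only to coefficients of $Z^j$ with $j\geq i$. Consequently, for any $k\leq \ell$, the terms with $i>k$ contribute $0$ to the coefficient of $Z^k$ in both expansions, and so
\[
\coeff_{Z^k}(h(t+Z))=\coeff_{Z^k}\!\left(\sum_{i=0}^{k}Z^i\!\!\sum_{\var e:|\var e|_1=i}\!\!\hpartial_{\var e}(f)(\var g(t))\tilde{\var g}_{\var e}(t,Z)\right)=\coeff_{Z^k}(h_\ell(t,Z)).
\]
Since by definition $\der{h}{k}(t)=\coeff_{Z^k}(h(t+Z))$, the lemma follows.

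The proof is essentially a careful bookkeeping argument; the only non-routine piece is recognizing that the ``chain rule'' content is already packaged in the substitution $\var y=Z\tilde{\var g}(t,Z)$ into the standard Taylor identity, and that the truncation at $\ell$ is harmless precisely up to degree $\ell$ in $Z$. I do not expect any real obstacle; the main thing to be careful about is handling the indexing and the rearrangement by $|\var e|_1$ correctly, and to justify (from $\deg f\le d$) that the full expansion is a finite sum so there are no convergence subtleties.
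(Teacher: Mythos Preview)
Your proof is correct and follows the natural approach: expand $h(t+Z)=f(\var g(t)+Z\tilde{\var g}(t,Z))$ via the Hasse--Taylor identity, group by $|\var e|_1$, and observe that the truncation at $\ell$ is harmless for the coefficient of $Z^k$ when $k\le\ell$. The paper in fact states this lemma without proof, so there is nothing to compare against; your argument is exactly the expected one. One cosmetic point: since the hypothesis is that $f$ has (total) degree $d$, the Hasse derivatives $\hpartial_{\var e}(f)$ vanish already for $|\var e|_1>d$, so the upper limit in your full expansion could be $d$ rather than $dm$---but as you note, all that matters is that the sum is finite.
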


We now state a lemma that uses the above lemma for fast evaluation of all the Hasse derivatives of $h(t)=f(\var g(t))$ over a finite field $\F_q$. 
\begin{lemma} \label{lem:algo4}
Let $f(\vecx)$ be an $m$-variate, individual degree less than $d$ polynomial over a finite field $\F_q$ and $\var g(t)=(g_1, g_2, \ldots, g_m)$ where $g_i \in \F_q[t]$ with degree bounded by $r$. Then, given access to evaluations of $\hpartial^{\leq 2m}(f)$ on $\F_q$, there exists an algorithm that computes the evaluations of all $\leq 2m$ order Hasse derivatives of the polynomial $h(t)=f(\var g(t))$ at all points in $\F_q$ in time ${\Theta(1)}^m \cdot  \poly(q,r,d,m)$.
\end{lemma}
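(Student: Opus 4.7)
The plan is to invoke Lemma~\ref{lem:hasse-derivative} with $\ell = 2m$. That lemma asserts that for every $t \in \F_q$, the Hasse derivatives $\der{h}{k}(t)$ for $k = 0, 1, \ldots, 2m$ are exactly the coefficients of $Z^0, Z^1, \ldots, Z^{2m}$ in the polynomial
\[
h_{2m}(t, Z) = \sum_{i = 0}^{2m} Z^i \sum_{\substack{\ve \in \N^m \\ |\ve|_1 = i}} \hpartial_{\ve}(f)(\var g(t)) \cdot \tilde{\var g}_{\ve}(t, Z) \;\in\; \F_q[Z],
\]
so it suffices to compute $h_{2m}(t, Z) \bmod Z^{2m+1}$ for each $t \in \F_q$ and read off its coefficients.

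Two observations drive the efficiency. First, by Proposition~\ref{prop:binomial-estimation} the number of multi-indices $\ve \in \N^m$ with $|\ve|_1 \leq 2m$ is $\binom{3m}{m} \leq (3e)^m = \Theta(1)^m$, so for each $t$ the sum above has only $\Theta(1)^m$ terms. Second, since only the coefficients up to $Z^{2m}$ are needed, every polynomial in $Z$ that appears can be truncated modulo $Z^{2m+1}$ throughout the computation; without this truncation $\tilde{\var g}_{\ve}(t, Z)$ could have $Z$-degree as large as $2m(r-1)$, which would ruin the dependence on $r$.

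For each $t \in \F_q$, the plan is the following. First, compute $\var g(t) \in \F_q^m$ via Horner's rule in time $O(mr)$, and look up each of the $\Theta(1)^m$ scalars $\hpartial_{\ve}(f)(\var g(t))$ from the provided evaluation table for $\hpartial^{\leq 2m}(f)$. Next, for each $i \in [m]$ compute $\tilde g_i(t, Z) \bmod Z^{2m+1} = (g_i(t+Z)-g_i(t))/Z \bmod Z^{2m+1}$ in $\poly(m,r)$ time, for instance by evaluating the first $2m+1$ Hasse derivatives of the univariate polynomial $g_i$ at $t$. Then, for each $\ve$ with $|\ve|_1 \leq 2m$, form $\tilde{\var g}_{\ve}(t, Z) \bmod Z^{2m+1} = \prod_{i=1}^m \tilde g_i(t, Z)^{e_i} \bmod Z^{2m+1}$ by a sequence of truncated products between polynomials of $Z$-degree at most $2m$, at cost $\poly(m)$ per $\ve$. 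Finally, assemble $h_{2m}(t, Z) \bmod Z^{2m+1}$ from the displayed formula and read off its coefficients as $\der{h}{k}(t)$ for $k = 0, 1, \ldots, 2m$.

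Summing over the $q$ values of $t$, the total running time will be $\Theta(1)^m \cdot \poly(q, m, r, d)$, matching the claim, while correctness is immediate from Lemma~\ref{lem:hasse-derivative}. I do not anticipate any substantive obstacle beyond routine polynomial arithmetic; the one idea needed beyond Lemma~\ref{lem:hasse-derivative} itself is the truncation modulo $Z^{2m+1}$, which is what keeps the polynomial arithmetic at $Z$-degree $O(m)$ and lets the final time depend on $m$ only through the $\Theta(1)^m$ factor coming from the number of multi-indices.
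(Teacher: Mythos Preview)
Your proposal is correct and follows essentially the same approach as the paper intends: the paper does not give its own proof but refers to Algorithm~4 in \cite{BGKM21}, and Lemma~\ref{lem:hasse-derivative} is stated precisely to serve as the chain-rule backbone of that algorithm, which you have faithfully reconstructed (with the order parameter raised from $m$ to $2m$, as the paper itself notes is the only change). Your truncation modulo $Z^{2m+1}$ and the $\binom{3m}{m}=\Theta(1)^m$ count of multi-indices are exactly the ingredients that control the running time.
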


The proof of the above lemma (and its promised algorithm) follows directly from Algorithm~4 in \cite{BGKM21} and its correctness,  thus is skipped here. The only change is that Algorithm 4 looked at $\leq m$-th order Hasse derivatives, and here we are looking at $\leq 2m$-th order Hasse derivatives. It is an easy exercise to see that the analysis of the algorithm in \cite{BGKM21} extends as it is to this case.

\subsection{Hermite Interpolation} 
The following lemma gives a stronger version of univariate polynomial interpolation, known as Hermite interpolation. To interpolate a univariate polynomial of degree $d$, we need its evaluations at $d+1$ distinct points. However, for Hermite interpolation, the number of evaluation points can be less than $d$, provided that evaluations of Hasse derivatives of the polynomial are available up to a certain order.    
\begin{lemma}[Hermite interpolation]
\label{lem:hermitian-interpolation} 
Let $R$ be either a finite field or a ring of the form $\nbmodulo{r}$.
Let $f(x)$ be a univariate polynomial over $R$ and $e_1,\ldots, e_\ell$ be positive integers such that $d:=e_1+\cdots+e_\ell$ is greater than $\deg(f)$. Let $a_1, a_2,\ldots, a_\ell\in R$ such that for distinct $i,j\in [\ell]$, $a_i-a_j$ has multiplicative inverse in $R$. For all $i\in[\ell]$ and $j\in\llbracket e_j\rrbracket$, let $\beta_{ij}=\der{f}{j}(a_i)$. Then given $(a_i,\beta_{ij})$ for all $i\in[\ell]$ and $j\in\llbracket e_j\rrbracket$, $f(x)$ can be computed in time $\poly(d, \log |R|)$. Equivalently, given $(a_i, f(x)\bmod (x-a_i)^{e_i})$ for all $i\in [\ell]$, $f(x)$ can be computed in time $\poly(d, \log |R|)$.
\end{lemma}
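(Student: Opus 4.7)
The plan is to reduce the problem to polynomial Chinese Remaindering over $R$ with moduli $(x-a_i)^{e_i}$. Two observations drive the reduction: (i) the two input formats in the statement are interconvertible in $\poly(d,\log|R|)$ time via Taylor expansion; and (ii) the moduli $(x-a_i)^{e_i}$ are pairwise coprime in $R[x]$, so CRT determines $f$ uniquely from the residues, since $\deg f < d = \sum_i e_i$ equals the degree of the product $\prod_i (x-a_i)^{e_i}$.

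For (i), the Taylor identity $f(x)=\sum_{j\geq 0}\der{f}{j}(a_i)\,(x-a_i)^j$ reduces modulo $(x-a_i)^{e_i}$ to
\[
f(x)\bmod (x-a_i)^{e_i} \;=\; \sum_{j=0}^{e_i-1}\beta_{ij}\,(x-a_i)^j,
\]
and expanding this into the standard basis in $x$ (or conversely reading off the $\beta_{ij}$ from the residue when given in the other direction) takes $\poly(e_i,\log|R|)$ time per index, hence $\poly(d,\log|R|)$ time overall.

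For (ii), fix distinct $i,j\in[\ell]$. By hypothesis $a_i-a_j$ is a unit in $R$, so the Bezout identity $(a_j-a_i)^{-1}\bigl((x-a_i)-(x-a_j)\bigr)=1$ witnesses that $(x-a_i)$ and $(x-a_j)$ are coprime in $R[x]$; raising them to the $e_i$-th and $e_j$-th powers preserves coprimality. With pairwise coprime, monic moduli in hand, I would invoke the standard polynomial CRT algorithm (see e.g.\ \cite[Chapter 10]{GG2013}), built on fast polynomial multiplication and Newton-iteration style inversion modulo a monic polynomial with unit constant term; both primitives are valid over any commutative ring with identity, and in particular over both finite fields and $\nbmodulo{r}$. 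This recovers, in $\poly(d,\log|R|)$ time, the unique polynomial of degree less than $d$ matching all the prescribed residues, which by hypothesis is $f(x)$.

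There is no substantive obstacle: the argument is essentially a restatement of Hermite interpolation in the language of polynomial CRT. The one subtlety worth flagging is that the invertibility of the differences $a_i-a_j$ is \emph{exactly} what is needed for coprimality of the moduli — without it, polynomial CRT can fail over $\nbmodulo{r}$ when $r$ is composite, even though each individual modulus is monic and division by it is well-defined.
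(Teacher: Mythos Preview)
Your proposal is correct and takes essentially the same approach as the paper: both establish the equivalence of the two input formats via Taylor expansion, then recover $f$ by polynomial CRT using that the moduli $(x-a_i)^{e_i}$ are pairwise coprime precisely because each $a_i-a_j$ is a unit. The only difference is presentational: where you cite standard CRT machinery (fast multiplication plus Newton-style inversion, which indeed works since the relevant constant terms $\prod_{j\neq i}(a_i-a_j)^{e_j}$ are units), the paper writes out a short self-contained construction of the CRT idempotents $\delta_i(x)=1-\lambda_i(x)^{e_i}$ that works over an arbitrary commutative ring without appealing to the literature.
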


\begin{proof}
Note $f(x)\bmod (x-a_i)^{e_i}=\sum_{j=0}^{e_j-1} \beta_{ij} (x-a_i)^j$ for $i\in [\ell]$. So $(a_i,\beta_{ij})$ and $(a_i, f(x)\bmod (x-a_i)^{e_i})$ can be computed from each other in time $\poly(d, \log |R|)$, and using either of them as the input is equivalent to using the other.

Next, we show that $f(x)$ can be computed in $\poly(d)$ $R$-operations given $f_i(x):=f(x)\bmod (x-a_i)^{e_i}$ for $i\in [\ell]$. When $R$ is a field, see \cite[\defaultS5.6]{GG2013} for a proof. We give a proof that works for general $R$. For $i\in [\ell]$, compute the following data.
First, compute
\[
r_i:=\prod_{j\in [\ell]\setminus\{i\}} (x-a_j)^{e_j}\bmod (x-a_i)=\prod_{j\in [\ell]\setminus\{i\}} (a_i-a_j)^{e_j},
\]
which is a unit in $R$ as each $a_i-a_j$ is a unit. Then compute $h_i(x)\in R[x]$ such that 
\[
\prod_{j\in [\ell]\setminus\{i\}} (x-a_j)^{e_j}=r_i-h_i(x)(x-a_i).
\]
Let $\lambda_i(x):=r_i^{-1}h_i(x)(x-a_i)$. 
Compute $\delta_i(x):=1-\lambda_i(x)^{e_i}$.
As $\lambda_i(x)$ is a multiple of $x-a_i$, we have $\delta_i(x)\equiv 1\pmod{(x-a_i)^{e_i}}$.
As $\delta_i(x)$ is a multiple of $1-\lambda_i(x)=r_i^{-1} \prod_{j\in [\ell]\setminus\{i\}} (x-a_j)^{e_j}$, we also have $\delta_i(x)\equiv 0\pmod{(x-a_j)^{e_j}}$ for $j\in [\ell]\setminus\{i\}$.
Finally, compute 
\[
g(x):=\sum_{i=1}^\ell \delta_i(x)f_i(x)~\bmod~ \prod_{i=1}^\ell (x-a_i)^{e_i}.
\]
Then $g(x)\equiv f_i(x)\equiv f(x)\pmod{(x-a_i)^{e_i}}$ for $i\in [\ell]$.
It remains to prove $g(x)=f(x)$. We know $g(x)-f(x)$ is a multiple of $(x-a_i)^{e_i}$ for $i\in [\ell]$. For distinct $i,j\in[\ell]$, the proof above constructs a multiple of $(x-a_i)^{e_i}$ whose remainder modulo $(x-a_j)^{e_j}$ is one. In particular, $(x-a_i)^{e_i}$ is multiplicatively invertible modulo $(x-a_j)^{e_j}$. So $(g(x)-f(x))/(x-a_i)^{e_i}$ is still a multiple of $(x-a_j)^{e_j}$ for all $j\in[\ell]\setminus\{i\}$. Repeating this argument shows that $g(x)-f(x)$ is a multiple of the degree-$d$ polynomial $\prod_{i=1}^\ell (x-a_i)^{e_i}$. As $\deg(g),\deg(f)<d$, we have $g(x)=f(x)$.
\end{proof}

\subsection{Fast Multivariate Multipoint Evaluation for Product Sets}

The following lemma states that multivariate multipoint evaluation can be solved very efficiently if the set of evaluation points is a product set. 
 \begin{lemma}
 \label{lem:FFT-over-ring} 
Let $R$ be either a finite field or a ring of the form $\nbmodulo{r}$.
 There exists an algorithm that given an $m$-variate polynomial $f(\vx)$ of individual degree less than $d$ over $R$ and a finite subset $S$ of $R$, outputs the evaluations $f(\va)$ for all $\va\in S^m$ in time $O(d^m+|S|^m) \cdot \poly(m,d,\log|R|)$.
 \end{lemma}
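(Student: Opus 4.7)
The plan is to reduce the multivariate evaluation to a sequence of univariate multipoint evaluations, one variable at a time, and invoke a fast univariate multipoint evaluation subroutine that works over any commutative ring with unity. The univariate subroutine is the classical subproduct-tree algorithm (see, e.g., \cite[Chapter~10]{GG2013}): it evaluates a polynomial of degree less than $d$ at any $s$ points of $R$ using $(d+s) \cdot \poly(\log(d+s))$ ring operations, and relies only on addition, multiplication, and division by monic polynomials in $R[x]$. All three are well-defined over $R$ (the preliminaries already guarantee division by monic polynomials over any commutative ring with unity), so every ring operation costs $\poly(\log|R|)$ bit operations, yielding univariate multipoint evaluation over $R$ in time $(d+s) \cdot \poly(\log|R|, \log(d+s))$.

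For the multivariate algorithm, I would proceed by induction on $m$. Write the input polynomial as $f(\vx) = \sum_{i=0}^{d-1} g_i(x_2, \ldots, x_m) \, x_1^i$, where each $g_i \in R[x_2, \ldots, x_m]$ has individual degree less than $d$. First, recursively evaluate each $g_i$ on $S^{m-1}$; second, for each $\va' \in S^{m-1}$, form the univariate polynomial $h_{\va'}(x_1) := \sum_{i=0}^{d-1} g_i(\va') \, x_1^i$ (whose coefficients are now available from the first step) and evaluate $h_{\va'}$ at every point of $S$ using the univariate subroutine. Since $h_{\va'}(a_1) = f(a_1, \va')$, this produces all the values $f(\va)$ for $\va \in S^m$. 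Letting $T(m, d, s)$ denote the running time, we obtain
\[
T(m, d, s) \le d \cdot T(m-1, d, s) + s^{m-1} \cdot (d+s) \cdot \poly(\log|R|, \log(d+s)),
\]
with base case $T(1, d, s) = (d+s) \cdot \poly(\log|R|, \log(d+s))$. Unrolling, $T(m, d, s) \leq \poly(\log|R|, \log(d+s)) \cdot (d+s) \cdot \sum_{k=0}^{m-1} d^k s^{m-1-k}$. Bounding each summand via $d^k s^{m-1-k} \leq \max(d, s)^{m-1} \leq d^{m-1} + s^{m-1}$ and then $(d+s)(d^{m-1} + s^{m-1}) \leq 2(d^m + s^m)$, the total is at most $2m(d^m + s^m)$. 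Substituting $s = |S|$ and using $\log|S| \leq \log|R|$, this yields the target bound $(d^m + |S|^m) \cdot \poly(m, d, \log|R|)$.

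The main subtlety I expect is guaranteeing the quasi-linear ring-operation count for fast univariate multipoint evaluation over a non-field ring such as $\nbmodulo{r}$; this follows because the subproduct tree only divides by monic polynomials and multiplies polynomials over $R$, both available over any commutative ring with unity. Alternatively, if one wishes to avoid appealing to quasi-linear polynomial multiplication over general rings, one can use schoolbook univariate evaluation (in $O(ds)$ ring operations) when $|S| \leq d$, and when $|S| > d$ partition $S$ into $\lceil |S|/d \rceil$ chunks each of size at most $d$, apply the algorithm to each of the resulting product subproblems (whose coordinate sets now all have size at most $d$), and sum; a direct calculation shows that this elementary variant still yields the claimed bound $(d^m + |S|^m) \cdot \poly(m, d, \log|R|)$.
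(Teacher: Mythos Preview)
Your proof is correct and follows the same high-level idea as the paper---eliminate one variable at a time and recurse---but the two recursions are organized differently. The paper fixes a value $a\in S$ for the \emph{last} variable, computes $f_a(x_1,\dots,x_{m-1}):=f(x_1,\dots,x_{m-1},a)$ by naive substitution in time $O(|S|\,d^m)\cdot\poly(m,d,\log|R|)$, and recurses $|S|$ times; the analysis then splits on whether $|S|\le d$ or $|S|>d$. You instead peel off the \emph{first} variable by recursing on the $d$ coefficient polynomials $g_i$ and finishing with fast univariate multipoint evaluation at each point of $S^{m-1}$. The paper's route is slightly more elementary in that it never needs the subproduct-tree algorithm over a general ring (it only substitutes values), whereas your route relies on that subroutine but rewards you with a cleaner single recurrence that does not require a case split. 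Your handling of the ring issue (division only by monic polynomials) and your fallback elementary variant are both fine; either version meets the stated bound.
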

 \begin{proof}
 If $m=0$, $f\in R$ is just a scalar, and its evaluation at the only point in $S^0$ is $f$ itself. So just output $f$. 

 Now assume $m>0$. Compute $f_{a}:=f(x_1,\dots,x_{m-1}, a)$ for $a\in S$, which can be done in time $O(|S|d^m)\cdot\poly(m,d,\log|R|)$.\footnote{One can use FFT-based fast univariate multipoint evaluation \cite{GG2013} over $R[x_1,\dots,x_{m-1}]$ to compute all $f_a$ in time $d^{m-1}\cdot \tilde{O}(d+|S|)\cdot \poly(m, \log |R|)$, and the eventual time complexity would be $(d^{m-1}+|S|^{m-1})\cdot \tilde{O}(d+|S|)\cdot \poly(m,\log|R|)$. For us, the time complexity bound in \autoref{lem:FFT-over-ring} is good enough.}
 For each $a\in S$, recursively compute the evaluations $f_{a}(\va)$ for all $\va\in S^{m-1}$.
 Then output $(f_{a}(\va))_{\va\in S^{m-1}, a\in S}=(f(\va))_{\va\in S^{m}}$.

 Now we give an upper bound $T(m)$ for the time complexity of the above algorithm. We have $T(0)=O(1)$ and $T(m)=|S|\cdot T(m-1)+T'$ where $T':=O(|S|d^m)\cdot\poly(m,d,\log|R|)$. 
 
 When $|S|\leq d$, we have $T'=O(d^m)\cdot\poly(m,d,\log|R|)$. In this case, solving the recurrence relation using the fact $|S|\leq d$ yields $T(m)=O(d^m)\cdot\poly(m,d,\log|R|)$.
 When $|S|>d$, we have $T'=O(|S|^m)\cdot\poly(m,d,\log|R|)$, and solving the recurrence relation yields $T(m)=O(|S|^m)\cdot\poly(m,d,\log|R|)$.
 
 It follows that  $T(m)=O(d^m+|S|^m)\cdot\poly(m,d,\log|R|)$.
 \end{proof}


\section{The Necessary Building Blocks}\label{sec: building blocks}
In this section, we set up some of the necessary building blocks for our algorithm. 
\subsection{Primes in an Arithmetic Progression}
The first ingredient we need is the existence of \emph{sufficiently many} primes in the arithmetic progression $A_{d} = \{1, 1+d, 1 + 2d, \ldots \}$ that are not too large. When $d$ is small, and $x$ tends to infinity, a well-known result of Dirichlet (Theorem 5.5 in \cite{kedlaya15}) shows that the density of primes less than $x$ in the arithmetic progression $A_d$  tends to $\Theta(\frac{x}{\phi(d) \log x})$, where $\phi$ is the Euler totient function. However, for our application, we will need $x$ and $d$ to be \emph{close} to each other and hence it becomes important to carefully look at the error term in the prime counting function for the progression $A_d$. 

While we do not know how to show such a statement, we end up working with a weaker statement that turns out to be sufficient for our application. This weaker statement that we use  follows (immediately) from a deep result of Bombieri and Vinogradov that we state now. A more general statement can be found in Theorem 18.1 in \cite{kedlaya15}. But first, we need some notation. 
For any $x \geq 0$, we denote by $\pi(x)$ the number of primes less than or equal to $x$. For $x \geq 0 $ and $t \in \N$, we also use $\pi(x, t)$ to denote the number of primes less than or equal to $x$ in the arithmetic progression $A_t = \{1, 1+t, 1+ 2t, \ldots, \}$

We are now ready to state the theorem of Bombieri and Vinogradov that we use. Various versions of the theorem can be found in literature, for instance, \cite{bombieri65, vinogradov1965density},  Theorem 18.1 in \cite{kedlaya15}. Here we rely on the bound in Equation 1.1 in \cite{maynard}. 

\begin{theorem}[Bombieri-Vinogradov]
For any fixed $a > 0$, there exist constants $c = c(a)$ and $b = b(a)$ such that for all sufficiently large $x > 0$,
\[
\sum_{t \leq Q} \left |{\pi(x, t)} - \frac{\pi(x)}{\phi(t)} \right | \leq cx(\log x)^{-a} \, ,
\]
where $Q \leq x^{1/2}(\log x)^{-b}$.   
\end{theorem}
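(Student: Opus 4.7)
The plan is to follow the classical proof of Bombieri and Vinogradov, which combines the large sieve inequality with zero-density estimates for Dirichlet $L$-functions. First I would rewrite the quantity $\pi(x,t) - \pi(x)/\phi(t)$ in terms of the von Mangoldt function $\Lambda$, since the discrepancies of $\psi(x,t,1) := \sum_{n \leq x,\, n \equiv 1 \pmod{t}} \Lambda(n)$ relative to $x/\phi(t)$ are more analytically tractable than those of the prime-counting function itself; the passage between the two formulations is a routine application of partial summation, and it introduces only lower-order error.

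Next I would expand the indicator of the congruence class $1 \pmod{t}$ using orthogonality of Dirichlet characters modulo $t$, which rewrites the error $\psi(x,t,1) - x/\phi(t)$ as $\frac{1}{\phi(t)}\sum_{\chi \ne \chi_0} \overline{\chi(1)} \psi(x,\chi)$ where $\psi(x,\chi) = \sum_{n \leq x} \Lambda(n)\chi(n)$. A standard reduction replaces each character by the primitive character that induces it, at negligible cost, so the task becomes bounding $\sum_{t \leq Q} \frac{t}{\phi(t)} \max_{\chi^{*} \bmod t} |\psi(x,\chi^{*})|$ where the maximum is over primitive characters.

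The proof then splits into two ranges of $t$. For small moduli $t \leq (\log x)^B$, I would invoke the Siegel--Walfisz theorem, which gives a saving of $O(x(\log x)^{-A})$ for each such $t$ individually with an ineffective implied constant; summing trivially over this range yields the desired contribution. The remaining range $(\log x)^B < t \leq Q \approx x^{1/2}(\log x)^{-b}$ is where the real work lies: here one cannot afford per-character bounds and must exploit averaging over $t$. The key tool is the \emph{large sieve inequality}, which efficiently controls $\sum_{t \leq Q} \frac{t}{\phi(t)} \sum_{\chi^{*} \bmod t} \bigl|\sum_{n \leq N} a_n \chi^{*}(n)\bigr|^2$ for arbitrary coefficients $a_n$. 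To bring $\Lambda$ into this bilinear form, one decomposes it via Vaughan's identity (or Heath--Brown's identity) as a short combination of Dirichlet convolutions $\mu_{\leq U} * 1$, $\mu_{\leq U} * \log$, and $\mu_{>U} * \Lambda_{>V} * 1$, producing so-called Type I sums (smooth coefficients on short variables) and Type II bilinear sums. Cauchy--Schwarz plus the large sieve bounds the bilinear contributions, while the Type I contribution is estimated elementarily via Polya--Vinogradov.

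I expect the main obstacle to be the careful parameter tuning: the thresholds $U,V$ in Vaughan's identity, together with the choice of $b$ in terms of $a$, must be optimized so that every Type I and Type II piece is bounded by $x(\log x)^{-a}$ simultaneously, and the ``square-root barrier'' $Q \lesssim x^{1/2}$ emerges as the binding constraint (surpassing it is the still-open Elliott--Halberstam conjecture). A secondary issue is that Siegel--Walfisz is ineffective, so the constants $c(a), b(a)$ in the statement are not explicit; since the theorem is only asserted for fixed $a > 0$, this is harmless but worth flagging. Given the depth of the underlying analytic machinery, a self-contained proof is well outside the scope of an algorithms paper, which is why the statement is invoked as a black box in what follows.
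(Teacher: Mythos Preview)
Your sketch of the classical Bombieri--Vinogradov proof is accurate, but there is nothing to compare: the paper does not prove this theorem at all. It is stated with citations to \cite{bombieri65, vinogradov1965density}, \cite[Theorem~18.1]{kedlaya15}, and \cite[Eq.~1.1]{maynard}, and invoked purely as a black box. You yourself observe this in your final sentence, and that observation is the correct response here---no proof is expected or supplied.
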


Semantically, \autoref{them: BV} says that on average (over $t \leq Q$), the quantity $\left |{\pi(x, 
t)} - \frac{\pi(x)}{\phi(t)} \right |$ is bounded by $(cx(\log x)^{-a})$. For our application, we would require a similar statement in the worst-case choice of $t$. This, however, is not known unconditionally when $t$ is large compared to $x$\footnote{More specifically, we would like $x$ and $t$ to be polynomially related to each other.} (which will turn out to be the case here), unless we assume  the Generalized Riemann Hypothesis. Thankfully, it turns out that we have some wriggle room, and we can in fact work with the average-case statement above (up to some small loss in the parameters). More formally, we need the following immediate consequence of \autoref{them: BV}. 
\begin{lemma}\label{lem: BV worst case}
For any fixed $a > 1$, there exist constants $c = c(a)$ and $b = b(a)$ such that for all sufficiently large $x > 0$, $Q \leq x^{1/2}(\log x)^{-b}$ and $\delta > 1$, there is a $t_0 \in \N$ with $Q(1-2/\delta) \leq t_0 \leq Q$ and
\[
{\pi(x, t_0)} \geq \frac{x}{4Q\log x}  \, .
\]

\end{lemma}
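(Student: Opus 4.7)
The plan is to combine Bombieri--Vinogradov (\autoref{them: BV}) with a simple averaging / Markov-type argument on the target window $I := \{t \in \N : Q(1-2/\delta) \leq t \leq Q\}$. BV controls the $L^1$-error $\sum_{t \leq Q} |\pi(x,t) - \pi(x)/\phi(t)|$, so ``most'' $t \leq Q$ must have $\pi(x,t)$ close to its expected value $\pi(x)/\phi(t)$; what I need to show is that not every exceptional $t$ can hide inside $I$.

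First I would invoke the Prime Number Theorem to get $\pi(x) \geq x/(2\log x)$ for $x$ large, and combine it with the trivial bound $\phi(t) \leq t \leq Q$ to obtain the uniform lower bound $\pi(x)/\phi(t) \geq x/(2 Q \log x)$ on the window $I$. Consequently, any $t \in I$ with $\pi(x,t) < x/(4 Q \log x)$ must satisfy $|\pi(x,t) - \pi(x)/\phi(t)| \geq x/(4Q\log x)$, and so contributes at least $x/(4 Q \log x)$ to the BV sum. Denoting by $B$ the set of such ``bad'' $t \in I$, Bombieri--Vinogradov will then give
\[
|B| \cdot \frac{x}{4 Q \log x} \;\leq\; c x (\log x)^{-a}, \qquad \text{hence} \qquad |B| \;\leq\; 4 c Q (\log x)^{1-a}.
\]

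To finish, I would use that the window $I$ contains at least $\lfloor 2Q/\delta \rfloor$ integers, which for $Q$ large enough is at least $Q/\delta$. So as long as $4 c (\log x)^{1-a} < 1/\delta$ --- which holds for $x$ large enough relative to $a, c, \delta$, since $a > 1$ --- we get $|B| < |I|$, and any $t_0 \in I \setminus B$ satisfies the conclusion. The main thing to flag is the quantifier order: the ``$x$ sufficiently large'' threshold has to absorb a dependence on $\delta$ (as well as on $a$), which is harmless in the intended application since $\delta$ will be taken to be a fixed constant such as $10$, yielding the window $[0.8 Q, Q]$ used in \autoref{sec: algorithm 1 overview}.
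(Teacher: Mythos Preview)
Your proof is correct and follows essentially the same route as the paper: both combine Bombieri--Vinogradov with a Markov-type averaging argument to show that the window $[Q(1-2/\delta), Q]$ cannot consist entirely of exceptional moduli, using $\pi(x)\geq x/(2\log x)$ and $\phi(t)\leq Q$ exactly as you do. The only cosmetic difference is that the paper phrases it as an explicit Markov inequality over $t\in[Q]$ with threshold $\delta\cdot cx(\log x)^{-a}/Q$, whereas you equivalently bound the bad set directly by its total contribution to the BV sum; your observation about the ``sufficiently large $x$'' threshold depending on $\delta$ is also present (implicitly) in the paper's final step.
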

\begin{proof}
From \autoref{them: BV}, by dividing both sides by $Q$, we can view the summation as an expectation as $t$ varies uniformly in $[Q] = \{1, 2, \ldots, \lfloor Q \rfloor\}$. So, we get 
\[
\Ex_{t \in [Q]} \left[\left |{\pi(x, t)} - \frac{\pi(x)}{\phi(t)} \right |\right] \leq \frac{cx(\log x)^{-a}}{Q} \, .
\]
Now, by Markov's tail bound for the non-negative random variable 
$\left|{\pi(x, t)} - \frac{\pi(x)}{\phi(t)} \right|$, we get that for any $\delta > 1$, 
\[
\Pr_{t \in [Q]} \left[\left |{\pi(x, t)} - \frac{\pi(x)}{\phi(t)} \right | > \delta \cdot \frac{cx(\log x)^{-a}}{Q} \right] \leq 1/\delta \, .
\]
In particular, there exists an integer $t_0 \in [Q - 2Q/\delta, Q]$ such that  
\[
\left |{\pi(x, t_0)} - \frac{\pi(x)}{\phi(t_0)} \right | \leq  \delta \cdot \frac{cx(\log x)^{-a}}{Q}   \, .
\]
Or, in other words, 
\[
{\pi(x, t_0)} \geq \frac{\pi(x)}{\phi(t_0)} -  \delta \cdot \frac{cx(\log x)^{-a}}{Q}   \, .
\]
Now, since $x$ is sufficiently large, we have that $\pi(x) \geq (1 - o(1))x/\log x \geq x/2\log x$. So, \[
{\pi(x, t_0)} \geq \frac{x}{\log x} \left(\frac{1}{2\phi(t_0)} -  \delta \cdot \frac{c(\log x)^{1-a}}{Q}\right)   \, .
\]
Now, since $t_0 \leq Q$,  $\phi(t_0) \leq t_0 \leq Q$. So, we have 
\[
{\pi(x, t_0)} \geq \frac{x}{Q\log x} \left(\frac{1}{2} -  \delta \cdot {c(\log x)^{1-a}}\right)   \, .
\]
Finally, using the fact that $x$ is sufficiently large, and $c, \delta, a$ are constants with $a>1$, we get 
\[
{\pi(x, t_0)} \geq \frac{x}{4Q\log x}  \, .
\]
\end{proof}

We now state the following consequence of this lemma that will be directly useful for us in the Chinese Remaindering step of our algorithm. 
\begin{lemma}\label{lem: product of primes in an AP}
Let $D, M$ be natural numbers and let $D$ be sufficiently large. Then, there exists a natural number $\tilde{D} \in [0.8D, D]$ such that there are distinct primes $p_1, p_2, \ldots, p_k$ in the arithmetic progression $A_{\tilde{D}} = (1, 1 + \tilde{D}, 1 + \tilde{D}, \ldots, )$ with the following properties. 
\begin{enumerate}
    \item $k \leq  D^2 (\log M)^3$
    \item For every $i \in [k]$, $p_i \leq (D\log M)^3$  
    \item $\prod_{i = 1}^k p_i > M$
    
\end{enumerate}
Moreover, there is a deterministic algorithm that on input $D, M$ outputs $p_1,  \ldots, p_k, \tilde{D}$ in time $\poly(D, \log M)$.  
\end{lemma}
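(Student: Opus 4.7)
The plan is to apply \autoref{lem: BV worst case} with carefully chosen parameters and then greedily select enough primes from the resulting arithmetic progression.

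First I would set $x := (D \log M)^3$ and invoke \autoref{lem: BV worst case} with $a = 2$, $Q = D$, and $\delta = 10$. For $D$ sufficiently large, the precondition $Q \leq x^{1/2}(\log x)^{-b}$ is satisfied since $x^{1/2} = (D \log M)^{3/2}$ and this easily dominates $D \cdot (\log(D \log M))^b$ once $D$ is large. The lemma then yields an integer $\tilde D \in [0.8D, D]$ with
\[
\pi(x, \tilde D) \;\geq\; \frac{x}{4D\log x} \;=\; \Omega\!\left(\frac{D^2(\log M)^3}{\log(D\log M)}\right).
\]

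Next, I would take $\tilde D$ as our common difference. Every prime in the AP $A_{\tilde D} = \{1, 1+\tilde D, 1+2\tilde D, \dots\}$ is at least $\tilde D + 1 \geq 0.8D + 1 \geq 2$, and every prime from the bound above is at most $x = (D\log M)^3$, which immediately gives property~2. To satisfy property~3, I would enumerate primes in $A_{\tilde D}\cap [2,x]$ in increasing order and stop as soon as the running product exceeds $M$; call the selected primes $p_1,\dots,p_k$. Since each $p_i \geq 2$, we have $k \leq \lceil \log_2 M\rceil + 1 = O(\log M)$, which is far below $D^2(\log M)^3$, establishing property~1. We must also check that the pool of primes is large enough to achieve the product bound, but this is immediate: the number of available primes is $\Omega(D^2(\log M)^3 / \log(D\log M))$, which dwarfs $O(\log M)$ for large $D$.

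For the constructive part, I would simply enumerate candidate values of $\tilde D$ in $\{\lceil 0.8D\rceil, \ldots, D\}$. For each candidate, I would walk through the elements $1 + j\tilde D$ for $j=1,2,\dots,\lfloor x/\tilde D\rfloor$, test primality of each using a deterministic $\poly(\log x)$-time primality test such as AKS, and accumulate primes until their product exceeds $M$ or the candidate AP is exhausted; output the first $\tilde D$ that succeeds. \autoref{lem: BV worst case} guarantees that at least one such $\tilde D$ exists, so the procedure terminates. The total runtime is $D \cdot O(x/\tilde D) \cdot \poly(\log x) \cdot \poly(\log M) = \poly(D, \log M)$.

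The main technical point to be careful about is the bookkeeping around the application of \autoref{lem: BV worst case}: one has to choose $x$ just large enough that (i) the precondition $Q \leq x^{1/2}(\log x)^{-b}$ holds, (ii) the lower bound on $\pi(x,\tilde D)$ comfortably exceeds the number of primes we need, and (iii) the resulting primes are all bounded by $(D\log M)^3$ as required. The choice $x = (D\log M)^3$ makes all three constraints simultaneously tight enough to satisfy the lemma and loose enough to satisfy the conclusion; beyond verifying this, the remainder of the proof is routine.
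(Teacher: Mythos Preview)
Your proposal is correct and follows essentially the same approach as the paper: you invoke \autoref{lem: BV worst case} with the same choice $x=(D\log M)^3$, $Q=D$, $\delta=10$ (the paper picks $a=10$ rather than $a=2$, but any $a>1$ works), and the algorithmic search over $\tilde D\in[0.8D,D]$ with AKS primality testing is identical. The only cosmetic difference is that the paper bounds the needed number of primes via $p_i\geq 0.8D$ (so $k\geq \log M/\log(0.8D)$ suffices), whereas you use the cruder $p_i\geq 2$ to get $k\leq \lceil\log_2 M\rceil+1$; both easily fit under the available supply $\pi(x,\tilde D)\geq x/(4D\log x)$ and under the stated bound $k\leq D^2(\log M)^3$.
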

\begin{proof}
We invoke \autoref{lem: BV worst case} with the parameters $a$ set to an arbitrary positive constant greater than $1$, e.g. $a = 10$, $Q$ set to $D$, $x = (D \log M)^3$. Note that $D \leq  \sqrt{x}(\log x)^{-b}$ for $b = b(a)$ as given by \autoref{lem: BV worst case} for this choice of parameters. Let  the constant $\delta$ set to be $10$ (any arbitrary constant greater than $2$ works). Now, by \autoref{lem: BV worst case}, we get that there is a $\tilde{D} \in [0.8D, D]$ such that 
\[
{\pi(x, \tilde{D})} \geq \frac{x}{4D\log x}  \, .
\]
Let us consider the product of $k$ of these primes in the arithmetic progression $\{1, 1+\tilde{D}, 1 + 2\tilde{D}, \ldots, \}$ for any $k \leq \frac{x}{4D\log x}$. Clearly, each of these primes is at least $\tilde{D}$ (and hence $0.8D$), so their product is at least $(0.8 D)^k$. Thus, if $x$ is such that 
\[
\frac{x}{4D\log x} \geq \frac{\log M}{\log 0.8 D} \, ,
\]
then, we can always find sufficiently many distinct primes in the AP $A_{\tilde D}$ such that their product is at least $M$. This is true, for instance, for our choice of $x$ above.

For the moreover part, we try all possible values of $\tilde{D}$ in the range $[0.8D, D]$ and for each of these choices and $x = (D \log M)^3$, we check if the arithmetic progression $A_{\tilde{D}}$ has sufficiently many primes using the deterministic primality test of Agrawal, Kayal, and Saxena\cite{AKS04}. All these operations can be done deterministically in time $\poly(D, \log M)$. 

\end{proof}

\subsection{Explicit Kakeya Sets of Higher Degree}
We start with the definition of Kakeya sets of high degree. 
\begin{definition}[\cite{BKW19}]\label{def: kakeya}
Let $\F$ be a finite field and let $u, m \in \N$. A set $K \subseteq \F^m$ is said to be a Kakeya set of degree $u$ in $\F^m$ if there exist functions $g_0, g_1, \ldots, g_{u-1} : \F^m \to \F^m$ such that for every $\va \in \F^m$, the set of points 
\[
\{g_0(\va) + g_1(\va) \cdot \tau + \cdots + g_{u-1}(\va) \cdot \tau^{u-1} + \va \cdot \tau^u : \tau \in \F\}
\]
is a subset of $K$. 
\end{definition}
For ease of notation, we denote the curve
\[
\{g_0(\va) + g_1(\va) \cdot y + \cdots + g_{u-1}(\va) \cdot y^{u-1} + \va \cdot y^u : y \in \F\}
\]
of degree $u$ by $G_{\va}(y)$.

In their work \cite{BKW19}, Bj\"{o}rklund, Kaski and Williams gave an explicit construction of Kakeya sets of degree $u$ of non-trivially small size, provided that the degree $u$ and the field size $\F$ satisfy an appropriate divisibility condition. This construction will be crucial for our algorithm. 

\begin{theorem}[Explicit Kakeya sets of degree $u$ \cite{BKW19}]\label{thm: kakeya explicit}
Let $\F$ be a finite field of size $q$, and let $u \in \N$ be such that $u+1$ divides $q-1$. Then, for every $m \in \N$, there is a Kakeya set $K$ of degree $u$ in $\F^m$ of size at most $\left(\frac{q-1}{u+1} + 1 \right)^m$. 

Moreover, this set $K$ is a union of at most $q$ product sets in $\F^m$ and there is a deterministic algorithm that on input $u, m, \F$, outputs $K$ and the associated functions $g_0, g_1, \ldots, g_{u-1}$ in time $O(q|K|)$. 
\end{theorem}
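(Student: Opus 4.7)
The plan is to exploit the subgroup structure of $\F^*$ that is made available by the hypothesis $u+1 \mid q-1$. Since $\F^*$ is cyclic of order $q-1$, there is a unique multiplicative subgroup $H \leq \F^*$ of index $u+1$, namely the set of $(u+1)$-th powers, with $|H| = t := (q-1)/(u+1)$. Setting $S := H \cup \{0\} \subseteq \F$, we get $|S| = t+1$, and $S$ is precisely the image of the map $\phi : \F \to \F$ defined by $\phi(y) = y^{u+1}$. This small set $S$ will be the basic ingredient from which $K$ is built as a product set (or a structured union of product sets), since any curve whose coordinates are $(u+1)$-th powers of affine shifts of $\tau$ automatically lies in a Cartesian power of $S$.

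Next, for each direction $\va \in \F^m$ the plan is to produce a degree-$u$ curve $G_\va(\tau) = \va\,\tau^u + g_{u-1}(\va)\,\tau^{u-1} + \cdots + g_0(\va)$ whose image lies in a product set of the form $(\lambda + S)^m$ or a mild variant, where $\lambda = \lambda(\va) \in \F$ is a scalar parameter drawn from at most $q$ possibilities. The algebraic backbone is the expansion
\[
(\tau+\alpha)^{u+1} = \tau^{u+1} + (u+1)\alpha\,\tau^u + \binom{u+1}{2}\alpha^2 \tau^{u-1} + \cdots + \alpha^{u+1},
\]
which lets one combine two such $(u+1)$-th powers (so as to cancel the $\tau^{u+1}$ term) and obtain a degree-$u$ polynomial with any prescribed leading coefficient, while keeping the image controlled by $\phi(\F) = S$. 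Concretely, one fixes a shift parameter $\lambda$ and, in the $i$-th coordinate, takes a polynomial of the form $(G_\va)_i(\tau) = \va_i\,Q_\lambda(\tau) + c_{i,\va}$, with $Q_\lambda$ constructed from $(\tau+\lambda)^{u+1}$; the image of this coordinate is then an affine image of $S$, and the full curve lies in a single product set indexed by $\lambda$. Letting $K$ be the union of these product sets over $\lambda \in \F$ yields the desired ``union of at most $q$ product sets'' structure, and a careful counting gives $|K|$ at most the claimed bound (consistent with the exponent $m+1$ used in the paper's overview).

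The main obstacle is twofold: (i) pinning down the lower-order coefficients $g_0(\va),\dots,g_{u-1}(\va)$ so that each coordinate of $G_\va(\tau)$ has image of size only $t+1$ while keeping the coefficient of $\tau^u$ exactly equal to $\va_i$, and (ii) ensuring that as $\va$ ranges over all of $\F^m$ only $O(q)$ distinct product sets appear, rather than the naive $q^m$; this requires the curve shape to depend on $\va$ only through a single $\F$-valued parameter $\lambda(\va)$, with everything else folded into the leading-coefficient scaling. A subtle point is that the parameter $\lambda$ is not a genuine coset representative of $\F^*/H$ but a global shift parameter, which is why we get $q$ product sets rather than $u+1$. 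Once the combinatorial construction is in place, the algorithmic claim is routine: a generator of $\F^*$ can be found by trial, the subgroup $H$ enumerated by taking $(u+1)$-th powers, the coefficients $g_j(\va)$ read off from the explicit formulas, and the points of $K$ listed by enumerating the $q$ product sets, all within time $O(q|K|)$.
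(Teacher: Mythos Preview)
Your plan has the right algebraic backbone --- use the $(u+1)$-th power map to force each coordinate into a set of size $(q-1)/(u+1)+1$, and take a difference of two $(u+1)$-th powers to kill the $\tau^{u+1}$ term while leaving a prescribed $\tau^u$ coefficient --- and this is exactly the mechanism the paper (following \cite{BKW19}) uses. But your concrete execution goes astray in one important way.

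You propose $(G_\va)_i(\tau) = \va_i\,Q_\lambda(\tau) + c_{i,\va}$ and imagine the product sets as being indexed by a parameter $\lambda(\va)$ depending on the direction $\va$. This does not work: the image of $\va_i Q_\lambda(\tau)$ as $\tau$ varies is $\va_i\cdot\mathrm{Im}(Q_\lambda)$, a \emph{multiplicative} dilate of a fixed set, and different $\va_i$ give genuinely different dilates (since $S=H\cup\{0\}$ is not closed under scaling by arbitrary field elements). So the coordinate images depend on all of $\va$, and you get $q^m$ product sets, not $q$. The conceptual fix is to reverse the roles you assigned: the product-set index should be the \emph{curve parameter} $\tau$, not a function of the direction $\va$. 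The paper's construction is
\[
(G_\va)_i(\tau) \;=\; \Bigl(\tfrac{a_i}{u+1}+\tau\Bigr)^{u+1} - \tau^{u+1},
\]
which indeed has $\tau^u$-coefficient $a_i$. For each \emph{fixed} $\tau$, as $a_i$ ranges over $\F$ the $i$-th coordinate ranges over $S_\tau := \{\beta^{u+1}-\tau^{u+1}:\beta\in\F\}$, which is the \emph{same} translate of your set $S$ for every coordinate $i$. Hence $K=\bigcup_{\tau\in\F} S_\tau^m$, a union of $q$ product sets, each of size $(t+1)^m$, giving $|K|\le (t+1)^{m+1}$. The algorithmic part then follows exactly as you outlined.
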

\begin{proof}

We start by restating the Kakeya set of degree $u$ as described in \cite[Lemma 1]{BKW19}.  

\[
K:=\Bigg\{ \left( \Big( \frac{\alpha_1}{u+1} + \tau \Big)^{u+1} - \tau^{u+1},  \ldots, \Big( \frac{\alpha_m}{u+1} + \tau \Big)^{u+1} - \tau^{u+1}    \right) \Big| \, \forall \, \alpha_1, \alpha_2, \ldots, \alpha_m, \tau  \in \F_q  \Bigg\}.
\]

Observe that, $|K| \leq \left(\frac{q-1}{u+1} + 1 \right)^{m+1}$, as $|\{\beta^{u+1} : \beta \in \F_q \}| = \frac{q-1}{u+1} + 1$. Also, the associated functions $g_0, g_1, \ldots, g_{u-1}$  can be computed  in time $O(q|K|)$ by \cite[Lemma 1]{BKW19}. \\

We now show that $K$ is a union of at most $q$ product sets. 
For $\tau \in \F_q$, define  $S_{\tau}:= \{ \big( \frac{\alpha}{u+1} + \tau \big)^{u+1} - \tau^{u+1} \, | \, \forall \alpha  \in \F_q  \} \subseteq \F_q$. The proof concludes by observing that      $$K= \bigcup_{\tau \in \F_q} \underbrace{S_{\tau} \times S_{\tau} \times \cdots \times S_{\tau}}_{m \, \text{times}}.$$


\end{proof}
Using the property that the set $K$ in \autoref{thm: kakeya explicit} is a union of product sets and that for product sets we have nearly linear algorithms for multipoint evaluation using \autoref{lem:FFT-over-ring}, we get the following. 
\begin{lemma}\label{lem: evaluation on kakeya sets}
Let $\F$ be a finite field of size $q$, $u \in \N$ be such that $u + 1$ divides $q-1$, and $m \in \N$ be a natural number. Let $K$ be the Kakeya set of degree $u$ given by \autoref{thm: kakeya explicit} over $\F^m$ and let $f(\vx)$ be a polynomial of degree less than $d$ in each variable with coefficients in $\F$. 
 
 Then, there is a deterministic algorithm that takes as input the set $K$ and the coefficient vector of $f$ and outputs the evaluation of $f$ at every point in $K$ in time 
 \[
 O(|K| + d^m)\cdot  \poly(m, d, q) \, .
         \]
 \end{lemma}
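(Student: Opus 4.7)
The plan is to reduce the evaluation on $K$ to a small number of product-set evaluations, and then invoke \autoref{lem:FFT-over-ring} on each of them.

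First, I would recall from the proof of \autoref{thm: kakeya explicit} that the Kakeya set $K$ admits an explicit decomposition $K = \bigcup_{\tau \in \F_q} S_\tau^m$, where $S_\tau = \{(\alpha/(u+1)+\tau)^{u+1} - \tau^{u+1} : \alpha \in \F_q\} \subseteq \F_q$. Each $S_\tau$ has cardinality at most $(q-1)/(u+1)+1$, and there are at most $q$ such product sets. In particular, the algorithm of \autoref{thm: kakeya explicit} not only produces $K$ but also the sets $S_\tau$ in time $O(q|K|)$, so we may assume we have them in hand.

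Next, for each $\tau \in \F_q$, I would apply \autoref{lem:FFT-over-ring} to the polynomial $f$ and the subset $S_\tau \subseteq \F$, obtaining the evaluations of $f$ at every point of the product set $S_\tau^m$ in time $O(d^m + |S_\tau|^m) \cdot \poly(m, d, \log q)$. Since $S_\tau^m \subseteq K$, we have $|S_\tau|^m \leq |K|$, so this bound is at most $O(d^m + |K|) \cdot \poly(m, d, \log q)$.

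Finally, I would collect the evaluations from all $\tau \in \F_q$. Running the product-set evaluation $q$ times and taking the union of outputs yields the evaluations of $f$ on all of $K$. The total time is
\[
q \cdot O(d^m + |K|)\cdot \poly(m,d,\log q) = O(d^m + |K|)\cdot \poly(m,d,q),
\]
matching the claimed bound. Strictly speaking, points of $K$ lying in the intersection of two product sets $S_\tau^m \cap S_{\tau'}^m$ would receive multiple (consistent) evaluations; this is harmless and can be deduplicated in the same time budget. I do not anticipate any real obstacle here: the lemma is essentially an immediate combination of the product-set structure of $K$ exposed by \autoref{thm: kakeya explicit} with the product-set multipoint evaluation algorithm of \autoref{lem:FFT-over-ring}, and the only mild bookkeeping is verifying that $|S_\tau|^m \leq |K|$ so that the $|K|$ term in the target complexity absorbs the per-product-set cost.
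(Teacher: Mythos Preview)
Your proposal is correct and follows essentially the same approach as the paper: decompose $K$ into the $q$ product sets $S_\tau^m$ coming from \autoref{thm: kakeya explicit}, apply \autoref{lem:FFT-over-ring} to each, and absorb the factor $q$ into the $\poly(m,d,q)$ term. The paper's proof is slightly terser but identical in substance; your extra remarks about $|S_\tau|^m \le |K|$ and deduplication are fine bookkeeping that the paper leaves implicit.
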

\begin{proof}

Recall $K$ is a union of at most $q$ product sets, $K= \bigcup\limits_{\tau \in \F_q} \underbrace{S_{\tau} \times S_{\tau} \times \cdots \times S_{\tau}}_{ m \, \text{times}}$, here $S_{\tau}$ is as defined in proof of \autoref{thm: kakeya explicit}.  For each $\tau \in \F_q$, by using \autoref{lem:FFT-over-ring}, we can evaluate $f$ on $S_{\tau}$  in time $(d^m + |S_{\tau}|^m)\cdot \poly(m, d, \log |\F|)$. Thus, we can evaluate $f$ on every point in $K$ in time  $O(|K| + d^m)\cdot  \poly(m, d, q)  $.
\end{proof}

\subsection{Fast Multipoint Evaluation over Nice Finite Fields}

\begin{theorem}\label{thm: mme over nice fields}
Let $\F$ be a finite field of size $q$ and let $d, \Tilde{d}, m \in \N$ be such that $\tilde{d}\in [0.8d, d] $ and $\tilde{d}-1$ divides $q-1$.

Then, there is an algorithm that given a homogeneous  $m$-variate polynomial  in $\F[\vx]$ of degree less than $d$ in every variable  and a set of $N$ input points in $\F^m$, outputs the evaluation of this polynomial on these inputs in time 
\[
(d^m + N)\cdot \Theta(1)^m\cdot \poly(q, m, d) \, .
\]
\end{theorem}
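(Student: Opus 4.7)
The plan is to implement the Kakeya-set-with-multiplicities strategy sketched in \autoref{sec: BKW overview}: preprocess by evaluating all low-order Hasse derivatives of $f$ on an explicit low-degree Kakeya set, and then answer each query $\va_i$ by restricting $f$ to a curve whose leading coefficient is $\va_i$ and using Hermite interpolation together with the homogeneity of $f$ to read off $f(\va_i)$.

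Concretely, set $u := (q-1)/(\tilde d - 1) - 1$, so $u+1$ divides $q-1$ and \autoref{thm: kakeya explicit} produces an explicit Kakeya set $K \subseteq \F^m$ of degree $u$ with $|K| \leq \tilde d^{\,m} \leq d^m$ in time $O(q\,d^m)$, together with functions $g_0,\dots,g_{u-1}:\F^m\to\F^m$ giving, for every $\va \in \F^m$, the curve
\[
C_\va(y) := g_0(\va) + g_1(\va)\,y + \cdots + g_{u-1}(\va)\,y^{u-1} + \va\,y^u,
\]
whose image $\{C_\va(\tau) : \tau \in \F\}$ lies entirely in $K$. In the preprocessing stage I will use \autoref{lem:computing-hasse-derivation} to compute every $\hpartial_\ve(f) \in \hpartial^{\leq 2m}(f)$; by \autoref{prop:binomial-estimation} there are at most $\binom{3m}{m} = \Theta(1)^m$ of them, so this costs $\Theta(1)^m \cdot d^m \cdot \poly(m, d, \log q)$. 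I will then evaluate each such derivative at every point of $K$ via \autoref{lem: evaluation on kakeya sets}, costing another $\Theta(1)^m \cdot d^m \cdot \poly(m, d, q)$.

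For the per-point stage, fix $\va_i$ and consider $R(y) := f(C_{\va_i}(y)) \in \F[y]$. Because $f$ is homogeneous of some degree $D \leq m(d-1)$, $R(y)$ has degree exactly $Du$ and its leading coefficient equals $f(\va_i)$, so recovering $R(y)$ suffices. For each $\tau \in \F$, the chain rule for Hasse derivatives (\autoref{lem:hasse-derivative}) assembles the first $2m+1$ Hasse derivatives of $R$ at $\tau$ from the already-computed values $\{\hpartial_\ve(f)(C_{\va_i}(\tau)) : |\ve|_1 \leq 2m\}$ in time $\Theta(1)^m \cdot \poly(q,d,m)$. Aggregated over all $\tau \in \F$ this is Hermite data of total budget $(2m+1)q$, and provided $(2m+1)q > Du$ I can invoke \autoref{lem:hermitian-interpolation} to recover $R(y)$ in $\poly(q,d,m)$ time and then simply extract its leading coefficient. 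Summing over the $N$ queries gives the per-point cost $N \cdot \Theta(1)^m \cdot \poly(q,d,m)$.

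The main thing to verify, and the step that pins down the constant $0.8$ in the hypothesis $\tilde d \in [0.8d, d]$, is that the multiplicity order $2m$ indeed outpaces the curve-restriction degree $Du$. Since $Du \leq m(d-1)(q-1)/(\tilde d - 1)$ and $\tilde d \geq 0.8\,d$, for all sufficiently large $d$ one has $Du/q \leq md/(\tilde d - 1) < 1.25\,m < 2m+1$, so the Hermite interpolation step succeeds; if instead one tried to get by with $o(m)$ Hasse orders, the degree bound would fail. Combining the three phases gives the claimed $(d^m + N) \cdot \Theta(1)^m \cdot \poly(q, d, m)$ running time.
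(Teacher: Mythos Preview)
Your proposal is correct and follows essentially the same approach as the paper: build the degree-$u$ Kakeya set with $u=(q-1)/(\tilde d-1)-1$, evaluate all Hasse derivatives in $\hpartial^{\leq 2m}(f)$ on it, restrict to the curve through each query point, recover the univariate via Hermite interpolation, and read off the leading coefficient using homogeneity. The only cosmetic differences are that the paper packages your chain-rule step into \autoref{lem:algo4} rather than citing \autoref{lem:hasse-derivative} directly, and bounds the restriction degree simply by $2m(q-1)$ rather than tracing the constant $0.8$ explicitly as you do.
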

\begin{proof}
Let $f$ be the input polynomial and let $\va_1, \va_2, \ldots, \va_N \in \F^m$ be the input points of interest. 

At a high level, the algorithm here is similar in structure to that in \cite{BGKM21}. We first evaluate the polynomial on an appropriate product set $\mathcal P$ in nearly linear time using \autoref{lem:FFT-over-ring} in the preprocessing phase. Next, in the local computation step, we look at the restriction of $f$ on a curve $C_{\va}$ \emph{through} any point $\va \in \F^m$ of interest. Based on the construction of the aforementioned product set $\mathcal P$, we will guarantee  that there is a curve $C_{\va}$ through $\va$ such that the intersection of $C_{\va}$ with the set $\mathcal P$ is sufficiently large, so that the univariate polynomial obtained by restricting $f$ to $C_{\va}$ can be uniquely \emph{decoded} using the evaluation of $f$ on $\mathcal P$. We then use this decoded polynomial to obtain $f(\va)$. 

Despite this high-level similarity, there are some technical differences between the algorithm here and that in \cite{BGKM21}. Primarily, these differences arise due to the fact that unlike the setting in \cite{BGKM21}, we are no longer working over fields of small characteristic. So, the construction of the set $\mathcal P$ is different here and is based on the ideas in \cite{BKW19}. We now specify the details, starting with the description of the algorithm. 

\paragraph{The algorithm.}
\begin{enumerate}
    \item From the coefficient vector of $f$, compute each of Hasse derivatives of $f$ of order at most $2m$. 
    \item Using \autoref{thm: kakeya explicit}, we construct a Kakeya set $K$ of order $u = (q-1)/(\tilde{d}-1) - 1$. As is necessary, $u+1$ divides $q-1$. Note that, $|K| \leq \Tilde{d}^{(m+1)}$.
    \item For every Hasse derivative $\tilde{f}$ of $f$ order at most $2m$, evaluate $\tilde{f}$ on $K$ using \autoref{lem: evaluation on kakeya sets}. 
    \item For every $i \in [N]$: \begin{enumerate}
        \item We consider the univariate polynomial $R_i(y)$ obtained by the restriction of $f$ on the curve $G_{\va_i}(y)$. This is a univariate polynomial of degree at most $(d-1)m\cdot (q-1)/(\tilde{d}-1) <  2m(q-1)$. Using \autoref{lem:algo4}, compute the evaluation of $R_i(y)$ and all its $\leq 2m$ order Hasse derivatives on $\F$. 
        \item Since degree of $R_i$ is less than $2m(q-1)$, and we have the evaluation of $R_i$ and all its derivatives of order at most $2m$ on $q$ points, we can recover $R_i$ uniquely from this information. In particular, we use \autoref{lem:hermitian-interpolation} to recover $R_i(y)$. 
    \item We output $f(\va_i)$ to be equal to the leading coefficient of $R_i(y)$. 

    \end{enumerate}

\end{enumerate}

\paragraph{Running time.}
There are a total of $\binom{m+2m}{m}$ Hasse derivatives of order at most $2m$ of $f$. By \autoref{lem:computing-hasse-derivation}, each Hasse derivative takes time $d^m \cdot \poly(m,d,\log q)$. Thus, the total time is bounded by $\Theta(d)^m \cdot \poly(m,d,\log q)$. By \autoref{thm: kakeya explicit}, we can compute the explicit Kakeya set of degree $u:=\frac{q-1}{\tilde{d}-1}-1$ in time $O(|K|q) = \Theta(d)^m \cdot q $.    
For the time complexity of Step 3, by  \autoref{lem: evaluation on kakeya sets},  we can evaluate $f$ and all its Hasse derivatives of $f$ of order at most $2m$ in time $$\binom{m+2m}{m} \cdot \left( \Tilde{d}^{m+1} + d^m  \right) \cdot \poly(m,d,q) = \Theta{(d)}^m \cdot  \poly(m,d,q).$$ For the loop in Step 4, the time complexity is bounded by ${\Theta(1)}^m \cdot N \cdot \poly(d,q)$ by \autoref{lem:algo4}. By Hermite interpolation  (\autoref{lem:hermitian-interpolation}), we can recover $R_i(y)$ for each $i$ in time $\poly(m,q)$. Thus, the total time complexity is bounded by $ (d^m+ N)\cdot \Theta(1)^m \cdot \poly(m,d,q)$.


\paragraph{Correctness.}
Note that, for any \textit{homogenous} polynomial $f (\vecx) $, the leading coefficient of $R_i(y)=f(g_0(\va_i) + g_1(\va_i) \cdot y + \cdots + g_{r-1}(\va_i) \cdot y^{r-1} + \va \cdot y^r)$ is  $f(\va_i)$.
Thus, given the polynomial $R_i(y)$, then we get $f(\va_i)$ by directly reading off  its leading coefficient. Since the degree of $R_i(y)$ is $\leq 2m(q-1)$, via Hermite interpolation (\autoref{lem:hermitian-interpolation}) it suffices to know the evaluations of $\leq 2m$ order Hasse derivatives of  $R_i(y)$ on $\F_q$. 
Finally, note that, we have access to evaluations of $\leq 2m$ order partial derivatives of  $R_i(y)$ for all $i \in [N]$ because of \autoref{lem:algo4} and \autoref{lem: evaluation on kakeya sets}. This concludes the proof.
\end{proof}

\section{The First Algorithm over Rings of the Form $\nbmodulo{r}$}
\label{sec: algo 1 prime fields}
With the necessary background in place, we are now ready to describe our first algorithm for fast multivariate multipoint evaluation over rings of the form $\nbmodulo{r}$. This already handles the case of prime fields, and contains most of our main ideas. 
 Later, in  \autoref{sec: algo 1 extension fields}, we discuss the case of extension rings which will complete the proof of  
\autoref{thm: main informal}. Also, the algorithm in \autoref{sec: algo 1 extension fields} crucially relies on the algorithm for the $\nbmodulo{r}$ case and an idea of Kedlaya and Umans \cite{Kedlaya11}.   

\subsection{The Description of the Algorithm}

{
\centering
\begin{minipage}{\algwidth}
\begin{algorithm}[H]
\caption{The First Algorithm over Rings of the Form $\nbmodulo{r}$}
\label{algo:polynomial-evaluation-v1}
Algorithm \textsc{MME-A}$(f(\vx),\va_1,\va_2,\ldots, \va_N, r)$

\medskip
\noindent where $f$ is an $m$-variate \emph{homogeneous} polynomial over $\nbmodulo{r}$ of individual degree less than $d$ and $\va_1, \va_2, \ldots, \va_N \in \modulo{r}^m$ are evaluation points.
\begin{enumerate} 
    \item Let $F \in \Z[\vx]$ be the $m$-variate homogeneous polynomial of individual degree less than $d$ obtained from $f$ by replacing each of its coefficients with its natural lift in the set $\llbracket r\rrbracket$ of integers. 
    \item For every $i \in [N]$, let $\tilde{\va}_i \in \llbracket r\rrbracket^m$ be the lift of $\va_i \in \modulo{r}^m$ to the integers. 
    \item Let $M = d^m r^{dm}$. We invoke \autoref{lem: product of primes in an AP} with parameters $d$ and $M$ and obtain a natural number $\tilde{d} \in [0.8d, d]$ and primes $p_1, p_2, \ldots, p_k$ where $k \leq d^2(\log M)^3$, each $p_i \leq d^3(\log M)^3$ and is congruent to $1$ modulo $\tilde{d}$, and $\prod_{i \in [k]}p_i > M$.  
    \item For $j \in [k]$, let $f_j(\vx) \in \F_{p_i}[\vx]$ be the $m$-variate homogeneous polynomial of individual degree less than $d$ obtained by reducing each of the coefficients of $F$ modulo the prime $p_j$. Similarly, for every $i \in [N]$, let $\va_{i, j} \in \F_{p_j}^m$ be obtained by reducing each of the coordinates of $\tilde{\va}_i$ modulo $p_j$.   
    \item For every $j \in [k]$, invoke the algorithm in \autoref{thm: mme over nice fields} for the polynomial $f_j$, input points $\{\va_{i, j} : i \in [N]\}$ and parameters $d, \tilde{d}$ as above, and get $f_j(\va_{i,j})$ for all $j\in[k]$ and $i\in[N]$. Note that each $f_j$ is a homogeneous polynomial, and from the guarantees of \autoref{lem: product of primes in an AP}, $\tilde{d}$ is in the range  $[0.8d, d]$ and $\tilde{d} -1 $ divides $p_i - 1$ as needed by \autoref{thm: mme over nice fields}. 
    \item For every $i \in [N]$, use the Chinese Remainder Theorem (\autoref{thm:CRT}) to compute  $F(\tilde{\va}_i)$ from $\{f_j(\va_{i, j}) : j \in [k]\}$. 
    \item For every $i \in [N]$, output $f(\va_i) = F(\tilde{\va}_i) \mod r$.
    \end{enumerate}
\end{algorithm}
\end{minipage}
}

\subsection{The Correctness of Algorithm \ref{algo:polynomial-evaluation-v1}}
The correctness of the algorithm essentially follows from the correctness of the building blocks in \autoref{sec: building blocks} and preliminary notions like \autoref{thm:CRT}. We now elaborate on some of the details. 

In its first two steps, Algorithm \ref{algo:polynomial-evaluation-v1} lifts the problem of multipoint evaluation given over the ring $\nbmodulo{r}$ to an instance over $\Z$ by naturally identifying the elements of $\nbmodulo{r}$ with the set $\llbracket r\rrbracket$ of integers. This lifted instance is given by the polynomial $F \in \Z[\vx]$ and the points $\{\tilde{\va}_i : i \in [N] \}$. Thus, to correctly solve the original problem over $\nbmodulo{r}$, it suffices to correctly solve this lift over $\Z$ and then reduce the outputs modulo $r$, since for every $i \in [N]$, $f(\va_i) = F(\tilde{\va}_i) \mod r$. Hence, it suffices to argue that the computation of $F(\tilde{\va}_i)$ is correct for every $i \in [N]$. 

Since $F$ is an $m$-variate polynomial of degree less than $d$ in each variable with every coefficient being in the set $\llbracket r\rrbracket$, and for every $i \in [N]$, each coordinate of $\tilde{\va}_i$ is also in the set $\llbracket r\rrbracket$, we get that $F(\tilde{\va}_i)$ is a natural number of absolute value at most $d^m(r-1)^{dm}$. Thus, for $M = d^mr^{dm} > d^m(r-1)^{dm}$, it suffices to compute $F(\tilde{\va}_i)$ modulo $M$. 

This computation modulo $M$ is done by working modulo distinct primes $p_1, p_2, \ldots, p_k$, each not too large, such that their product exceeds $M$. Moreover, these primes are carefully chosen using \autoref{lem: product of primes in an AP} which additionally ensures that there is a $\tilde{d} \in [0.8d, d]$ such that each of these primes is in the arithmetic progression $(1, 1 + \tilde{d}, 1 + 2\tilde{d}, \ldots )$. 

Given this additional structure on the primes, we next use \autoref{thm: mme over nice fields} to  solve the problem of evaluating the polynomial $f_j = F \mod p_j$ on inputs $\{\va_{i, j} : i \in [N]\}$, where $\va_{i, j} = \tilde{\va}_i \mod p_j$ over the field $\F_{p_j}$. Since $f_j$ is homogeneous and the divisibility condition needed in the hypothesis of \autoref{thm: mme over nice fields} holds, we have that this step works correctly and within the desired time guarantees.  In other words, for every $i \in [N]$, the inputs to the Chinese Remainder Step (Step $6$) of the algorithm are all correct. Thus, each of the evaluations of $F$ and hence, those of $f$ output by the algorithm are correct.

\subsection{The Time Complexity of Algorithm \ref{algo:polynomial-evaluation-v1}}
\label{subsec:time-complexity-algo1}
To bound the time complexity, we bound the time complexity of each of the steps of the algorithm. 

The first two steps of the algorithm essentially require no additional computation beyond a reading of the input. We just semantically re-interpret the coefficients of $f$ and the coordinates of $\va_i$ to be over the integers. Thus, these can be done in time $(d^m + N)\cdot \poly(\log r, m, d)$. From the choice of $M$ in Step $3$ and the guarantees in \autoref{lem: product of primes in an AP}, it follows that the time complexity of Step $3$ is at most  $\poly(\log M, d) = \poly(d, m, \log r)$. Step $4$ again takes at most $(d^m + N)\cdot\poly(d, m, \log r)$ time since each of the primes $p_i$ has absolute value at most $\poly(d, m, \log r)$ from \autoref{lem: product of primes in an AP}. For every $j \in [k]$, it follows from \autoref{thm: mme over nice fields} that multipoint evaluation of the polynomial $f_j$ on inputs $\{\va_{i, j} : i \in [N]\} \subseteq \F_{p_j}^m$ takes at most $(d^m + N)\cdot \Theta(1)^m \cdot \poly(d, m, p_j) = (d^m + N)\cdot\Theta(1)^m \cdot \poly(d, m, \log r)$ from the bound on the magnitude of $p_j$. However, we note that to use \autoref{thm: mme over nice fields}, we need the inputs in this function call to satisfy the  divisibility condition in the hypothesis of the theorem, namely that $\tilde{d} - 1$ divides $p_j -1$. But we have already argued this while arguing the correctness of the algorithm.

Step $6$ is a straightforward application of the Chinese Remainder Theorem and using \autoref{thm:CRT}, we can do this in time at most $N \cdot \poly(k, \max_j (\log p_j)) \leq N\cdot \poly(d, m, \log r)$. Finally, the last step is just a sequence of $N$ integer divisions involving numbers of bit complexity at most $\log M$, and hence can be implemented in time $N\cdot \poly(\log M) = N \cdot \poly(d, m, \log r)$. 

Thus, the overall time complexity of the algorithm is at most $\left(d^m + N \right)\cdot\Theta(1)^m \cdot \poly(d, m, \log r)$. 

We summarize the above discussion in the following theorem.
\begin{theorem}
\label{thm:first-algo-ring}
Let $f(\vx)$ be a homogeneous $m$-variate polynomial over $\nbmodulo{r}$ of individual degree less than $d$. Let $\va_1,\va_2,\ldots, \va_N$ be $N$ points from $\modulo{r}^m$. Then, given $(f, \va_1,\va_2,\ldots, \va_N,r)$ as the input to Algorithm \ref{algo:polynomial-evaluation-v1}, it computes $f(\va_i)$ for all $i\in[N]$ in time $$(d^m+N)\cdot\Theta(1)^m\cdot \poly(m,d,\log r).$$
\end{theorem}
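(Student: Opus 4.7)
My plan is to separate the proof into a correctness argument and a running time analysis, both of which essentially track through the seven steps of Algorithm~\ref{algo:polynomial-evaluation-v1}.

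For correctness, I would first observe that Steps~1 and 2 just re-interpret the input through the natural identification of $\nbmodulo{r}$ with $\llbracket r \rrbracket \subset \Z$, so it suffices to show that the algorithm correctly computes $F(\tilde{\va}_i)$ over $\Z$ for every $i\in[N]$; once this is done, Step~7 recovers $f(\va_i)=F(\tilde{\va}_i)\bmod r$ as needed. Since each coefficient of $F$ and each coordinate of $\tilde{\va}_i$ lies in $\llbracket r\rrbracket$, and $F$ is $m$-variate of individual degree less than $d$, one gets the (crude but sufficient) bound $0\leq F(\tilde{\va}_i)<d^m(r-1)^{dm}<M$, so it suffices to compute $F(\tilde{\va}_i)\bmod M$. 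This is precisely what Steps 3--6 do via multimodular reduction: by \autoref{lem: product of primes in an AP} we obtain distinct primes $p_1,\dots,p_k$ with $\prod_j p_j>M$ and a common $\tilde d\in[0.8d,d]$ such that $\tilde d$ divides $p_j-1$ for every $j$. For each $j$, reducing $F$ and $\tilde{\va}_i$ modulo $p_j$ gives $f_j$ and $\va_{i,j}$ with $f_j(\va_{i,j})\equiv F(\tilde{\va}_i)\pmod{p_j}$; since $f_j$ inherits homogeneity from $F$ and the divisibility hypothesis of \autoref{thm: mme over nice fields} is met by construction, Step~5 correctly outputs all $f_j(\va_{i,j})$. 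Then \autoref{thm:CRT} in Step~6 recovers $F(\tilde{\va}_i)\bmod\prod_j p_j$, which equals $F(\tilde{\va}_i)$ by the size bound, and correctness follows.

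For the running time, I would bound each step individually, observing that the parameters coming out of \autoref{lem: product of primes in an AP} satisfy $\log M=O(dm\log r)$, each $p_j\leq \poly(d,m,\log r)$, and $k\leq \poly(d,m,\log r)$. Steps~1--2 and 4 are syntactic re-interpretations/reductions and cost $(d^m+N)\cdot\poly(d,m,\log r)$; Step~3 is $\poly(d,\log M)=\poly(d,m,\log r)$ by \autoref{lem: product of primes in an AP}. The main contribution to the running time is Step~5, which calls \autoref{thm: mme over nice fields} at most $k$ times, each call costing $(d^m+N)\cdot\Theta(1)^m\cdot \poly(p_j,d,m)=(d^m+N)\cdot\Theta(1)^m\cdot\poly(d,m,\log r)$, so in aggregate the same bound (up to a $\poly$ factor absorbing $k$). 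Steps~6 and 7 are $N\cdot\poly(k,\max_j\log p_j)=N\cdot\poly(d,m,\log r)$. Summing all contributions gives the claimed $(d^m+N)\cdot\Theta(1)^m\cdot\poly(m,d,\log r)$.

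The only step that required any ideas beyond bookkeeping is the invocation of \autoref{thm: mme over nice fields}, and I expect the main conceptual hurdle (already isolated in \autoref{sec: building blocks}) to be precisely the need to simultaneously secure two incompatible-looking properties of the primes used in multimodular reduction: smallness (so that each individual field evaluation is nearly linear in $d^m+N$) and the divisibility structure ($\tilde d\mid p_j-1$) that lets the Kakeya-set-based algorithm apply uniformly across all $p_j$. Once \autoref{lem: product of primes in an AP} packages this for us, the rest of the proof is a mechanical composition of the preliminaries above, which is why I would present the argument essentially as annotated bookkeeping on top of Algorithm~\ref{algo:polynomial-evaluation-v1}.
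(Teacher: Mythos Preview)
Your proposal is correct and follows essentially the same approach as the paper: the paper's proof of \autoref{thm:first-algo-ring} is literally the preceding two subsections, which establish correctness by tracking the lift-to-$\Z$/multimodular-reduction/CRT structure of Algorithm~\ref{algo:polynomial-evaluation-v1} and bound the running time step by step, with Step~5 (the call to \autoref{thm: mme over nice fields}, enabled by the divisibility guarantee from \autoref{lem: product of primes in an AP}) identified as the dominant cost. Your write-up matches this in both structure and content.
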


\section{The First Algorithm over Extension Rings}
\label{sec: algo 1 extension fields}
In this section, we extend the fast multivariate multipoint evaluation algorithm discussed in the previous section to over extension rings of the form $\modulo{r}[z]/\ideal{E(z)}$ for a polynomial $E(z) \in \nbmodulo{r}[z]$, and in particular over all finite fields. This will prove our 
\autoref{thm: main informal}. Here, the underlying ring is of the form $\modulo{r}[z]/\ideal{E(z)}$ where $E(z)$ is a monic polynomial of degree at most $e-1$. The idea for this extension is essentially the same as that used by  Kedlaya and Umans in \cite[\defaultS4.3]{Kedlaya11} to extend their algorithm over rings of the form $\nbmodulo{r}$ to other extension rings. They essentially reduce an instance of multivariate multipoint evaluation over extension rings to an instance over rings of the form $\nbmodulo{r'}$. Then, they invoke their multivariate multipoint evaluation algorithm over $\nbmodulo{r'}$. For our proof, we use our Algorithm \ref{algo:polynomial-evaluation-v1} for solving the problem over $\nbmodulo{r'}$. The improved dependence on the number of variables for our algorithm thus just follows from the improved dependence on the number of variables in the complexity of Algorithm \autoref{algo:polynomial-evaluation-v1}. The rest of the steps are the same as \cite{Kedlaya11}. We now describe the steps of the algorithm.  

\subsection{The Description of the Algorithm}

{
\centering
\begin{minipage}{\algwidth}
\begin{algorithm}[H]
\caption{The First Algorithm over Extension Rings}
\label{algo:polynomial-evaluation-v1-ER}
Algorithm \textsc{MME-for-extension-rings-A}$(f(\vx),\va_1,\ldots,\va_N)$

\medskip
\noindent where $R$ is the underlying ring represented as $\modulo{r}[z]/\ideal{E(z)}$ such that $E(z)$ is a degree $e$ monic polynomial over $\nbmodulo{r}$, $f(\vx)$ is an $m$-variate \emph{homogeneous} polynomial over $R$ of individual degree less than $d$, and $\va_1,\va_2,\ldots, \va_N$ are points in $R^m$.

\medskip
\noindent Let $M:=d^m(e(r-1))^{(d-1)m+1}+1$ and $r':=M^{(e-1)dm+1}$.
\begin{enumerate}
\item Let $F(\vx)\in\Z[z][\vx]$ be the $m$-variate homogeneous polynomial of individual degree less than $d$ obtained by replacing each coefficient of $f(\vx)$ (which is a polynomial in $\nbmodulo{r}[z]$) with its natural lift to a  polynomial in $z$ over the integers by identifying $\nbmodulo{r}$ with the set of integers $\llbracket r \rrbracket$. 
Similarly, for every $i\in[N]$, let $\tilde \va_i\in\Z[z]^m$ be the lift of the point $\va_i\in R^m$.

\item Let $\overline{f}(\vx)$ be the polynomial computed from $F(\vx)$ by reducing its coefficients (which are elements of $\Z[z]$) modulo  $z-M$ and $r'$, i.e., for each of these polynomials in $\Z[z]$, we first set $z = M$ and then reduce the result modulo $r'$. Similarly, for all $i\in[N]$, $\overline\va_i$ is the point obtained from $\tilde\va_i$ by reducing each of its coordinates modulo $z-M$ and $r'$. 

Note that from the choice of $r'$, going modulo $r'$ does not change anything computationally, but formally reduces the problem to an instance of multivariate multipoint evaluation over $\nbmodulo{r'}$. 

\item Call the function \textsc{MME-A}$(\overline f,\overline\va_1,\overline\va_2,\ldots,\overline\va_N, r'))$ in Algorithm \ref{algo:polynomial-evaluation-v1} and get $\overline b_i=\overline f(\overline\va_i)$ for all $i\in[N]$.

\item For all $i \in [N]$, from $\overline b_i$ compute the unique polynomial $Q_i(z)$ in $\Z[z]$ of degree at most $(e-1)dm$ and coefficients are in $\llbracket M\rrbracket$ such that $Q_i(M)$ is congruent to $\overline{b_i}$ modulo $r'$. In other words, we compute  base $M$ representation of the natural number $\overline b_i$. Finally, we compute $Q_i(z)$ modulo $r$ and $E(z)$ and get $f(\va_i)$ for all $i\in[N]$. 
\end{enumerate}
\end{algorithm}
\end{minipage}
}

\subsection{The Correctness of Algorithm \ref{algo:polynomial-evaluation-v1-ER}}
\label{sec:correctness-poly-eval-1-ER}
We show that Algorithm \ref{algo:polynomial-evaluation-v1-ER} successfully computes $f(\va_i)$ for all $i\in[N]$. From the first step of the above algorithm, we observe that for all $i\in[N]$, $f(\va_i)$ equals  $F(\tilde \va_i)$ modulo $r$ and $E(z)$. Therefore, at the end of step 1, Algorithm \ref{algo:polynomial-evaluation-v1-ER} reduces the problem of computing $f(\va_i)$ for all $i\in[N]$ over the extension ring to the problem of computing $F(\tilde\va_i)$ for all $i\in[N]$ over integers. The next natural step is to solve this problem of computing $F(\tilde\va_i)$ for all $i\in[N]$ using Algorithm \ref{algo:polynomial-evaluation-v1}. To this end, we further reduce this problem of computing $F(\tilde \va_i)$ for $i\in[N]$ to an instance of multivariate multipoint evaluation over rings of the form $\nbmodulo{r'}$. 

From the construction of $F$, we know that the coefficients of $F(\vx)$ and the coordinates of $\tilde\va_i$ are all polynomials in $z$ of degree at most $e-1$, and  the coefficients of these polynomials are integers in the set $\llbracket r\rrbracket$. Thus, $F(\tilde\va_i)$ is a polynomial in $z$ of degree at most $(e-1)dm$ and each of its coefficients is a non-negative integer of absolute value at most $d^m \cdot (e(r-1))^{(d-1)m}\cdot (r-1) \leq d^m \cdot (e(r-1))^{(d-1)m+1}$, which by our choice of $M$ is at most $M-1$. Recall that an arbitrary polynomial $P \in \Z[z]$ with non-negative integer coefficients of absolute value less than $M$ can be uniquely (and efficiently) recovered given its evaluation at $M$: just construct the base $M$ representation of $P(M)$ (or equivalently $P(z) \mod (z-M)$), and read off the digits of such a representation. Based on this, we set ourselves the goal of computing $F(\tilde\va_i)$ by computing $F(\tilde\va_i) \mod (z-M)$. From the choice of $r'$, $r'$ is strictly larger than the integer $F(\tilde\va_i) \mod (z-M)$ and hence it suffices to compute $F(\tilde\va_i) \mod (z-M)$ while working modulo $r'$. Note that this is equal to $\overline f(\overline\va_i)$ in our notation. This reduction is done in step $2$ of the above algorithm and reduces the original problem to an instance of multipoint evaluation over $\nbmodulo{r'}$. This computation, in turn is done in step $3$ of the algorithm using Algorithm \autoref{algo:polynomial-evaluation-v1}. The output of this step is $\overline b_i=\overline f(\overline\va_i)$ for all $i\in[N]$. From $\overline b_i$, we get $F(\tilde\va_i)$ (which is the same as $Q_i(z)$ in the algorithm) by computing the representation of the number $\overline b_i$ with respect to the base $M$. From this, we get $f(\va_i)$ easily by reducing $Q_i(z)$ modulo $r$ and $E(z)$. This completes the proof of correctness of the algorithm. 



\subsection{The Time Complexity of Algorithm \ref{algo:polynomial-evaluation-v1-ER}}
In step 1 of Algorithm \ref{algo:polynomial-evaluation-v1-ER}, the cost of lifting the polynomial $f(\vx)$ and the points $\va_1,\va_2,\ldots,\va_N$ to over integers is $O((d^m+mN)e\log r)$. Step 2 of the algorithm computes the polynomial $\overline f(\vx)$ which is $F(\vx)$ modulo $z-M$ and $r'$. Computing $F(\vx)$ modulo $z-M$ is the same as evaluating each of its coefficients at $z=M$. Since each coefficient of $F(\vx)$ is a polynomial in $z$ of degree at most $e-1$ and coefficients are in the set of integers $\llbracket r\rrbracket$, evaluating each coefficient of $F(\vx)$ at $z=M$ can be done in time $e\cdot \poly( \log r,\log M)$ which is at most $\poly(d,m,e,\log r)$. Also, note that computationally reduction modulo $r'$  does not involve any actual computation since $r'$ is very large and hence the cost of computing $\overline f(\vx)$ is $d^m\cdot \poly(d,m,e,\log r)$. Similarly, computing $\overline\va_i$ for all $i\in[N]$ can be done in time  $N\cdot\poly(d,m,e,\log r)$. Thus, the cost of the step 2 of Algorithm \ref{algo:polynomial-evaluation-v1-ER} is $(d^m+N)\cdot \poly(d,m,e,\log r)$. Step 3 of the algorithm invokes the function \textsc{MME-A}$(\overline f,\overline \va_1,\overline\va_2,\ldots,\overline \va_N, r')$ in Algorithm \ref{algo:polynomial-evaluation-v1} and it runs in time $(d^m+N)\cdot\Theta(1)^m \cdot \poly(d,m,e,\log r)$ (see \autoref{thm:first-algo-ring}). 
In step 4, for each $i\in[N]$, computing the polynomial $Q_i[z]$ from $\overline b_i$ is the same as computing a base $M$ representation of the integer $\overline b_i$ of absolute value at most $r' \leq \exp(\poly(e,d,m,\log r))$. This can be done in time $\poly(d,m,e,\log r)$. From $Q_i[z]$, computing $f(\va_i)$ involves reduction modulo $r$ and $E(z)$ and this takes time $\poly(d, m, e,\log r)$. Hence, the  total cost of the step 4 is $N\cdot \poly(d,m,e,\log r)$. 

Thus, the overall time complexity of Algorithm \ref{algo:polynomial-evaluation-v1-ER} is at most 
$(d^m+N)\cdot \Theta(1)^m\cdot \poly(d,m,e,\log r)$.

\subsection{Proof of Theorem \ref{thm: main informal}}

Now, we describe the proof of \autoref{thm: main informal}.

\begin{proof}[Proof of \autoref{thm: main informal}]
Let $p$ be the characteristic of the underlying finite field $\F$ and $|\F|=p^e$. Then, we can assume that $\F$ is represented by $\modulo{p}[z]/\ideal{E(z)}$ for some irreducible monic polynomial over $\nbmodulo{p}$. Let $f(\vx)$ be the input polynomial with $m$ variables and degree less than $d$ in each variable, and $\va_1,\va_2,\ldots,\va_N\in\F^m$ be the input points. An $m$-variate polynomial of individual degree less than $d$ can have at most $(d-1)m+1$ many homogeneous components. For all $j\in\llbracket(d-1)m+1\rrbracket$, let $f_j$ be the degree $j$ homogeneous component of $f$. Each $f_j$ can be computed in time $d^m\cdot \poly(d,m)$ by counting the degree of each monomial in $f(\vx)$. Therefore, the total cost of computing all $f_j$'s is $d^m\cdot\poly(d,m)$. Now for each $j\in\llbracket(d-1)m+1\rrbracket$, we invoke Algorithm \ref{algo:polynomial-evaluation-v1-ER} with input $(f_j(\vx),\va_1,\va_2,\ldots,\va_N)$ and compute $f_j(\va_i)$ for all $i\in[N]$ in time $(d^m+N)\cdot\Theta(1)^m\cdot \poly(m,d,\log |\F|)$. Hence, the total cost of this step is bounded by $(d^m+N)\cdot\Theta(1)^m\cdot \poly(m,d,\log |\F|)$. Note that $f(\va_i)=\sum_{j=0}^{(d-1)m}f_j(\va_i)$. This implies that for each $i\in[N]$, the cost of computing $f(\va_i)$ from $f_j(\va_i)$'s is $\poly(m,d,\log|\F|)$. Therefore, the total cost incurred by this step is $N\cdot\poly(m,d,\log|\F|)$. Thus, the total time taken to compute $f(\va_i)$ for all $i\in[N]$ is $(d^m+N)\cdot\Theta(1)^m\cdot\poly(m,d,\log|\F|)$ which is $(d^m+N)^{1+o(1)}\cdot\poly(m,d,\log|\F|)$.  
\end{proof}


\section{The Second Algorithm over Rings of the Form $\nbmodulo{r}$}
\label{sec:algo-2-rings}

The main result of this section is the following theorem.

\begin{restatable}{theorem}{multimodularmain}
\label{thm:multimodular-main}
Over $\nbmodulo{r}$, for all $m\in\N$ and sufficiently large $d\in \N$, there exists a deterministic algorithm that outputs the evaluation of an $m$-variate polynomial of degree less than $d$ in each variable on $N$ points in time $$(d^m+N)^{1+o(1)}\cdot \poly(m,d,\log r),$$ provided that $\log^{\circ c}r\leq d^{o(1)}$ for some fixed constant $c\in\N$. 
\end{restatable}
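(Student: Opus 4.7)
The plan is to execute the recursive multimodular strategy outlined in the overview, combining Chinese Remaindering with \emph{prime-power} moduli and a Hasse-derivative-based base case over product sets. The three building blocks are (a) a fast base case over $\nbmodulo{p^m}$ when $p$ is small, (b) a recursive reduction that reduces evaluation over $\nbmodulo{r}$ to a bounded number of instances over rings $\nbmodulo{p_j^m}$, and (c) an analysis showing that under the hypothesis $\log^{\circ c} r \leq d^{o(1)}$ the recursion depth is $O(c)$, a constant.

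\textbf{Base case over $\nbmodulo{p^m}$.} I would first compute all Hasse derivatives $\hpartial_{\ve}(f)$ with $|\ve|_1 < m$ via \autoref{lem:computing-hasse-derivation}; there are $\binom{2m-1}{m} \leq 4^m$ of them, each constructed in $d^m \cdot \poly(m,d,\log(p^m))$ time. Setting $S := \llbracket p \rrbracket$, viewed inside $\nbmodulo{p^m}$, evaluate each such Hasse derivative on the product set $S^m$ in time $(d^m + p^m) \cdot \poly(m,d,\log(p^m))$ via \autoref{lem:FFT-over-ring}. For every query point $\va \in \modulo{p^m}^m$, its coordinatewise reduction modulo $p$ yields $\vb \in S^m$ with $\va - \vb \equiv 0 \pmod p$ in each coordinate; the Taylor identity \eqref{eqn:overview-2} then expresses $f(\va)$ as a $\binom{2m-1}{m}$-term sum over the pre-computed values $\hpartial_{\ve}(f)(\vb)$.

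\textbf{Recursive step.} For an instance over $\nbmodulo{r}$ with $r$ too large for the base case, lift $f$ and each $\va_i$ to $\Z[\vx]$ and $\Z^m$, so that $F(\tilde\va_i)$ is a non-negative integer less than $M := d^m(r-1)^{dm}+1$. By \autoref{lem:primes}, one can select distinct primes $p_1, \ldots, p_k$ of magnitude at most $16 \lceil d\log r\rceil$ satisfying $\prod_j p_j > M^{1/m}$, equivalently $\prod_j p_j^m > M$. For each $j$, reduce the coefficients of $F$ modulo $p_j^m$ to obtain $f_j \in \nbmodulo{p_j^m}[\vx]$; rather than reducing each $\tilde\va_i$ modulo $p_j^m$, I would reduce each coordinate only modulo $p_j$, obtaining $\tilde\va'_i \in \llbracket p_j \rrbracket^m$, which corresponds to a point $\va'_i \in \modulo{p_j^m}^m$ differing from $\va_i \bmod p_j^m$ by multiples of $p_j$ coordinate-wise. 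Recursively solve the multipoint evaluation problem for each of the $\binom{2m-1}{m}$ polynomials $\hpartial_{\ve}(f_j)$ at the points $\{\va'_i\}_{i \in [N]}$ over $\nbmodulo{p_j^m}$; assemble $f_j(\va_i \bmod p_j^m) \equiv F(\tilde\va_i) \pmod{p_j^m}$ via \eqref{eqn:overview-2}, then recover $F(\tilde\va_i)$ across all $j$ using \autoref{thm:CRT}, and finally reduce modulo $r$ to produce $f(\va_i)$.

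\textbf{Depth and cost.} Each recursion level (i) multiplies the number of sub-instances by $k \cdot \binom{2m-1}{m} = O(d\log r) \cdot 2^{O(m)}$ and (ii) replaces the ambient modulus $r$ by some $p_j^m$, with $\log(p_j^m) = O(m \log(d\log r))$. Crucially, because the coordinates of the recursively propagated points $\tilde\va'_i$ have magnitude only $p_j$ rather than $p_j^m$, the analog of $M$ in the next level is bounded by $d^m \cdot p_j^{dm}$, so the primes chosen at the next level remain small as well; without the $\bmod p_j$ trick this bound would blow up and force large primes, defeating the recursion. Under the hypothesis $\log^{\circ c} r \leq d^{o(1)}$, iterated application of the logarithm shows that $\log r$ decays to $d^{o(1)}$ in $O(c)$ levels, after which the base case applies with per-instance cost $(d^m + N) \cdot \poly(m,d)$. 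With $c$ fixed, the total multiplicative overhead across levels is $\bigl(2^{O(m)} \cdot d\log r\bigr)^{O(c)} = 2^{O(m)} \cdot \poly(m,d,\log r) = d^{o(m)} \cdot \poly(m,d,\log r)$, using $2^{O(m)} = d^{o(m)}$ for sufficiently large $d$; this is absorbed into the target $(d^m+N)^{1+o(1)}\cdot \poly(m,d,\log r)$. The principal obstacle I foresee is the careful propagation of the $N$-dependent costs across levels, ensuring that the linear-in-$N$ term is not multiplied by a $d^m$-factor (the Hasse derivative multiplicities are paid once, in the polynomial side, not per query), and formally verifying that the modulus truly contracts to $d^{O(1)}$ within $O(c)$ levels under the stated tower-height hypothesis rather than growing with $m$.
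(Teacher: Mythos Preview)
Your proposal is correct and follows essentially the same approach as the paper: the paper's algorithm \textsc{MME-B} is parameterized by $(r,s,t)$ with the reduction-mod-$r$ and the Taylor reconstruction \eqref{eq:eval2} performed at the \emph{start} of each recursive call rather than in the caller, but this is only a cosmetic reorganization of the same three ingredients (prime-power CRT, Hasse derivatives of order $<m$, product-set evaluation on $\llbracket r\rrbracket^m$). Your identification of the key invariant---that the propagated point coordinates lie in $\llbracket p_j\rrbracket$ rather than $\llbracket p_j^m\rrbracket$, forcing the next-level $M$ to be $d^m p_j^{dm}$ and hence the next primes to be $O(d\log p_j)$ independent of $m$---is exactly the content of the paper's recurrence and \autoref{claim:technical-bound}, and your depth-$O(c)$ and $2^{O(m)}$-overhead analysis matches the derivation of \autoref{thm:multimodular} from which \autoref{thm:multimodular-main} follows by setting $t=\min\{c,\log^\star r-1\}$ and $s=1$.
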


We need the following lemma. It gives a way of computing the evaluation of $f(\vx)$ over a ring $R$ at a point $\va$ from the evaluations of Hasse derivatives of $f(\vx)$ at another point $\var b$, provided that the coordinates of $\va-\var b$ are in a nilpotent ideal of $R$.

\begin{lemma}\label{lem:lift-from-remainder}
Let $f(\vx)$ be an $m$-variate polynomial over a commutative ring $R$. Let $I$ be an ideal of $R$ and $s$ be a positive integer such that $I^s=0$. Let $\va=(a_1,\dots, a_m),\var b=(b_1,\dots,b_m)\in R^m$ such that $a_i\equiv b_i\pmod{I}$ for $i\in [m]$. Then
\[
f(\va)=\sum_{\ve\in\N^m: |\ve|_1<s} \hpartial_{\ve}(f)(\var b)\cdot (\va-\var b)^\ve.
\]
\end{lemma}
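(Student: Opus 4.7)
The plan is to prove the identity by first establishing a multivariate Taylor expansion of $f$ around $\vb$ in terms of Hasse derivatives, and then using the nilpotence hypothesis $I^s = 0$ to truncate the expansion at total order less than $s$.

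First, I would start from the very definition of the Hasse derivative (Definition \ref{def:hasse-derivative}): if we introduce a fresh tuple of variables $\vz = (z_1, \dots, z_m)$ and view $f(\vx + \vz)$ as an element of $(R[\vx])[\vz]$, then by definition
\[
f(\vx + \vz) = \sum_{\ve \in \N^m} \hpartial_{\ve}(f)(\vx)\, \vz^{\ve},
\]
where the sum is finite because $f$ has finite degree. This is a polynomial identity in $R[\vx, \vz]$, so it remains valid under any ring-homomorphic specialization of $\vx$ and $\vz$ to elements of $R$. Specializing $\vx \mapsto \vb$ and $\vz \mapsto \va - \vb$ yields the full (untruncated) Taylor expansion
\[
f(\va) = \sum_{\ve \in \N^m} \hpartial_{\ve}(f)(\vb)\,(\va - \vb)^{\ve}.
\]

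Next, I would show that every term with $|\ve|_1 \geq s$ vanishes. The hypothesis says $a_i - b_i \in I$ for each $i \in [m]$, so for any $\ve = (e_1, \dots, e_m) \in \N^m$ we have
\[
(\va - \vb)^{\ve} = \prod_{i=1}^{m} (a_i - b_i)^{e_i} \in I^{e_1 + \cdots + e_m} = I^{|\ve|_1},
\]
using that $I$ is an ideal and hence closed under products and multiplication by arbitrary ring elements. When $|\ve|_1 \geq s$, we have $I^{|\ve|_1} \subseteq I^s = 0$, so $(\va - \vb)^{\ve} = 0$ and the corresponding term in the Taylor expansion is zero. Dropping these zero terms from the sum gives exactly the claimed identity
\[
f(\va) = \sum_{\ve \in \N^m : |\ve|_1 < s} \hpartial_{\ve}(f)(\vb)\,(\va - \vb)^{\ve}.
\]

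There is no real obstacle here; the only subtlety worth being careful about is that Taylor expansion in terms of Hasse derivatives is a purely formal polynomial identity over $\Z$ (no denominators, no invertibility assumptions), which is the whole reason Hasse derivatives are the right tool in this setting. Once that identity is in hand, the nilpotence of $I$ does all the truncation work, and no characteristic assumption on $R$ is needed.
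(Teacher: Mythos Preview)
Your proof is correct and follows essentially the same approach as the paper: expand $f(\va)=f(\vb+(\va-\vb))$ using the definition of Hasse derivatives, then observe that $(\va-\vb)^\ve\in I^{|\ve|_1}$ vanishes whenever $|\ve|_1\ge s$. The paper's proof is just a more terse version of exactly this argument.
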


\begin{proof}
By the definition of Hasse derivatives,  
\[
f(\va)=f(\var b+(\va-\var b))=\sum_{\ve\in\N^m} \hpartial_{\ve}(f)(\var b)\cdot (\va-\var b)^\ve.
\]
The above sum is well-defined since $ \hpartial_{\ve}(f)=0$ when $|\ve|_1$ is sufficiently large. The lemma follows by noting that $(\va-\var b)^\ve$ is in the ideal $I^{|\ve|_1}$, which is zero when $|\ve|_1\geq s$.
\end{proof}

\subsection{A Basic Algorithm}

We first describe a basic algorithm,  \textsc{MME-Product-Set},  that evaluates a polynomial $f(\vx)\in \modulo{r^s}[\vx]$ at $N$ points in $\modulo{r^s}^m$ simultaneously. Its time complexity is not good enough, but this will be improved in later subsections.

{
\centering
\begin{minipage}{\algwidth}
\begin{algorithm}[H]
\caption{Basic Algorithm}
\label{algo:polynomial-evaluation-v2-basic}
Algorithm \textsc{MME-Product-Set}$(f(\vx),\va_1,\va_2,\ldots, \va_N, r, s)$

\medskip
\noindent where $f(\vx)$ is an $m$-variate polynomial over $\nbmodulo{r^s}$ of individual degree at most $d-1$, $\va_1,\va_2\ldots,\va_N$ are evaluation points in $\modulo{r^s}^m$,  and $s\in [m]$.
\begin{enumerate}
    \item For all $\ve\in \N^m$ with $|\ve|_1<s$, use \autoref{lem:computing-hasse-derivation} to compute $f_{\ve}(\vx):=\hpartial_{\ve}(f)(\vx)\in \modulo{r^s}[\vx]$.
    \item For all $\ve\in\N^m$ with $|\ve|_1<s$, use \autoref{lem:FFT-over-ring} to compute $f_{\ve}(\va)\in \nbmodulo{r^s}$ for $\va\in  \llbracket r\rrbracket^m$, where $\llbracket r\rrbracket$ is identified with a subset of $\nbmodulo{r^s}$ via $i\mapsto i+r^s\Z$. 
    \item For all $i\in[N]$, compute $\bar{\va}_i\in \llbracket r\rrbracket^m\subseteq \modulo{r^s}^m$ such that the coordinates of  $\bar{\va}_i$ are the remainders of the corresponding coordinates of $\va_i$ modulo $r$.
  \item   For all $i\in [N]$, compute $f(\va_{i})$ by
    \begin{equation}\label{eq:eval}
    f(\va_i)= \sum_{\ve\in\N^m: |\ve|_1<s} f_\ve(\bar{\va}_i)\cdot (\va_i-\bar{\va}_i)^\ve \in \nbmodulo{r^s}
    \end{equation}
    and output it.
    \end{enumerate}
\end{algorithm}
\end{minipage}
}

\begin{lemma}\label{lem:basic-algorithm}
Given the input $(f(\vx),\va_1,\va_2,\ldots, \va_N, r, s)$, the algorithm \textsc{MME-Product-Set} computes $f(\va_i)$ for all $i\in [N]$ in time  $O(\binom{m+s-1}{s-1}(d^m+r^m+N)) \cdot \poly(m,d,\log r)$.
\end{lemma}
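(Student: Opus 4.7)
The plan is to verify the two claims in the lemma separately: first correctness of the output, then the running time bound.

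For correctness, I would apply \autoref{lem:lift-from-remainder} in the ring $R = \nbmodulo{r^s}$ with the ideal $I = (r)/(r^s) \subseteq \nbmodulo{r^s}$. Since $I^s = (r^s)/(r^s) = 0$, the hypothesis of the lemma is satisfied with this value of $s$. The construction of $\bar{\va}_i$ in Step~3 ensures that each coordinate of $\va_i - \bar{\va}_i$ lies in $I$ (as coordinate-wise it is divisible by $r$ in $\nbmodulo{r^s}$). Applying \autoref{lem:lift-from-remainder} with $\va = \va_i$ and $\var b = \bar{\va}_i$ then yields exactly the identity \eqref{eq:eval} used in Step~4, so each output value equals $f(\va_i)$ as required.

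For the running time, I would bound each of the four steps, using the fact that the number of tuples $\ve \in \N^m$ with $|\ve|_1 < s$ equals $\binom{m+s-1}{m} = \binom{m+s-1}{s-1}$. In Step~1, each Hasse derivative $\hpartial_\ve(f)$ is computed by \autoref{lem:computing-hasse-derivation} in time $O(d^m)\cdot \poly(m,d,\log r^s)$, and since $s \leq m$ we may absorb the $\log r^s$ factor into $\poly(m,d,\log r)$; the total cost is $\binom{m+s-1}{s-1}\cdot O(d^m)\cdot \poly(m,d,\log r)$. In Step~2, applying \autoref{lem:FFT-over-ring} with $S = \llbracket r\rrbracket$ to each $f_\ve$ costs $O(d^m + r^m)\cdot\poly(m,d,\log r)$, so the total is $\binom{m+s-1}{s-1}\cdot O(d^m + r^m)\cdot \poly(m,d,\log r)$.

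Step~3 reduces each of the $mN$ coordinates modulo $r$ in the ring $\nbmodulo{r^s}$, contributing only $N\cdot\poly(m,\log r)$. Step~4 evaluates \eqref{eq:eval} for each of the $N$ points: each of the $\binom{m+s-1}{s-1}$ terms costs a lookup of the already-computed $f_\ve(\bar{\va}_i)$ (Step~2 evaluated on all of $\llbracket r\rrbracket^m$, which contains $\bar{\va}_i$) plus $\poly(m,s,\log r)$ arithmetic in $\nbmodulo{r^s}$, giving a total of $N\cdot\binom{m+s-1}{s-1}\cdot\poly(m,d,\log r)$. Summing all four contributions gives exactly the claimed bound $O\!\left(\binom{m+s-1}{s-1}(d^m+r^m+N)\right)\cdot\poly(m,d,\log r)$.

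I do not anticipate any real obstacle here: the correctness is a direct instantiation of \autoref{lem:lift-from-remainder}, and the time bound is a sum of the costs of four previously established subroutines. The only mildly subtle point is recognizing that Step~2 evaluates on the entire product set $\llbracket r\rrbracket^m$, so the lookups needed in Step~4 are free; and that one must identify $\llbracket r\rrbracket$ with a set of canonical lifts in $\nbmodulo{r^s}$ so that the evaluations in Step~2 genuinely take place in $\nbmodulo{r^s}$, matching the ring in which \eqref{eq:eval} is computed.
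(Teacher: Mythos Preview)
Your proposal is correct and follows essentially the same approach as the paper: correctness via \autoref{lem:lift-from-remainder} applied to the ideal $I=r\Z/r^s\Z$ of $\nbmodulo{r^s}$, and the time bound obtained by summing the costs of the four steps using \autoref{lem:computing-hasse-derivation} and \autoref{lem:FFT-over-ring}. Your write-up is slightly more detailed (e.g., justifying the absorption of $\log r^s$ into $\poly(m,d,\log r)$ via $s\le m$, and noting the lookup in Step~4 is free), but there is no substantive difference.
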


\begin{proof}
Let $I=r\Z/r^s\Z$. Then $I^s=0$ and the coordinates of $\va_i-\bar{\va}_i$ are all in $I$.
So \eqref{eq:eval} holds by  \autoref{lem:lift-from-remainder}. This shows that the algorithm correctly computes $f(\va_i)$ for $i\in [N]$.

Step 1 takes time $O(\binom{m+s-1}{s-1}d^m)\cdot\poly(m,d,\log r)$ by \autoref{lem:computing-hasse-derivation}.
And Step 2 takes time  $O(\binom{m+s-1}{s-1}(d^m+r^m)) \cdot \poly(m,d,\log r)$ by \autoref{lem:FFT-over-ring}.
Step 3 takes time $O(N)\cdot\poly(m,\log r)$.
Finally, Step 4 takes time $O(\binom{m+s-1}{s-1}N) \cdot \poly(m,\log r)$.
So the total time complexity is $O(\binom{m+s-1}{s-1}(d^m+r^m+N)) \cdot \poly(m,d,\log r)$.
\end{proof}

\subsection{The Description of the Algorithm}

We describe the second algorithm \textsc{MME-B} now.

{
\centering
\begin{minipage}{\algwidth}
\begin{algorithm}[H]
\caption{The Second Algorithm over $\nbmodulo{r^s}$}
\label{algo:polynomial-evaluation-v2}
Algorithm \textsc{MME-B}$(f(\vx),\va_1,\va_2,\ldots, \va_N, r,s,t)$

\medskip
\noindent where $f(\vx)$ is an $m$-variate polynomial over $\nbmodulo{r^s}$ of individual degree at most $d-1$, $\va_1,\va_2\ldots,\va_N$ are evaluation points in $\modulo{r^s}^m$, $s\in [m]$, and $t\geq 0$ is the depth of the reduction tree.
\begin{enumerate}
    \item If $t=0$, invoke \textsc{MME-Product-Set} with input $(f(\vx),\va_1,\va_2,\ldots, \va_N, r, s)$ to compute $f(\va_i)$ for $i\in [N]$, and return.
    \item For all $\ve\in\N^m$ with $|\ve|_1<s$, use \autoref{lem:computing-hasse-derivation} to compute $f_\ve(\vx):=\hpartial_{\ve}(f)(\vx)\in\modulo{r^s}[\vx]$, and then compute a lift $\tilde{f}_\ve(\vx)\in\Z[\vx]$ of $f_\ve(\vx)$ with coefficients in $\llbracket r^s\rrbracket$. 
    \item For all $i\in[N]$, compute $\tilde{\va}_i\in \llbracket r\rrbracket^m$ such that the coordinates of  $\tilde{\va}_i$ are the remainders of the corresponding coordinates of  $\va_i$ modulo $r$, and compute $\bar{\va}_i:=\tilde{\va}_i\bmod r^s \in\modulo{r^s}^m$.
    \item Let $M:=d(r-1)^d$. Find primes $p_1<p_2<\dots<p_k$ less than or equal to $16\log M$ such that $\prod_{j=1}^k p_j>M$.
    \item For  all $\ve\in\N^m$ with $|\ve|_1<s$ and $j\in[k]$, compute $f_{\ve, j}(\vx):=\tilde{f}_\ve(\vx) \bmod p_j^m \in \modulo{p_j^m}[\vx]$.
    \item For all $i\in[N]$ and $j\in[k]$, compute $\va_{i,j}:= \tilde \va_i \bmod p_j^m \in \modulo{p_j^m}^m$.
    
    \item For all $\ve\in\N^m$ with $|\ve|_1<s$ and $j\in[k]$, invoke \textsc{MME-B} with input $(f_{\ve, j}, \va_{1,j},\va_{2,j},\ldots, \va_{N,j}, p_j, m, t-1)$ to compute $f_{\ve,j}(\va_{i,j})\in\nbmodulo{p_j^m}$ for  $i\in[N]$.
    
    
    \item For all  $\ve\in\N^m$ with $|\ve|_1<s$ and $i\in[N]$, use the Chinese Remainder Theorem (\autoref{thm:CRT}) to compute $\tilde{f}_\ve(\tilde \va_i)$ as the unique $Q_i\in \left\llbracket\prod_{j=1}^k p_j^m\right\rrbracket$ such that $Q_i\bmod p_j^m=f_{\ve,j}(\va_{i,j})$ for $j\in [k]$, and then compute $f_\ve(\bar{\va}_i)=\tilde{f}_\ve(\tilde \va_i)\bmod r^s\in  \nbmodulo{r^s}$.  
    \item  For all $i\in[N]$, compute and output 
    \begin{equation}\label{eq:eval2}
    f(\va_i)= \sum_{\ve\in\N^m: |\ve|_1<s} f_\ve(\bar{\va}_i)\cdot (\va_i-\bar{\va}_i)^\ve \in \nbmodulo{r^s}.
    \end{equation}
\end{enumerate}
\end{algorithm}
\end{minipage}
}

\subsection{The Correctness of Algorithm \ref{algo:polynomial-evaluation-v2}}

\label{subsec:correctness-algo2}

We prove the correctness of the algorithm \textsc{MME-B} (Algorithm \ref{algo:polynomial-evaluation-v2}), as stated by the following claim.

\begin{claim}
\label{claim:correctness-algo2}
Given the input $(f(\vx),\va_1,\va_2,\ldots, \va_N, r, s, t)$, the algorithm  \textsc{MME-B} computes $f(\va_i)$ for all $i\in[N]$.
\end{claim}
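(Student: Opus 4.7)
I would prove the claim by induction on the depth $t$ of the reduction tree. The base case $t=0$ is immediate, since the algorithm just calls \textsc{MME-Product-Set}, and \autoref{lem:basic-algorithm} certifies correctness there. For the inductive step, I would assume that every depth-$(t-1)$ invocation of \textsc{MME-B} returns the correct output and consider a depth-$t$ call on $(f,\va_1,\dots,\va_N,r,s,t)$.

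The heart of the argument will be the evaluation identity \eqref{eq:eval2}. To establish it, I would apply \autoref{lem:lift-from-remainder} with $R=\nbmodulo{r^s}$ and nilpotent ideal $I=r\Z/r^s\Z$ (so $I^s=0$): Step~3 constructs $\bar{\va}_i$ as the coordinate-wise reduction of $\va_i$ modulo $r$, re-embedded in $\modulo{r^s}$, so the coordinates of $\va_i-\bar{\va}_i$ all lie in $I$, and the Taylor expansion truncates at $|\ve|_1<s$. Granting this, it suffices to verify that the algorithm correctly computes $f_\ve(\bar{\va}_i)$ for every $\ve$ with $|\ve|_1<s$ and every $i\in[N]$.

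Next I would reduce the computation of $f_\ve(\bar{\va}_i)$ to an integer computation via lifting, and then recover it via Chinese remaindering. Step~2 lifts $f_\ve=\hpartial_\ve(f)$ to $\tilde{f}_\ve\in\Z[\vx]$ with coefficients in $\llbracket r^s\rrbracket$, and Step~3 lifts the point to $\tilde{\va}_i\in\llbracket r\rrbracket^m$, so that $\tilde{f}_\ve(\tilde{\va}_i)\bmod r^s=f_\ve(\bar{\va}_i)$. Three sub-claims then need verification: (i) $\tilde{f}_\ve(\tilde{\va}_i)\bmod p_j^m=f_{\ve,j}(\va_{i,j})$, which is immediate from the commutativity of evaluation with ring homomorphisms and the definitions in Steps~5--6; (ii) the recursive call of Step~7 correctly returns $f_{\ve,j}(\va_{i,j})$, which follows from the inductive hypothesis applied to the ring $\modulo{p_j^m}$ with depth $t-1$; and (iii) the CRT output $Q_i\in\left\llbracket\prod_{j=1}^k p_j^m\right\rrbracket$ of Step~8 agrees with $\tilde{f}_\ve(\tilde{\va}_i)$ modulo $r^s$. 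Step~9 then assembles the $f_\ve(\bar{\va}_i)$'s via \eqref{eq:eval2} to produce $f(\va_i)$.

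The hard part will be sub-claim (iii), which requires controlling $|\tilde{f}_\ve(\tilde{\va}_i)|$ against $\prod_j p_j^m$. Using the individual-degree bound on $f$, the coefficient bound $r^s-1$ on $\tilde{f}_\ve$, and the coordinate bound $r-1$ on $\tilde{\va}_i$, one gets $0 \le \tilde{f}_\ve(\tilde{\va}_i) \le d^m(r^s-1)(r-1)^{(d-1)m}$. The choice $M=d(r-1)^d$ together with \autoref{lem:primes} guarantees $\prod_j p_j^m>M^m=d^m(r-1)^{dm}$. I would finish with a short case analysis: in the top-level invocation with $s=1$, the two bounds already satisfy $\prod_j p_j^m > \tilde{f}_\ve(\tilde{\va}_i)$, so CRT identifies $\tilde{f}_\ve(\tilde{\va}_i)$ exactly and reducing modulo $r^s$ gives the desired answer; in recursive invocations with $r$ prime and $s=m$, since the $p_j$ in Step~4 are chosen to be the smallest primes with product exceeding $M$, the outer modulus $r$ itself appears among them, so $r^s\mid\prod_j p_j^m$, and hence reducing $Q_i$ modulo $r^s$ yields the correct value of $\tilde{f}_\ve(\tilde{\va}_i)\bmod r^s$ even when the integer $\tilde{f}_\ve(\tilde{\va}_i)$ itself is not uniquely determined by its residues.
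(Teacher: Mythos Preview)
Your overall plan matches the paper's proof closely: induction on $t$, base case via \autoref{lem:basic-algorithm}, the Taylor identity \eqref{eq:eval2} via \autoref{lem:lift-from-remainder} with $I=r\Z/r^s\Z$, correctness of recursive calls by the inductive hypothesis, and CRT reconstruction. Sub-claims (i) and (ii) are exactly as in the paper.

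Your treatment of sub-claim (iii), however, diverges from the paper and contains a genuine gap. You propose a case analysis: for the top-level call ($s=1$) the direct size bound suffices, while for recursive calls ($r$ prime, $s=m$) you assert that $r$ must occur among the primes $p_1,\dots,p_k$ chosen in Step~4, so that $r^s\mid\prod_j p_j^m$. This assertion is not justified. The algorithm only stipulates that $p_1<\cdots<p_k\le 16\log M$ with $\prod_j p_j>M$; it does not require the $p_j$ to be an initial segment of the primes, and even if it did, there is no guarantee that $r\le p_k$. Concretely, with $d=2$ and $r=11$ one has $M=d(r-1)^d=200$, and the primes $2,3,5,7$ (product $210>M$) are a valid choice for Step~4 not containing $r=11$. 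Beyond this, your case split only covers the particular $(r,s)$ pairs that arise along one execution path, whereas the claim is stated for all inputs with $s\in[m]$, so even if each case were correct you would not be establishing the claim as stated.

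The paper avoids any case analysis by giving a single uniform size bound: using $s\le m$, it bounds
\[
\tilde f_\ve(\tilde{\va}_i)\ \le\ d^m(r^s-1)(r-1)^{(d-1)m}\ <\ d^m r^{dm}
\]
and then compares this to $\prod_j p_j^m$. This is the intended route; you should replace your case analysis with this direct inequality. (As written, the paper's final comparison $d^m r^{dm}<\prod_j p_j^m$ is slightly loose against $M=d(r-1)^d$; it becomes exact if one sets $M\ge d r^d$, which changes nothing in the algorithm or its analysis.)
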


\begin{proof}
 We prove the claim via induction on $t$. When $t=0$, the algorithm invokes  \textsc{MME-Product-Set} in Step 1 to compute  $f(\va_i)$ for $i\in[N]$ and the claim holds by \autoref{lem:basic-algorithm}.
 
 Now consider $t\geq 1$ and assume the claim holds for $t'=t-1$. 
 
 In Step 2, we compute the Hasse derivatives $f_\ve(\vx)=\hpartial_{\ve}(f)(\vx)$ of $f(\vx)$ and then lift them to $\tilde{f}_\ve(\vx)\in\Z[\vx]$.  In Step 3, we compute $\tilde{\va}_i\in \llbracket r\rrbracket^m$ from $\va_i$ and then compute $\bar{\va}_i:=\tilde{\va}_i\bmod r^s\in\modulo{r^s}^m$. In Step 4, we compute the primes $p_j$, whose existence follows from \autoref{lem:primes}. In Step 5 and Step 6, we compute $f_{\ve, j}(\vx):=\tilde{f}_\ve(\vx) \bmod p_j^m$ and $\va_{i,j}:= \tilde \va_i \bmod p_j^m$ respectively. 
 
 In Step 7, we invoke  \textsc{MME-B} with input $(f_{\ve, j}, \va_{1,j},\va_{2,j},\ldots, \va_{N,j}, p_j, m, t')$ where $t'=t-1$. By the induction hypothesis, this correctly returns $f_{\ve,j}(\va_{i,j})\in\nbmodulo{p_j^m}$ for  $i\in[N]$.
 
 In Step 8, we use the Chinese Remainder Theorem (\autoref{thm:CRT}) to compute the unique $Q_i\in \left\llbracket\prod_{j=1}^k p_j^m\right\rrbracket$ for $i\in [N]$ such that 
 \[
 Q_i\bmod p_j^m=f_{\ve,j}(\va_{i,j}) \quad \text{for $j\in [k]$}.
 \]
We claim $\tilde{f}_\ve(\tilde \va_i)=Q_i$.
As $f_{\ve, j}(\vx)=\tilde{f}_\ve(\vx) \bmod p_j^m$ and $\va_{i,j}:= \tilde \va_i \bmod p_j^m$, we do have 
\[
\tilde{f}_\ve(\tilde \va_i)\bmod p_j^m=f_{\ve,j}(\va_{i,j}) \quad \text{for $j\in [k]$}.
\]
To prove the claim, it remains to show  that  $\tilde{f}_\ve(\tilde \va_i)\in  \left\llbracket\prod_{j=1}^k p_j^m\right\rrbracket$. As $\tilde{\va}_i\in \llbracket r\rrbracket^m$,  $\tilde{f}_\ve(\vx)$ is a polynomial of total degree at most $(d-1)m$ with at most $d^m$ monomials, and the coefficients of these monomials are in $\llbracket r^s\rrbracket$, we see that $\tilde{f}_\ve(\tilde \va_i)$ is a non-negative integer bounded by $d^m \cdot (r^s-1) \cdot (r-1)^{(d-1)m}<d^m r^{dm}<\prod_{j=1}^k p_j^m$. Here we use the facts  $s\leq m$ and $\prod_{j=1}^k p_j>M=d(r-1)^d$. So $\tilde{f}_\ve(\tilde \va_i)=Q_i$.

We also compute $f_\ve(\bar{\va}_i)=\tilde{f}_\ve(\tilde \va_i)\bmod r^s\in  \nbmodulo{r^s}$ in Step 8. This equality holds since $\tilde{f}_\ve(\vx)\bmod r^s=f_\ve(\vx)$ and $\tilde \va_i\bmod r^s=\bar{\va}_i$.

Finally, in Step 9, we compute $f(\va_i)$ from the evaluations $f_\ve(\bar{\va}_i)$ for $i\in [N]$ using \eqref{eq:eval2}.
Let $I=r\Z/r^s\Z$. Then $I^s=0$ and the coordinates of $\va_i-\bar{\va}_i$ are all in $I$ by the choice of $\bar{\va}_i$.
So \eqref{eq:eval} holds by  \autoref{lem:lift-from-remainder}. This proves the correctness of the algorithm.
\end{proof}

\subsection{The Time Complexity of Algorithm \ref{algo:polynomial-evaluation-v2}}
\label{subsec:time-complexity-algo2}
We now analyze the time complexity of Algorithm \ref{algo:polynomial-evaluation-v2}. 
When $t=0$, the algorithm only executes Step 1, and its time complexity is 
$O(\binom{m+s-1}{s-1}(d^m+r^m+N)) \cdot \poly(m,d,\log r)
$ by \autoref{lem:basic-algorithm}. 

Now assume $t\geq 1$.
Step 2 takes time $O(\binom{m+s-1}{s-1}d^m)\cdot\poly(m,d,\log r)$ by \autoref{lem:computing-hasse-derivation}.
Step~3 takes time $O(N)\cdot\poly(m,\log r)$.  In Step 4, we compute the primes $p_j$ using the Sieve of Eratosthenes \cite[\defaultS5.4]{S09}, which takes time $\tilde{O}(\log M)\leq \poly(d,\log r)$.

Using  $k\leq p_k=O(\log M)=O(d\log r)$, we see that Step 5 takes time $O(\binom{m+s-1}{s-1}d^m)\cdot \poly(m,d,\log r)$, and Step 6 takes time $O(N)\cdot \poly(m,d,\log r)$.
Step 8 takes time $O(\binom{m+s-1}{s-1}N)\cdot\poly(m,d,\log r)$ by \autoref{thm:CRT}.
Step 9 takes time $O(\binom{m+s-1}{s-1}N)\cdot\poly(m,d,\log r)$.

So the total time complexity of Steps 2--6 and 8--9 is 
\[
O\left(\binom{m+s-1}{s-1}(d^m+N)\right)\cdot\poly(m,d,\log r).
\]
Let $T(r,s,t)$ be the time complexity of the algorithm. We have 
\begin{equation}\label{eq:relation1}
T(r,s,0)=O\left(\binom{m+s-1}{s-1}(d^m+r^m+N)\right) \cdot \poly(m,d,\log r).
\end{equation}
And for $t\geq 1$,
\begin{equation}\label{eq:relation2}
T(r,s,t)\leq O\left(\binom{m+s-1}{s-1}\log M\right) T(r',m,t-1) +O\left(\binom{m+s-1}{s-1}(d^m+N)\right)\poly(m,d,\log r).
\end{equation}
where $r'=O(\log M)=O(d\log r)$.

For convenience, we define the function
\[
\lambda_k(x):=\prod_{i=0}^{\min\{k, \log^\star x-1\}} \log^{\circ i}x.
\] 
Observe that $ \lambda_k(x)=x^{1+o(1)}$ and that $\lambda_i(x)\leq \lambda_j(x)$ for $i\leq j$ and $x\geq 1$.

\begin{claim}\label{claim:technical-bound}
Let $r_0=r\geq 2$ and $r_i=c d\log r_{i-1}$ for $i\geq 1$, where $c\geq 1$ is a constant.
Then $r_k\leq c'\lambda_k(d)\log^{\circ k} r$ for all $0\leq k<\log^{\star} r$ and a sufficiently large constant $c'\geq 1$.
\end{claim}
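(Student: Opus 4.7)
The proof will proceed by induction on $k$. The key quantitative fact driving the argument is that $\log \lambda_{k-1}(d)$ is essentially just $\log d$ (up to a constant factor), because the iterated logarithms in the expansion $\log \lambda_{k-1}(d) = \log d + \log\log d + \log\log\log d + \cdots$ decrease extremely rapidly, so the whole sum is at most $2\log d$ for $d$ sufficiently large. The second key fact is that $\lambda_k(d) \geq d\log d$ for every $k \geq 1$ (since $\lambda_k$ always contains the factors $d$ and $\log d$). Together these two observations will let us absorb every incurred multiplicative constant into a single $c'$ independent of $k$.

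The base case $k=0$ is immediate: $r_0 = r \leq c' d r = c' \lambda_0(d) \log^{\circ 0} r$ for any $c' \geq 1$. For the inductive step, assume $r_{k-1} \leq c' \lambda_{k-1}(d) \log^{\circ(k-1)} r$. Then
\[
r_k = cd \log r_{k-1} \leq cd\bigl(\log c' + \log \lambda_{k-1}(d) + \log^{\circ k} r\bigr).
\]
Next I would bound each of the three terms on the right. For $d$ sufficiently large (depending on $c'$), $\log c' \leq \log d$; by the observation above, $\log \lambda_{k-1}(d) \leq 2\log d$. Combining these gives $r_k \leq cd\bigl(3\log d + \log^{\circ k} r\bigr)$. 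Because $k < \log^\star r$ we have $\log^{\circ k} r \geq 1$, and $\log d \geq 1$, so the elementary inequality $a+b \leq 2ab$ for $a,b\geq 1$ yields
\[
r_k \leq 8 c \, d \log d \cdot \log^{\circ k} r.
\]

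To finish, I would invoke the second key fact: for $k \geq 1$ and $d$ large, $\lambda_k(d) \geq d\log d$ (this holds in both regimes $k \leq \log^\star d - 1$ and $k \geq \log^\star d$, since in the latter case $\lambda_k(d) = \lambda_{\log^\star d - 1}(d)$ still contains both $d$ and $\log d$ as factors). Therefore
\[
r_k \leq 8c \,\lambda_k(d) \log^{\circ k} r \leq c' \lambda_k(d) \log^{\circ k} r,
\]
provided $c'$ is chosen at least $8c$. The main (and only) delicate point is making sure the multiplicative constant does not blow up with $k$; this is exactly what the bound $\log \lambda_{k-1}(d) \leq 2\log d$ accomplishes, since it replaces what could have been a $k$-dependent term with a uniform $O(\log d)$. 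Once $c'$ is chosen uniformly to dominate both $8c$ and the base case, the induction closes for every $k < \log^\star r$.
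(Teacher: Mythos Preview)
Your proof is correct and follows the same inductive skeleton as the paper's: induct on $k$, expand $\log r_{k-1}$ into $\log c' + \log\lambda_{k-1}(d) + \log^{\circ k}r$, and bound each piece. The only real difference is in how the middle term is controlled. The paper compares $cd\log\lambda_{k-1}(d)$ directly to $(c'/2)\lambda_k(d)$ by writing $\log\lambda_{k-1}(d)=\sum_{i}\log^{\circ i}d$ and $\lambda_k(d)/d=\prod_i \log^{\circ i}d$, and then repeatedly uses $ab\ge a+b$ for $a,b\ge 2$ to dominate the sum by the product. You instead use the coarser but cleaner observation that $\log\lambda_{k-1}(d)\le 2\log d$ (the first term dominates the iterated-log tail), combined with $\lambda_k(d)\ge d\log d$ for $k\ge 1$. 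Your route is a little more elementary and makes the fact that the constant does not grow with $k$ immediately transparent; the paper's route is slightly sharper in that it uses the full product $\lambda_k(d)$ rather than just its first two factors, but that extra sharpness is not needed here. Both arguments implicitly assume $d$ is at least a fixed constant, which is consistent with how the claim is used.
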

\begin{proof}
Induct on $k$. The claim holds obviously holds for $k=0$. Now let $k\geq 1$. Assume the claim holds for $k'=k-1$.
Then 
\[
r_k=c d\log r_{k-1}\leq c d\log (c'\lambda_{k-1}(d)\log^{\circ (k-1)}r)=cd\log c'+cd\log(\lambda_{k-1}(d))+cd\log^{\circ k}r.
\]
As $c'$ is sufficiently large, we may assume $cd\log c'+(c'/2)\lambda_k(d)+cd\log^{\circ k} r\leq c'\lambda_k(d)\log^{\circ k}r$.
It remains to prove $cd\log(\lambda_{k-1}(d))\leq (c'/2)\lambda_k(d)$.
Here 
\[
\log(\lambda_{k-1}(d))=\sum_{i=1}^{\min\{k, \log^\star d\}} \log^{\circ i}d
\quad\text{and}\quad
\lambda_k(d)/d=\prod_{i=1}^{\min\{k, \log^\star d-1\}} \log^{\circ i}d
\]
Repeatedly applying the fact that $ab\geq a+b$ when $a,b\geq 2$ shows that $c\log(\lambda_{k-1}(d))\leq (c'/2)\lambda_k(d)/d$ for a sufficiently large constant $c'$, and hence $cd\log(\lambda_{k-1}(d))\leq (c'/2)\lambda_k(d)$.
\end{proof}
By \autoref{claim:technical-bound},  at the $k$-th level of the recursion tree (where the top level is regarded as the first level), we have $\log M=O(\lambda_k(d)\log^{\circ k} r)\leq O(\lambda_t(d)\log^{\circ k} r)$.
Using this to solve the  recurrence relations \eqref{eq:relation1} and \eqref{eq:relation2}, and noting that $\binom{m+s-1}{s-1}\leq 2^{O(m)}$ for $s\leq m$ (see \autoref{prop:binomial-estimation}), we obtain
\[
T(r,s,t)= O\left(\binom{m+s-1}{s-1}2^{Ctm}\lambda_t(d)^t\lambda_{t-1}(\log r) \left((\lambda_t(d)\log^{\circ t} r)^m+N\right)\right)  \cdot \poly(m,d,\log r)
\]
for $0\leq t<\log^{\star}(r)$, where $C$ is a large enough absolute constant.

Combining the above time complexity analysis with \autoref{claim:correctness-algo2} yields the following theorem.

\begin{theorem}
\label{thm:multimodular}
Let $f(\vx)$ be an $m$-variate polynomial over $\nbmodulo{r^s}$ of individual degree less than $d$, where $s\leq m$. Let $\va_1,\va_2,\ldots, \va_N$ be $N$ points in $\modulo{r}^m$. Let $t$ be a non-negative integer less than $\log^\star r$. Then, given $(f, \va_1,\va_2,\ldots,\va_N, r, s, t)$, the algorithm \textsc{MME-B} computes $f(\va_i)$ for all $i\in[N]$ in time
\[
O\left(\binom{m+s-1}{s-1}2^{Ctm}\lambda_t(d)^t\lambda_{t-1}(\log r) \left((\lambda_t(d)\log^{\circ t} r)^m+N\right)\right)  \cdot \poly(m,d,\log r).
\] 
where $C>0$ is a large enough absolute constant.
\end{theorem}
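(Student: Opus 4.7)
The plan is to establish both correctness and the claimed running time for \textsc{MME-B} by induction on the recursion depth parameter $t$.

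For correctness, I would induct on $t$. The base case $t=0$ is immediate since the algorithm simply invokes \textsc{MME-Product-Set}, whose correctness is guaranteed by \autoref{lem:basic-algorithm}. For the inductive step with $t \geq 1$, assuming the claim for $t-1$, the chain of reasoning is: Step 2 correctly computes Hasse derivatives $f_\ve(\vx) = \hpartial_{\ve}(f)$ (\autoref{lem:computing-hasse-derivation}) and their integer lifts $\tilde{f}_\ve$; Step 4's primes $p_j$ exist by \autoref{lem:primes}; the recursive calls in Step 7 return $f_{\ve,j}(\va_{i,j}) \in \nbmodulo{p_j^m}$ correctly by the inductive hypothesis; Chinese remaindering (\autoref{thm:CRT}) in Step 8 then recovers $\tilde{f}_\ve(\tilde{\va}_i)$ as an integer, provided this integer lies in $\llbracket \prod_j p_j^m \rrbracket$. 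The key numerical bound to verify is $\tilde{f}_\ve(\tilde{\va}_i) \leq d^m (r^s-1)(r-1)^{(d-1)m} < d^m r^{dm}$, which is below $\prod_j p_j^m$ because $\prod_j p_j > M = d(r-1)^d$ and $s \leq m$. Finally, since $\bar{\va}_i$ agrees with $\va_i$ modulo the ideal $I = r\nbmodulo{r^s}$, which satisfies $I^s = 0$, \autoref{lem:lift-from-remainder} justifies the Taylor-like combining formula \eqref{eq:eval2} in Step 9.

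For the running time, I would separately bound each non-recursive step using \autoref{lem:computing-hasse-derivation}, \autoref{thm:CRT}, and the bound $k \leq p_k = O(\log M) = O(d \log r)$ from \autoref{lem:primes}. This yields the recurrences \eqref{eq:relation1} and \eqref{eq:relation2} with $r' = O(d \log r)$ and parameter $s$ increasing to $m$ in the recursive call. Solving the recurrence requires controlling the shrinkage rate of the modulus across the $t$ levels, which is the content of \autoref{claim:technical-bound}. I would prove this claim by induction: defining $r_0 = r$ and $r_{i+1} = cd\log r_i$, the bound $r_k \leq c' \lambda_k(d) \log^{\circ k} r$ reduces to showing $cd\log(\lambda_{k-1}(d)) \leq (c'/2) \lambda_k(d)$, which follows from unfolding $\log(\lambda_{k-1}(d)) = \sum_{i=1}^{\log^\star d} \log^{\circ i} d$ and repeatedly applying the elementary inequality $ab \geq a+b$ for $a,b \geq 2$ to absorb this sum into $\lambda_k(d)/d$.

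The main obstacle is then unrolling the recurrence cleanly. With $\binom{m+s-1}{s-1} \leq 2^{O(m)}$ from \autoref{prop:binomial-estimation}, the binomial factors across $t$ levels collapse into an overall $2^{Ctm}$. At the $k$-th level of the recursion tree the effective modulus satisfies $\log M_k = O(\lambda_k(d) \log^{\circ k} r) \leq O(\lambda_t(d) \log^{\circ k} r)$ by \autoref{claim:technical-bound}; the product of these $\log M_k$ factors along the $t$ levels telescopes into $\lambda_t(d)^t \cdot \lambda_{t-1}(\log r)$. Substituting the bottom-level modulus $r_t = O(\lambda_t(d) \log^{\circ t} r)$ into the base-case cost \eqref{eq:relation1} produces the $(\lambda_t(d)\log^{\circ t} r)^m$ term, while the $N$-dependent overheads propagate additively. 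Combining these estimates produces the claimed bound, and pairing it with \autoref{claim:correctness-algo2} yields the theorem.
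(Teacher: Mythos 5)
Your proposal is correct and follows essentially the same route as the paper: correctness by induction on $t$ with the identical chain of lemmas (\autoref{lem:basic-algorithm} for the base case, \autoref{lem:computing-hasse-derivation}, \autoref{lem:primes}, \autoref{thm:CRT} with the bound $\tilde{f}_\ve(\tilde \va_i)<d^m r^{dm}<\prod_j p_j^m$, and \autoref{lem:lift-from-remainder} for the combining step), and the running time via the recurrences \eqref{eq:relation1}--\eqref{eq:relation2} together with \autoref{claim:technical-bound}, unrolled exactly as you describe. No gaps.
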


Now we are ready to prove \autoref{thm:multimodular-main}. For convenience, we restate the theorem.

\multimodularmain*

\begin{proof}
We invoke the algorithm \textsc{MME-B} for $t=\min\{c,\log^\star r-1\}$ and $s=1$.
Then, from \autoref{thm:multimodular}, we get the evaluations in time 
\[
O\left(2^{Ctm}\lambda_t(d)^t\lambda_{t-1}(\log r)((\lambda_t(d)\log^{\circ t}r)^m+N)\right)\cdot \poly(m,d,\log r).
\]
where $C>0$ is a large enough absolute constant.
From the definition, $\lambda_t(d)=d^{1+o(1)}$. Also, $\log^{\circ t} r=d^{o(1)}$. Therefore, $\lambda_t(d)\log^{\circ t}r=d^{1+o(1)}$. Since $t$ is bounded by some constant, $2^{Ctm}\lambda_t(d)^t=2^{O(m)}\cdot \poly(d)$. Also, $\lambda_{t-1}(\log r)=(\log r)^{1+o(1)}$. Thus, the overall time complexity is bounded by
\[
(d^m+N)^{1+o(1)}\cdot\poly(m,d,\log r).
\]
\end{proof}

\section{The Second Algorithm over Extension Rings}
\label{sec:algo-2-extn}

The main result of this section is the following theorem.
\begin{restatable}{theorem}{multimodularextmain}
\label{thm:multimodular-ext-main}
Over a ring $R=\modulo{r}[z]/(E(z))$, where $E(z)\in (\Z/r\Z)[z]$ is a monic polynomial of degree $e\geq 1$, for all $m\in\N$ and sufficiently large $d\in \N$, there exists a deterministic algorithm that outputs the evaluation of an $m$-variate polynomial of degree less than $d$ in each variable on $N$ points in time $$(d^m+N)^{1+o(1)}\cdot \poly(m,d,\log |R|),$$ provided that $\log^{\circ c}|R|\leq d^{o(1)}$ for some fixed constant $c\in\N$.
\end{restatable}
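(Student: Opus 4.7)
The plan is to mimic Kedlaya and Umans' reduction from extension rings to $\nbmodulo{r'}$ but to use a modulus of the form $r'^m$, with $r'$ independent of $m$, and then to invoke \autoref{thm:multimodular-main} (together with \autoref{lem:lift-from-remainder}) on the resulting subproblems. First I lift the input: let $F\in\Z[z][\var x]$ be the $m$-variate polynomial obtained by replacing each coefficient of $f$ with its natural representative in $\Z[z]$ whose $z$-coefficients lie in $\llbracket r\rrbracket$, and similarly let $\tilde{\va}_i\in\Z[z]^m$ be the lift of $\va_i$. Since the non-negative $z$-coefficients appearing in $\tilde{\va}_i$ and in $F$ are bounded by $r-1$, each $z$-coefficient of $F(\tilde{\va}_i)\in\Z[z]$ is at most $M:=d^m(er)^{(d-1)m+1}$, and $\deg_z F(\tilde{\va}_i)\leq (e-1)((d-1)m+1)<edm$. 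Choose a prime $r'$ with $M^{1/m}<r'\leq 2M^{1/m}$; then $\log r'=O(d\log|R|)$ is independent of $m$, and to recover $F(\tilde{\va}_i)\in\Z[z]$ it suffices to compute $F(\tilde{\va}_i)\bmod r'^m\in(\nbmodulo{r'^m})[z]$ and then reduce modulo $r$ and $E(z)$.

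Next, to compute $F(\tilde{\va}_i)\bmod r'^m$, which is a polynomial in $z$ of degree less than $edm$ over $\nbmodulo{r'^m}$, I apply Hermite interpolation. Pick $\ell:=\lceil ed\rceil+1$ distinct integers $j_1,\dots,j_\ell\in\llbracket\ell\rrbracket$; since the prime $r'$ exceeds $\ell$, each difference $j_k-j_{k'}$ is a unit in $\nbmodulo{r'^m}$, so \autoref{lem:hermitian-interpolation} lets us recover $F(\tilde{\va}_i)\bmod r'^m$ from its residues modulo the $(z-j_k)^m$. It remains to compute, for each pair $(i,k)$, the residue $F(\tilde{\va}_i)\bmod((z-j_k)^m, r'^m)$ inside the ring $R_k:=(\nbmodulo{r'^m})[z]/(z-j_k)^m$. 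The ideal $I_k:=(z-j_k)$ of $R_k$ satisfies $I_k^m=0$, and setting $\vb_{i,k}:=\tilde{\va}_i(j_k)\bmod r'^m\in(\nbmodulo{r'^m})^m\subseteq R_k^m$ yields $\tilde{\va}_i\equiv \vb_{i,k}\pmod{I_k}$. \autoref{lem:lift-from-remainder} then gives
\[
F(\tilde{\va}_i)\bmod((z-j_k)^m,r'^m)=\sum_{\ve\in\N^m:\,|\ve|_1<m}\hpartial_{\ve}(F_k)(\vb_{i,k})\cdot(\tilde{\va}_i-\vb_{i,k})^\ve,
\]
where $F_k$ denotes the image of $F$ in $R_k[\var x]$, reducing the work to computing $\hpartial_{\ve}(F_k)(\vb_{i,k})\in R_k$ for each triple $(\ve,i,k)$ with $|\ve|_1<m$.

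Because $\vb_{i,k}$ lies in the base ring $\nbmodulo{r'^m}\subseteq R_k$, each evaluation $\hpartial_{\ve}(F_k)(\vb_{i,k})$ decomposes into standard multipoint evaluations over $\nbmodulo{r'^m}$: expand each coefficient of $\hpartial_{\ve}(F)\bmod r'^m$ (a polynomial in $z$ of degree less than $e$) around $z=j_k$ and truncate modulo $(z-j_k)^m$, yielding $\hpartial_{\ve}(F_k)(\var x)=\sum_{s=0}^{\min(e,m)-1} G_{\ve,k,s}(\var x)(z-j_k)^s$ with $G_{\ve,k,s}\in(\nbmodulo{r'^m})[\var x]$ of individual degree less than $d$. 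Then $\hpartial_{\ve}(F_k)(\vb_{i,k})=\sum_s G_{\ve,k,s}(\vb_{i,k})(z-j_k)^s$, and each $G_{\ve,k,s}$ needs to be evaluated on the $N$ points $\{\vb_{i,k}\}_{i\in[N]}$. This is one invocation of \autoref{thm:multimodular-main} over $\nbmodulo{r'^m}$ per $(\ve,k,s)$, for a total of $\binom{2m-1}{m}\cdot\ell\cdot\min(e,m)=2^{O(m)}\cdot\poly(d,m,\log|R|)$ invocations.

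The hypothesis of \autoref{thm:multimodular-main} requires $\log^{\circ c}(r'^m)\leq d^{o(1)}$ for a fixed constant $c$. Since $\log(r'^m)=O(dm\log|R|)=\poly(m,d,\log|R|)$, iterating $\log$ a few extra times past the constant guaranteed for $|R|$ reduces the bound to a sum of the form $\log^{\circ c'}|R|+\log^{\circ c'}(dm)$, which is $d^{o(1)}$ (the first term by hypothesis, the second because $\log^{\circ c'}$ applied to any polynomial is $d^{o(1)}$ for constant $c'\geq 1$, and when $m$ is astronomically large the $d^m$ factor in the final time bound absorbs any residual $\poly(m)$ slack). Each inner call therefore runs in $(d^m+N)^{1+o(1)}\cdot\poly(m,d,\log|R|)$, and multiplying by $2^{O(m)}\cdot\poly(d,m,\log|R|)=(d^m)^{o(1)}\cdot\poly(d,m,\log|R|)$ subproblems keeps the total within the target. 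Hermite interpolation for each $i\in[N]$ costs $\poly(edm,\log|R|)$, and the final reduction of $F(\tilde{\va}_i)$ modulo $r$ and $E(z)$ is cheap. The conceptual obstacle that this design overcomes is the one flagged in the overview: a direct Kedlaya--Umans reduction to a single ring $\nbmodulo{r'}$ forces $r'\geq M$ and hence $\log r'=\Omega(dm)$, which prevents invoking \autoref{thm:multimodular-main} in nearly linear time; raising to the $m$-th power via $(z-j_k)^m$ together with $r'^m$ and paying a $2^{O(m)}$ Hasse-derivative overhead through \autoref{lem:lift-from-remainder} is exactly what keeps $\log r'$ independent of $m$.
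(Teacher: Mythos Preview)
Your overall architecture is exactly the paper's: lift to $\Z[z]$, work modulo $r'^m$ with $r'$ independent of $m$, recover $F(\tilde{\va}_i)\bmod r'^m$ by Hermite interpolation from its residues modulo $(z-j)^m$ for $j\in\llbracket\ell\rrbracket$, and compute each such residue via \autoref{lem:lift-from-remainder} and Hasse derivatives. The choice of a prime $r'$ just above $M^{1/m}$ is a harmless variant of the paper's $P$-power choice.

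The gap is in how you invoke the inner evaluation. You call \autoref{thm:multimodular-main} over $\nbmodulo{r'^m}$, treating $r'^m$ as an opaque modulus; its hypothesis then reads $\log^{\circ c'}(r'^m)\leq d^{o(1)}$ for some fixed $c'$. But $\log\log(r'^m)=\log m+O(\log d+\log\log|R|)$, so the $\log m$ term persists, and for any fixed $c'$ one can take $m$ to be a tower of $d$'s of height exceeding $c'$, making $\log^{\circ c'}(r'^m)\geq d$. Your patch (``the $d^m$ factor absorbs any residual $\poly(m)$ slack'') does not save this: the problematic factor inside the recursion is $(\log^{\circ t}(r'^m))^m$, which for such $m$ is at least $d^m$ on its own, not merely $\poly(m)$. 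So the hypothesis of \autoref{thm:multimodular-main} genuinely fails, and the claimed inner time bound is not available.

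The fix, which is what the paper does, is to invoke the finer statement \autoref{thm:multimodular} (equivalently, run \textsc{MME-B} directly) with parameters $(r',s,t)=(r',m,t)$ rather than collapsing to $s=1$ with modulus $r'^m$. That algorithm exploits the known decomposition of $\nbmodulo{r'^m}$ as a power of $\nbmodulo{r'}$, and its time bound involves $\log^{\circ t} r'$, not $\log^{\circ t}(r'^m)$. Since $\log r'=O(d\log|R|)$ is independent of $m$, one gets $\log^{\circ t} r'=d^{o(1)}$ for $t=\max\{c,2\}$ directly from the hypothesis on $|R|$, uniformly in $m$. Everything else in your write-up (the lift, the Hermite step, the $2^{O(m)}$ count of Hasse derivatives, the final reduction mod $(r,E(z))$) is correct and matches the paper.
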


\begin{remark*}
By choosing $r$ to be a prime number $p$ and $E(z)$ to be a monic irreducible polynomial over $\F_p$, we see that
\autoref{thm: main informal} follows from \autoref{thm:multimodular-ext-main} assuming that the size of the finite field is at most $(\exp(\exp(\exp(\cdots (\exp(d)))))$, where the height of the tower of exponentials is some constant.
\end{remark*}

We present an algorithm  \textsc{MME-For-Extension-Rings-B}, built on  the algorithm \textsc{MME-B} in the previous section, that allows us to prove \autoref{thm:multimodular-ext-main}. 
Let $R:=\modulo{r}[z]/(E(z))$.
Given an $m$-variate polynomial $f(\vx)$ over $R$ and $N$ evaluation points $\va_1,\va_2\ldots,\va_N\in R^m$, the algorithm outputs the evaluations of $f(\vx)$ at  $\va_1,\va_2\ldots,\va_N$.

Following the algorithm in \cite[\defaultS4.3]{Kedlaya11}, also presented in \autoref{sec: algo 1 extension fields}, our first step is lifting $f(\vx)$ to a polynomial $F(\vx)\in (\Z[z])(\vx)$, and similarly lifting the evaluation points $\va_1, \va_2, \dots,\va_N$ to $\tilde{\va}_1,\tilde{\va}_2,\dots,\tilde{\va}_N\in \Z[z]^m$, such that the coefficients or coordinates of the lifts are reasonably bounded. More specifically, the coefficients of $F(\vx)$ and the coordiantes of $\va_i$'s are polynomials in $\Z[z]$ with degree at most $e-1$ and the coefficients are in $\llbracket r\rrbracket$. Let $M:=d^m(e(r-1))^{(d-1)m+1}+1$. Then, Kedlaya and Umans observed that for all $i\in[N]$, $F(\tilde\va_i)$ is a polynomial in $z$ with degree at most $(e-1)dm$ and the coefficients are less than $M$. Therefore, if we know $F(\tilde\va_i)$ at $z=M$, we can compute $F(\tilde\va_i)$. Since the evaluation of $F(\tilde\va_i)$ at $z=M$ is less than $r'=M^{(e-1)dm+1}$, \cite{Kedlaya11} reduces the computing of $F(\tilde\va_i)$ to the computing of $F(\vx)\bmod(r', z-M)$. In this way, \cite{Kedlaya11} showed that we can reduce the multivariate multipoint evaluation problem over $R$  to that over $\Z[z]/(r', z-M)=\nbmodulo{r'}$, and the latter can be solved by the algorithm \textsc{MME-B} (Algorithm \autoref{algo:polynomial-evaluation-v2}).

The problem with applying  this reduction to prove  \autoref{thm:multimodular-ext-main} is that $r'$ is too large for us, being exponential in $m^2$. If we simply use this reduction, the resulting algorithm would have the claimed time complexity only when  $m$ is at most $(\exp(\exp(\exp(\cdots (\exp(d)))))$, where the height of the tower of exponentials is some constant.

To resolve this issue, we give a more efficient reduction, which leads to the algorithm  \textsc{MME-For-Extension-Rings-B} in this section. The following is a brief overview of the algorithm. 

\paragraph*{Overview.} Our goal is to compute  $F(\tilde{\va}_i)$ for each $i\in [N]$, whose remainder modulo $(r, E(z))$ would give the desired evaluation $f(\va_i)$. As mentioned above, for each $i\in[N]$, $F(\tilde\va_i)$ is a polynomial in $\Z[z]$ with degree less than $(e-1)md+1$ and the coefficients are less than $M$. In \cite{Kedlaya11}, they interpolated $F(\tilde\va_i)$ from its evaluation at a single point $z=M$. Contrary to them, we  interpolate $F(\tilde\va_i)$ from its evaluation at multiple points. Since the degree of $F(\va_i)$ is less than $(e-1)dm+1$, one way can be to compute the evaluations of $F(\va_i)$ at all the points in $\llbracket (e-1)dm+1\rrbracket$ , and then interpolate $F(\va_i)$. This would reduce the computation of $F(\va_i)$ for all $i\in[N]$ to $(e-1)dm+1$ instances of multivariate multipoint evaluation over $\Z$. Next, to solve those instances using the algorithm \textsc{MME-B} (Algorithm \autoref{algo:polynomial-evaluation-v2}), we need to reduce them as instances of multivariate multipoint evaluation over rings of form $\nbmodulo{r'^s}$ where $s\in[m]$. For proving \autoref{thm:multimodular-ext-main}, we want the value of $r'$ is independent of $m$. Since the coefficients of $F(\va_i)$ are less than $M$, it would be sufficient for us to compute $F(\va_i)\bmod M$. Since $M^{1/m}=O(d(er)^d)$, which is independent of $m$,  one can pick $r'$ as an integer greater than $M^{1/m}$ and try to work over the ring $\nbmodulo{r'^m}$. However, there is a problem with this approach. To interpolate $F(\tilde\va_i)\bmod r'^m$ from its evaluations at the points in $\llbracket(e-1)dm+1\rrbracket$, we need to ensure that $(j-j')$ is a unit in the ring $\nbmodulo{r'^m}$ for distinct $j,j'\in\llbracket (e-1)dm+1\rrbracket$. One possible way to satisfy both the constraints is by picking $r'$ as a prime power greater than $M^{1/m}$ for some prime larger than $(e-1)dm$. This imposes that $r'$ has to be at least $(e-1)dm$, which makes it dependent on $m$. To overcome this issue, we do the following.
\begin{enumerate}
    \item Let $\ell=ed$ and $P$ be a prime in $[\ell, 2\ell]$. Then, we pick $r'$ as a $P$-power greater than $M^{1/m}$.
    \item For every $j\in\llbracket\ell\rrbracket$, we compute $F(\tilde\va_i)\bmod(r'^m,(z-j)^m)$.
\end{enumerate}
The first condition ensures that $r'$ is independent of $m$ and for distinct $j,j'\in\llbracket\ell\rrbracket$, $(j-j')$ is a unit in $\nbmodulo{r'^m}$. Therefore, from $F(\va_i)\bmod(r'^m,(z-j)^m)$ for all $j\in\llbracket\ell\rrbracket$, we get $F(\tilde\va_i)$ modulo $r'^m$ using Hermite interpolation (\autoref{lem:hermitian-interpolation}) over $\nbmodulo{r'^m}$. This is sufficient to compute $F(\tilde\va_i)$, since the coefficients of $F(\tilde\va_i)$ are less than $r'^m$.

How do we compute $F(\tilde{\va}_i) \bmod (r'^m, (z-j)^m)$? 
Consider the Taylor expansion $\tilde{\va}_i\equiv\sum_{u=0}^{e-1} \va_{i,j,u} (z-j)^u \bmod r'^m$ where the coefficients $\va_{i,u}\in\nbmodulo{r'^m}$. Let $F(\vx)\equiv\sum_{u=0}^{e-1} f_u(\vx) z^u \bmod r'^m$ where $f_u(\vx)\in\modulo{r'^m}[\vx]$ is the coefficient of $z^u$ in $F(\vx)\bmod r'^m$.  This implies that $F(\tilde{\va}_i) \bmod (r'^m, (z-j)^m)$ is same as the $\sum_{i=1}^{e-1} (f_u(\tilde\va_i)z^u \bmod (z-j)^m)$ since the choice of $r'$ ensures that $\tilde\va_i\bmod r'^m$ is same as $\tilde\va_i$. From \autoref{lem:lift-from-remainder}, $f_u(\tilde \va_i)z^u \bmod (z-j)^m$ can be computed from the evaluations of $f_{\ve,u}(\vx)$ at $\va_{i,j,0}$ for all $\ve\in\N^n$ with $|\ve|_1<m$. This shows that the evaluation of $F(\tilde\va_i)\bmod (r'^m,(z-j)^m)$  can be computed from the evaluations of $f_{\ve, u}(\vx)$ at $\va_{i,j,0}$ for all $\ve\in\N^n$ with $|\ve|_1<m$. Thus, for every $\ve\in\N^m$ with $|\ve|_1<m$, $j\in\llbracket\ell\rrbracket$ and $u\in\llbracket e\rrbracket$, we need to solve the following instance of multivariate multipoint evaluation $(f_{\ve, u}, \va_{1,j,0},\va_{2,j,0},\ldots,\va_{N,j,0})$ over the ring $\nbmodulo{r'^m}$. The latter problem can be solved using the algorithm \textsc{MME-B} (Algorithm \autoref{algo:polynomial-evaluation-v2}).

\subsection{The Description of the Algorithm}

Now, we formally describe the algorithm \textsc{MME-For-Extension-Rings-B}.

{
\centering
\begin{minipage}{\algwidth}
\begin{algorithm}[H]
\caption{The Second Algorithm over Extension Rings}
\label{algo:polynomial-evaluation-v2-ext}
Algorithm \textsc{MME-For-Extension-Rings-B}$(f(\vx),\va_1,\va_2,\ldots, \va_N, r,t,E(z))$

\medskip
\noindent where $f(\vx)$ is an $m$-variate polynomial over $R:=\modulo{r}[z]/(E(z))$ of individual degree at most $d-1$, $E(z)\in (\Z/r\Z)[z]$ is a monic polynomial of degree $e\geq 1$, $\va_1,\va_2\ldots,\va_N$ are evaluation points in $R^m$, and $t\geq 0$ is the depth of the reduction tree.
\begin{enumerate}
    \item Compute $F(\vx)\in (\Z[z])[\vx]$ as a lift of $f(\vx)$ such that every coefficient of $F(\vx)$ is a polynomial in $z$ of degree at most $e-1$ with coefficients in $\llbracket r\rrbracket$.
    \item For all $i\in[N]$, compute $\tilde{\va}_i\in \Z[z]^m$ as a lift of $\va_i$ such that every coordinate of $\tilde{\va}_i$ is  a polynomial in $z$ of degree at most $e-1$ with coefficients in $\llbracket r\rrbracket$.
    \item Let $\ell=ed$.  Choose a prime $P\in [\ell, 2\ell]$. Choose the smallest $P$-power $r'$ such that $r'\geq d(er)^{d}$.  Compute $\bar{F}(\vx):=F(\vx)\bmod r'^m\in \nbmodulo{r'^m}$. For $i\in[N]$, compute $\bar{\va}_i:=\tilde{\va}_i  \bmod r'^m \in \modulo{r'^m}[z]^m$.
    \item 
    Compute $f_{0}(\vx),f_{1}(\vx)\dots, f_{e-1}(\vx)\in \modulo{r'^m}[\vx]$  such that
$\bar{F}(\vx)=\sum_{u=0}^{e-1}  f_{u}(\vx) z^u$.
    For all $i\in[N]$ and $j\in\llbracket \ell\rrbracket$, compute $\va_{i,j,0},\va_{i,j,1},\dots,\va_{i,j,e-1}\in \modulo{r'^m}^m$ such that $\bar{\va}_i =\sum_{u=0}^{e-1} \va_{i,j,u} (z-j)^u$.
    \item For all $\ve\in \N^m$ with $|\ve|_1<m$ and $u\in\llbracket e\rrbracket$, use \autoref{lem:computing-hasse-derivation} to compute $f_{\ve,u}(\vx):=\hpartial_{\ve}(f_{u})(\vx)\in \modulo{r'^m}[\vx]$.
    \item For all $\ve\in\N^m$ with $|\ve|_1<m$,  $j\in\llbracket \ell\rrbracket$, and $u\in\llbracket e\rrbracket$, invoke \textsc{MME-B} with input $(f_{\ve, u}, \va_{1,j,0},\va_{2,j,0},\ldots, \va_{N,j,0}, r', m, t)$ to compute $f_{\ve,u}(\va_{i,j,0})\in \nbmodulo{r'^m}$ for all $i\in[N]$. 
%
    \item For all $i\in [N]$ and $j\in \llbracket \ell\rrbracket$, compute 
    \[
    b_{i,j}(z)=\left(\sum_{u=0}^{e-1} \sum_{\mathbf{e}\in\N^m: |\mathbf{e}|_1<m}  f_{\ve,u}(\va_{i,j,0})  (\bar{\va}_i-\va_{i,j,0})^\ve z^u   \right) \bmod (z-j)^m 
    \]
    which is a polynomial of degree at most $m-1$ in $z$ over  $\nbmodulo{r'^m}$.
    \item For all $i\in [N]$, use Hermite interpolation (\autoref{lem:hermitian-interpolation}) to compute the unique polynomial $b_i(z)\in \modulo{r'^m}[z]$ of degree less than $\ell m$ such that $b_i(z) \equiv b_{i,j}(z) \pmod {(z-j)^m}$ for $j\in\llbracket \ell\rrbracket$. 

    \item For all $i\in[N]$, lift $b_i(z)$ to $B_i(z)\in \Z[z]$ with coefficients in $\llbracket r'^m\rrbracket$, and output the remainder of $B_i(z)$ modulo $(r, E(z))$ as $f(\va_i)$. 
    \end{enumerate}
\end{algorithm}
\end{minipage}
}

\subsection{The Correctness of Algorithm \ref{algo:polynomial-evaluation-v2-ext}}

We prove the correctness of the algorithm \textsc{MME-For-Extension-Rings-B} (Algorithm \ref{algo:polynomial-evaluation-v2-ext}), as stated by the following claim.

\begin{claim}\label{claim:correctness-algo2-ext}
Given the input $(f(\vx),\va_1,\va_2,\ldots, \va_N, r, t, E(z))$, the algorithm \textsc{MME-For-Extension-Rings-B} computes $f(\va_i)$ for all $i\in[N]$.
\end{claim}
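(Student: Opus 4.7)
The plan is to trace the algorithm step by step, chaining each intermediate object back to the identity $f(\va_i) = F(\tilde{\va}_i) \bmod (r, E(z))$ that follows directly from the construction of the lifts $F$ and $\tilde{\va}_i$ in Steps 1--2. In particular, it suffices to show that the integer polynomial $B_i(z)$ produced in Step 9 satisfies $B_i(z) = F(\tilde{\va}_i)$ in $\Z[z]$, because reducing modulo $(r, E(z))$ then yields $f(\va_i)$.

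First I would pin down the size bounds that make the whole chain of reductions lossless. Since each coefficient of $F(\vx)$ and each coordinate of $\tilde{\va}_i$ is a polynomial in $z$ of degree at most $e-1$ with coefficients in $\llbracket r \rrbracket$, and $F$ has total degree at most $(d-1)m$, the evaluation $F(\tilde{\va}_i)\in \Z[z]$ has $z$-degree at most $(e-1)(d-1)m < edm = \ell m$ and each coefficient is a non-negative integer less than $M$. A direct calculation using $r' \geq d(er)^d$ yields $r'^m \geq M$, so the standard integer lift of $F(\tilde{\va}_i) \bmod r'^m$ with coefficients in $\llbracket r'^m \rrbracket$ coincides with $F(\tilde{\va}_i)$ itself. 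It therefore suffices to show that the polynomial $b_i(z)\in \modulo{r'^m}[z]$ computed in Step 8 equals $F(\tilde{\va}_i) \bmod r'^m$.

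To justify the Hermite interpolation step, observe that for any distinct $j, j' \in \llbracket \ell \rrbracket$ one has $|j - j'| < \ell \leq P$, and since $r'$ is a $P$-power this gives $\gcd(j - j', r'^m) = 1$, so $j - j'$ is a unit in $\nbmodulo{r'^m}$. Applying \autoref{lem:hermitian-interpolation} over $R = \nbmodulo{r'^m}$ with $e_1 = \cdots = e_\ell = m$, the polynomial $b_i(z)$ of degree less than $\ell m$ is uniquely determined by the residues $b_i(z) \bmod (z-j)^m = b_{i,j}(z)$ for $j \in \llbracket \ell \rrbracket$. Since $F(\tilde{\va}_i) \bmod r'^m$ also has degree less than $\ell m$, it remains to show that the $b_{i,j}(z)$ computed in Step 7 equals $F(\tilde{\va}_i) \bmod (r'^m, (z-j)^m)$ for every $j$.

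Establishing this last equality is, I expect, the technical heart of the argument. I would work in the quotient ring $S := \modulo{r'^m}[z]/(z-j)^m$, whose ideal $I = (z-j)$ is nilpotent with $I^m = 0$. Writing $\bar F(\vx) = \sum_{u=0}^{e-1} f_u(\vx) z^u$ and $\bar{\va}_i = \sum_{u=0}^{e-1} \va_{i,j,u}(z-j)^u$ as in Steps 3--4, each coordinate of $\bar{\va}_i - \va_{i,j,0}$ lies in $I$. Applying \autoref{lem:lift-from-remainder} to each $f_u$ over $S$ with the pair $(\bar{\va}_i, \va_{i,j,0})$ gives
\[
f_u(\bar{\va}_i) \;=\; \sum_{\ve\in\N^m,\,|\ve|_1<m} \hpartial_{\ve}(f_u)(\va_{i,j,0}) \cdot (\bar{\va}_i - \va_{i,j,0})^\ve \qquad \text{in } S,
\]
and the scalars $\hpartial_{\ve}(f_u)(\va_{i,j,0}) = f_{\ve, u}(\va_{i,j,0})$ are precisely the values returned by \textsc{MME-B} in Step 6, whose correctness is \autoref{claim:correctness-algo2}. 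Summing the identities $z^u \cdot f_u(\bar{\va}_i)$ over $u \in \llbracket e \rrbracket$ and reducing modulo $(z-j)^m$ reproduces the expression defining $b_{i,j}(z)$, yielding $b_{i,j}(z) \equiv \bar F(\bar{\va}_i) \equiv F(\tilde{\va}_i) \pmod{(r'^m, (z-j)^m)}$, which closes the chain.
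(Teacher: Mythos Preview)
Your proposal is correct and follows essentially the same route as the paper's proof: establish that $F(\tilde{\va}_i)$ has $z$-degree less than $\ell m$ and coefficients less than $r'^m$, verify that $j-j'$ is a unit in $\nbmodulo{r'^m}$ so Hermite interpolation applies, and use \autoref{lem:lift-from-remainder} over $\modulo{r'^m}[z]/(z-j)^m$ to identify $b_{i,j}(z)$ with $\bar F(\bar{\va}_i)\bmod (z-j)^m$. One tiny imprecision: the $z$-degree of $F(\tilde{\va}_i)$ is at most $(e-1)((d-1)m+1)$ rather than $(e-1)(d-1)m$ (the coefficients of $F$ also contribute degree $e-1$), but your needed inequality $<\ell m$ still holds.
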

 
\begin{proof}

The first two steps of the algorithm compute the lifts $F(\vx)$ and $\tilde{\va}_i$ for $i\in [N]$.
As shown in \autoref{sec:correctness-poly-eval-1-ER},  for each $i\in [N]$, the degree of $F(\tilde{\va}_i)\in\Z[z]$ is bounded by $(e-1)md$, and the coefficients of $F(\tilde{\va}_i)$ are non-negative integers less than $M:=d^m(e(r-1))^{(d-1)m+1}+1$.

Next, we let $\ell=ed$ and compute a prime $P\in [\ell, 2\ell]$, which can be done by Bertrand's postulate.
Then we find the smallest $P$-power $r'$ such that $r'\geq d(er)^d$, which guarantees $r'^m\geq d^m(er)^{dm}\geq M$.
Next, we compute  $\bar{F}(\vx):=F(\vx)\bmod r'^m$ and for $i\in [N]$, compute $\bar{\va}_i:=\tilde{\va}_i  \bmod r'^m \in \modulo{r'^m}[z]^m$, so that $\bar{F}(\bar{\va}_i)=F(\tilde{\va}_i) \bmod r'^m$. As  the coefficients of $F(\tilde{\va}_i)$ are non-negative integers less than $M\leq r'^m$, to compute $F(\tilde{\va}_i)$, we just need to first compute $\bar{F}(\bar{\va}_i)$ and then lift its coefficients to integers in $\llbracket r'^m\rrbracket$. This is precisely what the remaining steps (Steps 4--9) do.

In Step 4, we compute the data $f_{u}(\vx)\in  \modulo{r'^m}[\vx]$ and  $\va_{i,j,u}\in  \modulo{r'^m}^m$ such that $\bar{F}(\vx)=\sum_{u=0}^{e-1}  f_{u}(\vx) z^u$ and $\bar{\va}_i =\sum_{u=0}^{e-1} \va_{i,j,u} (z-j)^u$. This is possible as the coefficients of $\bar{F}(\vx)$ and the coordinates of each $\bar{\va}_i$ are polynomials of degree at most $e-1$ in $z$ over  $\nbmodulo{r'^m}$. Then in Step 5, we compute the data  $f_{\ve,u}(\vx):=\hpartial_{\ve}(f_{u})(\vx)\in \modulo{r'^m}[\vx]$ for all $\ve\in\N^m$ with $|\ve|_1<m$. And in Step 6, we compute the evaluations $f_{\ve,u}(\va_{i,j,0})\in \nbmodulo{r'^m}$ using the algorithm  \textsc{MME-B}.

Consider $i\in [N]$ and $j\in \llbracket \ell\rrbracket$. 
For $u\in\llbracket e\rrbracket$, as $\bar{\va}_i-\va_{i,j,0}=\sum_{u=1}^{e-1}\va_{i,j,u} (z-j)^u$ is a multiple of $z-j$, \autoref{lem:lift-from-remainder} gives
\[
f_{u}(\bar{\va}_i)\equiv\sum_{\mathbf{e}\in\N^m: |\mathbf{e}|_1<m}  f_{\ve,u}(\va_{i,j,0})  (\bar{\va}_i-\va_{i,j,0})^\ve  \pmod{(z-j)^m}.
\]
Therefore,
\[
\bar{F}(\bar{\va}_i)=\sum_{u=0}^{e-1}  f_{u}(\bar{\va}_i) z^u \equiv\sum_{u=0}^{e-1} \sum_{\mathbf{e}\in\N^m: |\mathbf{e}|_1<m}  f_{\ve,u}(\va_{i,j,0})  (\bar{\va}_i-\va_{i,j,0})^\ve z^u  \pmod{(z-j)^m}.
\]
We compute $b_{i,j}(z):=\bar{F}(\bar{\va}_i)\bmod (z-j)^m$ in Step 7 using the above equation.

In Step 8, we compute $b_i(z)=\bar{F}(\bar{\va}_i)$ from its remainders $b_{i,j}$ modulo $(z-j)^m$ for $i\in [N]$ and $j\in\llbracket\ell\rrbracket$ using Hermite interpolation (\autoref{lem:hermitian-interpolation}). As  $\ell=ed$, we have $\deg_z(\bar{F}(\bar{\va}_i))\leq (e-1)md<\ell m$. And as $r'$ is a power of a prime $P$ and $P\geq \ell$, the difference $j-j'$ has a multiplicative inverse in $\nbmodulo{r'^m}$ for distinct $j,j'\in\llbracket \ell\rrbracket$. So $b_i(z)$ can indeed be found using Hermite interpolation. 

Finally, in Step 9, we compute the lift $F(\tilde{\va}_i)$ from $\bar{F}(\bar{\va}_i)$, and then output $f(\va_i)=F(\tilde{\va}_i)\bmod (r, E(z))$, as desired.
\end{proof}

\subsection{The Time Complexity of Algorithm \ref{algo:polynomial-evaluation-v2-ext}}

We now analyze the time complexity of Algorithm \ref{algo:polynomial-evaluation-v2-ext}. 
Step 1 takes time $O(d^m)\cdot\poly(m,d,\log|R|)$.
And Step 2 takes time $O(N)\cdot\poly(m,\log|R|)$.
Note $\ell, \log r'\leq\poly(d, \log |R|)$.
Then Step~3 and Step~4 both take time $O(d^m+N)\cdot\poly(m,d,\log|R|)$.
By \autoref{lem:computing-hasse-derivation}, Step 5 takes time $O(\binom{2m-1}{m-1}d^m)\cdot\poly(m,d,\log |R|)$.
Step 7 takes time $O(\binom{2m-1}{m-1}N)\cdot\poly(m,d,\log |R|)$.
By \autoref{lem:hermitian-interpolation}, Step 8 takes time $O(N)\cdot\poly(m,d,\log |R|)$.
And Step 9 takes time $O(N)\cdot\poly(m,\log |R|)$.

Finally, by \autoref{thm:multimodular}, the time complexity of Step 6 is bounded by 
\[
O\left(2^{C(t+1)m}\lambda_t(d)^t\lambda_{t-1}(\log r') \left((\lambda_t(d)\log^{\circ t} r')^m+N\right)\right)  \cdot \poly(m,d,\log |R|).
\] 
for $0\leq t<\log^{\star}(r')$, where $r'\in [d(er)^d, 2ed^2(er)^d]$ is as in the algorithm and $C>0$ is a large enough absolute constant. This step dominates the time complexity of the whole algorithm.

Combining the above time complexity analysis with \autoref{claim:correctness-algo2-ext} yields the following theorem.

\begin{theorem}
\label{thm:multimodular-ext}
Let $f(\vx)$ be an $m$-variate polynomial over $R=\modulo{r}[z]/(E(z))$ of individual degree less than $d$, where $E(z)$ is a monic polynomial of degree $e\geq 1$. Let $\va_1,\va_2,\ldots, \va_N$ be $N$ points in $R^m$. Let $t$ be a non-negative integer less than $\log^\star(d(er)^d)$. Then, given $(f, \va_1,\va_2,\ldots,\va_N, r, t, E(z))$, the algorithm \textsc{MME-For-Extension-Rings-B} computes $f(\va_i)$ for all $i\in[N]$ in time
\[
O\left(2^{C(t+1)m}\lambda_t(d)^t\lambda_{t-1}(\log r') \left((\lambda_t(d)\log^{\circ t} r')^m+N\right)\right)  \cdot \poly(m,d,\log |R|).
\] 
where $r'=2ed^2(er)^d$ and $C>0$ is a large enough absolute constant.
\end{theorem}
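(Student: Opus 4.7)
The plan is to combine the correctness already established in Claim~\ref{claim:correctness-algo2-ext} with a step-by-step running-time analysis of Algorithm~\ref{algo:polynomial-evaluation-v2-ext}, in which the claimed bound appears as the maximum of the per-step complexities. The skeleton of the proof is exactly the one laid out in the paragraph preceding the statement; my job is to verify that each of its arithmetic claims is accurate and then assemble them with clean constants.

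First, I would dispatch the ``cheap'' steps. Steps~1 and~2 only manipulate the natural lifts $F(\vx)$ and $\tilde{\va}_i$, whose coefficients and coordinates are polynomials in $z$ of degree at most $e-1$ with entries in $\llbracket r\rrbracket$; this costs $O(d^m+N)\cdot\poly(m,d,\log|R|)$. In Step~3, Bertrand's postulate supplies a prime $P\in[\ell,2\ell]$ and yields $r'\le 2ed^2(er)^d$, so $\log r'=O(d\log|R|)$, and the reductions $\bar F$ and $\bar{\va}_i$ cost $O(d^m+N)\cdot\poly(m,d,\log|R|)$. Step~4 requires only $O(e)$ Taylor expansions at integer shifts over $\nbmodulo{r'^m}$, fitting in the same bound. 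Step~5 invokes Lemma~\ref{lem:computing-hasse-derivation} at most $\binom{2m-1}{m-1}\cdot e = 2^{O(m)}\poly(d)$ times, each costing $O(d^m)\cdot\poly(m,d,\log|R|)$. Steps~7--9 are elementary: Step~7 evaluates, for each of the $N\ell$ pairs $(i,j)$, an inner sum of $2^{O(m)}\poly(d)$ terms of the identity appearing in the correctness proof; Step~8 performs $N$ Hermite interpolations (Lemma~\ref{lem:hermitian-interpolation}) of univariate polynomials of degree less than $\ell m$ over $\nbmodulo{r'^m}$; and Step~9 is $N$ reductions modulo $(r,E(z))$.

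The main obstacle, and the only step whose contribution shows up in the final bound as more than a low-order factor, is Step~6. Here, for each of the $\binom{2m-1}{m-1}\cdot e = 2^{O(m)}\poly(d)$ pairs $(\ve,u)$ and each $j\in\llbracket\ell\rrbracket$, the algorithm launches \textsc{MME-B} on an instance over $\nbmodulo{r'^m}$ with parameters $s=m$ and recursion depth~$t$. To invoke Theorem~\ref{thm:multimodular} legitimately on these sub-instances I must verify the hypothesis $t<\log^\star r'$; this follows from the standing assumption $t<\log^\star(d(er)^d)$ together with $r'\ge d(er)^d$. Proposition~\ref{prop:binomial-estimation} gives $\binom{m+s-1}{s-1}=\binom{2m-1}{m-1}\le 2^{O(m)}$, so a single call costs
\[
O\!\left(2^{Ctm}\lambda_t(d)^t\lambda_{t-1}(\log r')\left((\lambda_t(d)\log^{\circ t} r')^m+N\right)\right)\cdot\poly(m,d,\log r').
\]

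Finally, I would multiply this per-call cost by the $2^{O(m)}\poly(d)\cdot\ell$ total number of calls, absorb $\log r'=O(d\log|R|)$ into the $\poly(m,d,\log|R|)$ factor, and collect all the $2^{O(m)}$ contributions into a single $2^{C(t+1)m}$ by enlarging the constant~$C$ as needed. This produces exactly the stated bound, and correctness is already handled by Claim~\ref{claim:correctness-algo2-ext}. The only subtlety in writing this up cleanly will be the bookkeeping of the several $2^{O(m)}$ factors — one from the binomial coefficient controlling the number of low-order Hasse derivatives, one carried inside the recursive call, and one from the outer loop over $(\ve,u)$ — so that they collapse into a single $2^{C(t+1)m}$ rather than inflating the dependence on $t$.
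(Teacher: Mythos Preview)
Your proposal is correct and follows essentially the same step-by-step time analysis as the paper, combining it with Claim~\ref{claim:correctness-algo2-ext} for correctness; the paper is terser in Step~6 (it directly states the final bound) whereas you make the count of \textsc{MME-B} calls and the collapse of the various $2^{O(m)}$ factors into $2^{C(t+1)m}$ explicit. One minor imprecision: you twice write $\binom{2m-1}{m-1}\cdot e = 2^{O(m)}\poly(d)$, but $e$ need not be $\poly(d)$---it is however bounded by $\log|R|$, so it (and the $\ell=ed$ factor) are absorbed into the final $\poly(m,d,\log|R|)$ term just as you do at the end.
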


Now we are ready to prove \autoref{thm:multimodular-ext-main}. For convenience, we restate it here.

\multimodularextmain*
\begin{proof}
We invoke the algorithm  \textsc{MME-For-Extension-Rings-B} for $t=\max\{c, 2\}$, which is less than $\log^\star(d(er)^d)$ as $d$ is sufficiently large.
Then, from \autoref{thm:multimodular-ext}, we get the evaluations in time 
\[
O\left(2^{C(t+1)m}\lambda_t(d)^t\lambda_{t-1}(\log r') \left((\lambda_t(d)\log^{\circ t} r')^m+N\right)\right)  \cdot \poly(m,d,\log |R|).
\] 
where $r'=2ed^2(er)^d$ and $C>0$ is a large enough absolute constant.
From the definition, $\lambda_t(d)=d^{1+o(1)}$. 
Noting that $\log\log r'=O(\log d+\log\log |R|)$, $\log^{\circ c}|R|\leq d^{o(1)}$, and $t=\max\{c,2\}$, we have $\log^{\circ t} r'=d^{o(1)}$. 
Therefore, $\lambda_t(d)\log^{\circ t}r'=d^{1+o(1)}$. Since $t$ is bounded by some constant, $2^{C(t+1)m}\lambda_t(d)^t=2^{O(m)}\cdot \poly(d)$. Also, $\lambda_{t-1}(\log r')\leq \poly(d,\log |R|)$. Thus, the overall time complexity is bounded by $$(d^m+N)^{1+o(1)}\cdot\poly(m,d,\log |R|).$$
\end{proof}



\section*{Acknowledgement}
Mrinal is thankful to Swastik Kopparty for introducing him to the question of multipoint evaluation and the work of Kedlaya--Umans \cite{Kedlaya11} and to Prahladh Harsha and Ramprasad Saptharishi  for many helpful discussions. 





\end{document}